\newcommand*\Let[2]{\State #1 $\gets$ #2}
\algnewcommand\algorithmicinitialize{\textbf{Initialize:}}
\algnewcommand\Initialize{\item[\algorithmicinitialize]}
\algnewcommand\algorithmicbreak{\textbf{break}}
\algnewcommand\Break{\State \algorithmicbreak}
\newcommand{\blind}{0}
\DeclareMathOperator{\Stiefel}{St}
\DeclareMathOperator{\Oblique}{Ob}
\DeclareMathOperator*{\argmin}{arg\, min}
\DeclareMathOperator{\Diag}{diag}
\DeclareMathOperator{\Sign}{sgn}
\DeclareMathOperator{\Rank}{rank}
\DeclareMathOperator{\Variance}{Var}
\newcommand*{\R}{\mathbb{R}}
\newcommand*{\vect}[1]{\bm{#1}}
\newcommand*{\mat}[1]{\bm{\mathrm{#1}}}
\newcommand*{\EV}{\mathbb{E}}
\DeclarePairedDelimiter\abs{\lvert}{\rvert}
\DeclarePairedDelimiter\norm{\lVert}{\rVert}
\declaretheorem{lemma, theorem, corollary}
\declaretheorem{definition, assumption}[%
    style=definition%
]
\declaretheorem{remark, example}[%
    style=remark,
    qed={\hbox{\(\circ\)}},%
]
\definecolor{cred}{HTML}{E69F00}
\definecolor{cblue}{HTML}{56B4E9}
\definecolor{uogred}{HTML}{db5000}
\definecolor{uogblue}{HTML}{004b89}
\definecolor{uoggreen}{HTML}{009e47}
\newrobustcmd\B{\color{uogblue}\DeclareFontSeriesDefault[rm]{bf}{b}\bfseries}
\newrobustcmd\E{\color{uogred}\DeclareFontSeriesDefault[rm]{bf}{b}\bfseries}
\newrobustcmd\BE{\color{uoggreen}\DeclareFontSeriesDefault[rm]{bf}{b}\bfseries}
\begin{document}

\def\spacingset#1{\renewcommand{\baselinestretch}%
{#1}\small\normalsize} \spacingset{1}


\if0\blind%
{

  \title{\bf Sparse and Orthogonal Low-rank Collective Matrix Factorization (solrCMF): Efficient data integration in flexible layouts}
  \author{Felix Held \\ 
    Department of Mathematical Sciences\\
    Chalmers University of Technology and University of Gothenburg\\
    \and 
    Jacob Lindbäck \\
    EECS, KTH Royal Institute of Technology \\
    \and 
    Rebecka Jörnsten \\
    Department of Mathematical Sciences\\
    Chalmers University of Technology and University of Gothenburg}
  \date{}
  \maketitle
} \fi

\if1\blind%
{
  \bigskip
  \bigskip
  \bigskip
  \begin{center}
    {\LARGE\bf Sparse and Orthogonal Low-rank Collective Matrix Factorization (solrCMF): Efficient data integration in flexible layouts}
  \end{center}
  \medskip
} \fi


\clearpage
\begin{abstract}

Interest in unsupervised methods for joint analysis of heterogeneous data sources has risen in recent years. Low-rank latent factor models have proven to be an effective tool for data integration and have been extended to a large number of data source layouts. Of particular interest is the separation of variation present in data sources into shared and individual subspaces. In addition, interpretability of estimated latent factors is crucial to further understanding.

We present sparse and orthogonal low-rank Collective Matrix Factorization (solrCMF) to estimate low-rank latent factor models for flexible data layouts. These encompass traditional multi-view (one group, multiple data types) and multi-grid (multiple groups, multiple data types) layouts, as well as augmented layouts, which allow the inclusion of side information between data types or groups. In addition, solrCMF allows tensor-like layouts (repeated layers), estimates interpretable factors, and determines variation structure among factors and data sources.

Using a penalized optimization approach, we automatically separate variability into the globally and partially shared as well as individual components and estimate sparse representations of factors. To further increase interpretability of factors, we enforce orthogonality between them. Estimation is performed efficiently in a recent multi-block ADMM framework which we adapted to support embedded manifold constraints.

The performance of solrCMF is demonstrated in simulation studies and compares favorably to existing methods.
An implementation of our algorithm is available as a Python package at \url{https://github.com/cyianor/solrcmf}.

\end{abstract}

\noindent%
{\it Keywords:} latent factor model, sparsity, multi-block ADMM, embedded manifolds

\newpage


\section{Introduction}%
\label{sec:intro}


Unsupervised, joint analyses of heterogeneous data sources are increasingly important in many research domains.
These modeling techniques aim to identify relevant shared signals that coexist across subsets of data sources while flexibly accounting for individual variability stemming from data heterogeneity.
For example, in biostatistics, the increasing availability of multi-omics data collected on multiple groups of individuals (patients, cells) \citepalias[e.g.,][2012]{TCGA2012} poses a huge challenge on how to integrate data from different -omics sources, such as integration of genomic, epigenomic, transcriptomic, and proteomic data, while allowing for their inherent differences. The recent leaps made in single-cell sequencing also require integrative methods to combine data collected on multiple groups of cells or to integrate different data types collected on the same cells \citep{Argelaguet2021,Ryu2023}.
Similarly, large amounts of available user feedback, such as ratings and associated side information, are used to train recommender systems which employ data integration to better predict user preferences \citep[e.g.,][]{SinghGordon2008,Ning2012,Chen2018b}.

A common approach to integrating data from different sources is to partition the variance of each data source by identifying shared subspaces of variability between a subset of input data sources.
A useful model choice in this context is to assume that any \(n \times p\) data matrix \(\mat{X}\) can be written as \(\mat{X} = \mat{Z} + \mat{E}\), where \(\mat{Z}\) is a \(n \times p\) matrix of rank \(k \ll \min(n, p)\), a \emph{low-rank signal}, and \(\mat{E}\) is a matrix containing \emph{additive residual noise}.
A standard assumption on \(\mat{E}\) is that entries are random, uncorrelated, have mean 0, and variance \(\sigma^2\).
This type of model has been found to hold approximately in many real datasets and is therefore applicable in many settings (see e.g., \citet{Udell2019} for more details).

Finding directions of shared variance in two data sources collected on the same samples but with different observed variables has been investigated at least since the introduction of Canonical Correlation Analysis \citep[CCA]{Hotelling1936}. The main idea being to find linear combinations within each block of variables that correlate maximally. 
CCA was eventually extended to allow for two or more matrices as generalized CCA \citep[GCCA]{Horst1961,Kettenring1971}.
Within chemometrics, orthogonal projections to latent structures \citep[O-PLS]{OPLS} is another popular technique to perform data integration and find shared variation between two datasets collected on the same samples. It can be shown that CCA and O-PLS are equivalent \citep{Sun2009}. The method was then improved \citep[O2-PLS]{O2PLS} to account appropriately for individual variation and to improve the estimate of shared variation. O2-PLS, in turn, was then extended to work with two or more matrices as well \citep{OnPLS}.

Inter-battery factor analysis \citep[IBFA]{Tucker1958} finds common factors that explain variation in the data within both groups simultaneously. This is different from CCA, since instead of finding maximally correlated but separate directions of variance within each data source, IBFA finds exactly correlated factors present in both data sources simultaneously. In that way, IBFA partitions the variation in the data into a shared subspace described by the shared factors and two individual spaces, which describe variation not described by the common factors for each data source individually.

JIVE \citep{Lock2013} adopts the factor analysis point of view and separates two or more data matrices matching along samples into globally shared (joint) and individual variation by modeling data matrices as
\begin{equation*}
\mat{X}_i =
\mat{C}_i + \mat{D}_i + \mat{E}_i =
\mat{U} \mat{V}_i^\top
+ \mat{U}_i \mat{W}_i^\top
+ \mat{E}_i,
\end{equation*}
where \(\mat{C}_i\) represents the projection of the signal onto the joint space of variation and \(\mat{D}_i\) represents the remaining, individual, part of the signal. \(\mat{U}\) and \(\mat{U}_i\) are joint and individual factor matrices, respectively, whereas \(\mat{V}_i\) and \(\mat{W}_i\) are factor loadings.
In particular, it is assumed that \(\mat{U}^\top \mat{U}_i = \mat{0}\) to ensure that joint and individual factor patterns are unrelated.
This idea is then expanded on in SLIDE \citep{Gaynanova2019} to allow not just for globally shared signal, but also for uncovering variation shared by a subset of data matrices. They adopt the model
\begin{equation*}
\begin{pmatrix}
\mat{X}_1 \\
\vdots \\
\mat{X}_n \\
\end{pmatrix} =
\begin{pmatrix}
\mat{U}_1 \\
\vdots \\
\mat{U}_n \\
\end{pmatrix}
\mat{V}(\mat{S})^\top
+ \begin{pmatrix}
\mat{E}_1 \\
\vdots \\
\mat{E}_n
\end{pmatrix} = \mat{U} \mat{V}(\mat{S})^\top + \mat{E}
\end{equation*}
where factors \(\mat{U}\) are orthogonal and \(\mat{S}\) describes the 0-1 structure pattern of the estimate. If \(\mat{S}^{(i, j)} = 1\) then factor \(j\) is active in signal matrix \(i\) and loadings are non-zero, otherwise they are zero. This way globally shared, partially shared, and individual factors can be found among two or more matrices.


Group factor analysis \citep[GFA]{Klami2015}, like JIVE, is a generalization of IBFA to two or more matrices.
In contrast to all previous models, GFA has been formulated as a Bayesian model.
It uses the model \(\mat{X}_i = \mat{Z} \mat{W}_i + \mat{E}_i\) where \(\mat{Z}\) contains factors and is modeled essentially as regression coefficients using a normal prior, and \(\mat{W}_i\) contain the loadings for each data matrix. 
Active factors within each data matrix are selected with sparsity-inducing priors on the loading matrices, assuming a zero-mean normal prior with factor specific precision parameters. By assuming a prior distribution over the precision parameters an Automatic Relevance Detection (ARD) prior \citep{Mackay1995} is defined.
In case of GFA, precisions are modelled to follow a second low-rank linear model of even lower dimension, ultimately placing priors on the matrix factors in the linear model. GFA therefore builds a hierarchical model where low-rank models appear in two different places.

MOFA(+) \citep{Argelaguet2018,Argelaguet2020} as well as Collective Matrix Factorization \citep[CMF]{Klami2014} are variants of GFA which both drop the linear model for loading precisions in favour of a simpler Gamma prior.
MOFA adds spike-and-slab priors \citep{Mitchell1988} on factors and/or loadings to perform variable selection. Interpretability of factors is an important aspect during explorative data analysis and reducing the number of entries with any substantial impact in factors or loadings helps.
MOFA+ extends the MOFA model defined for multiple observed groups of variables contained in one group of samples to the case of a multi-group multi-block scenario.
CMF extends the applicability of GFA from the multi-block case to augmented data matrix layouts, allowing side information that relates two groups of variables or two groups of samples with each other. In addition, the traditional separation between variables and samples is given up to allow for a flexible model. The model used by CMF is \(\mat{X}_{ij} = \mat{U}_i \mat{U}_j + \mat{E}_{ij}\) where \((i, j)\) indexes row and column entities (which might be samples or variables).
MMPCA \citep{Kallus2019} then builds on the CMF model, now again within an optimization framework. It adds the assumption of orthogonality on factor matrices \(\mat{U}_i\) and introduces element-wise sparsity in factors, similar to MOFA, however, using a penalized optimization approach instead.

We aim to keep the orthogonality assumption on factor matrices and will work within an optimization framework. However, having to formulate optimization procedures over the space of matrices with orthogonal columns, often called the Stiefel manifold, can be challenging \citep{Edelman1998}. In \citet{Kallus2019} an explicit parametrization of orthogonal matrices using angles was attempted. However, this requires solving over a large parameter space which is often slow and prone to get stuck in local minima.

Splitting methods, such as proximal gradient descent \citep{Parikh2014} and ADMM \citep{Boyd2011}, have become standard methods for solving large-scale problems with non-smooth components, such as sparsity inducing penalties. Many recent works have adapted these algorithms to specific classes of non-convex problems \citep[e.g.,]{wang2019global, li2015accelerated}.
For matrix factorization problems, manifold optimization techniques have gained a great deal of attention due to their ability to handle orthogonality constraints and rank constraints efficiently \citep{Absil2008}. However, efficiently using manifold optimization to solve problems has proven more difficult.
For instance, \citep{huang2022riemannian} adapted proximal gradient descent to the manifold setting. Their approach hinges on solving a non-smooth, convex optimization problem in every iteration to find a search direction, which is potentially expensive.


Here, we are focusing on multi-block ADMM with multi-affine constraints \citep{Gao2020}. This class of algorithms allows to move some non-convex parts in the objective function, such as the product \(\mat{U} \mat{V}^\top\) in matrix factorization, into constraints. If constructed correctly, the objective function can be made convex while retaining a provably convergent algorithm. However, manifold constraints, such as the constraint on factor matrices to have orthogonal columns, were not accounted for in \citet{Gao2020}.


\subsection{Our contributions}

In this paper, we present \emph{sparse and orthogonal low-rank Collective Matrix Factorization (solrCMF)}, a flexible group of low-rank models readily amenable to data integration.
\begin{enumerate}
\item We extend the model presented in CMF \citep{Klami2014} and MMPCA \citep{Kallus2019} to a richer class of data collections. In addition, we discuss interpretation of the model and investigate its identifiability. (Section~\ref{sec:data-integration-model})
\item This model is then used to formulate a penalized optimization approach to perform data integration, automatically finding a suitable partition of variation in the data into shared (possibly among subsets) and individual subspaces, and capture elementwise factor sparsity. (Section~\ref{sec:data-integration-model})
\item An extended formulation of the solrCMF problem is formulated to allow efficient parameter estimation and a detailed analysis of its convergence is provided. In particular we extend the approach in \citet{Gao2020} to allow embedded manifold constraints, such as orthogonality constraints. (Section~\ref{sec:m-admm})
\item Finally, we perform simulation studies to evaluate solrCMF in comparison to other methods.
Particular focus is on structure estimation, correctly estimating factor sparsity, and the impact of sparsity and side information on factor identifiability.
(Section~\ref{sec:simulation})
\end{enumerate}

An implementation of our algorithm is available as a Python package at \url{https://github.com/cyianor/solrcmf}.

\section{Data integration model}%
\label{sec:data-integration-model}

We study data collections describing measurements on a set of \emph{views}. In this context, views can be both cohort groups (samples) as well as feature groups (e.g., gene expression or user attributes).
For simplicity of notation, we assume that views are numbered from 1 to \(m\) and that a data matrix \(\mat{X}_{ij}\) contains data relating view \(i\) in its rows and view \(j\) in its columns. In addition, we allow the observation of replicates, which can be denoted as \(\mat{X}_{ij, \gamma}\). For clarity, we will omit index \(\gamma\) unless it is important in the context. When multiple matrices are observed which match in one or both views, then \emph{layouts} of data matrices are formed.
See Figure~\ref{fig:data-layouts} for examples of layouts that the method described in this paper can take as an input. Note that these layouts can be arbitrarily combined and therefore very \emph{flexible layouts} can be created.
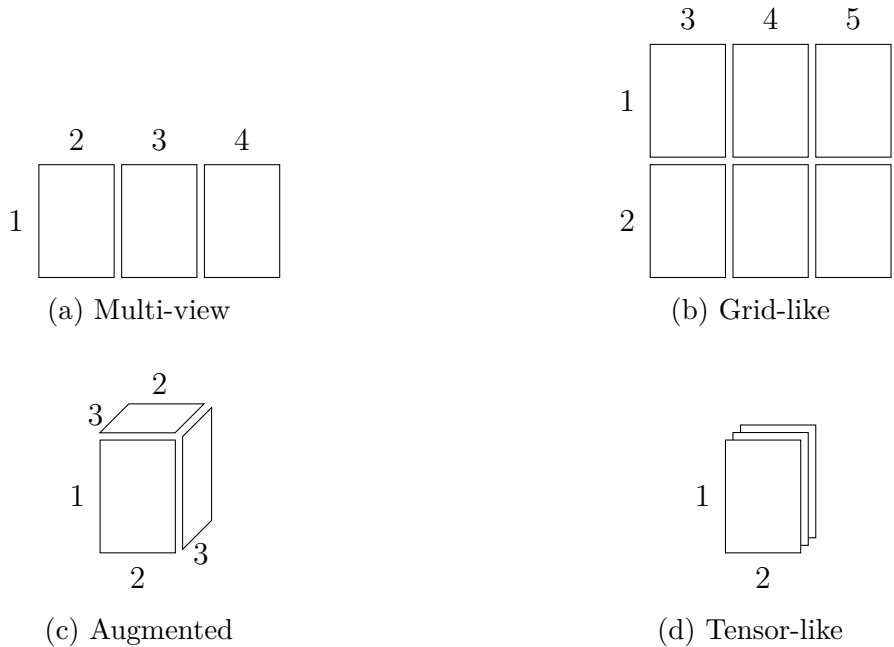
\begin{figure}[hbt]
\begin{subfigure}[t]{.5\textwidth}
\centering
\begin{tikzpicture}
\node[rectangle,draw,minimum width=1cm, minimum height=1.5cm,fill=white] (DAR1) {};
\node[rectangle,draw,minimum width=1cm, minimum height=1.5cm,fill=white] (DAR2) at ($(DAR1)+(1.1cm,0)$) {};
\node[rectangle,draw,minimum width=1cm, minimum height=1.5cm,fill=white] (DAR3) at ($(DAR2)+(1.1cm,0)$) {};

\node[left=0.05cm of DAR1] (DAV1) {1};
\node[above=0.05cm of DAR1] (DAV2) {2};
\node[above=0.05cm of DAR2] (DAV3) {3};
\node[above=0.05cm of DAR3] (DAV4) {4};
\end{tikzpicture}
\caption{Multi-view}
\label{fig:data-layouts:multi-view}
\end{subfigure}%
\begin{subfigure}[t]{.5\textwidth}
\centering
\begin{tikzpicture}
\node[rectangle,draw,minimum width=1cm, minimum height=1.5cm,fill=white] (DBR13) {};
\node[rectangle,draw,minimum width=1cm, minimum height=1.5cm,fill=white] (DBR14) at ($(DBR13)+(1.1cm,0cm)$) {};
\node[rectangle,draw,minimum width=1cm, minimum height=1.5cm,fill=white] (DBR15) at ($(DBR14)+(1.1cm,0)$) {};

\node[rectangle,draw,minimum width=1cm, minimum height=1.5cm,fill=white] (DBR23) at ($(DBR13)+(0cm,-1.6cm)$) {};
\node[rectangle,draw,minimum width=1cm, minimum height=1.5cm,fill=white] (DBR24) at ($(DBR23)+(1.1cm,0)$) {};
\node[rectangle,draw,minimum width=1cm, minimum height=1.5cm,fill=white] (DBR25) at ($(DBR24)+(1.1cm,0)$) {};

\node[left=0.05cm of DBR13] (DBV1) {1};
\node[left=0.05cm of DBR23] (DBV2) {2};
\node[above=0.05cm of DBR13] (DBV3) {3};
\node[above=0.05cm of DBR14] (DBV4) {4};
\node[above=0.05cm of DBR15] (DBV5) {5};
\end{tikzpicture}
\caption{Grid-like}
\label{fig:data-layouts:grid-like}
\end{subfigure}
\par\bigskip
\begin{subfigure}[t]{.5\textwidth}
\centering
\begin{tikzpicture}
\node[rectangle, draw, minimum width=1cm, minimum height=1.5cm, fill=white]
    (DCR1) {};
\begin{scope}[shift={($(DCR1) + (0.5cm,0cm)$)}, canvas is yz plane at x=0]
    \node[rectangle, draw, minimum width=1.5cm, minimum height=1cm, fill=white, transform shape] (DCR2) at (-0.05, -0.75) {};
    
\end{scope}
\begin{scope}[shift={($(DCR1) + (0cm,0.75cm)$)}, canvas is xz plane at y=0]
    \node[rectangle, draw, minimum width=1cm, minimum height=1cm, fill=white, transform shape] (DCR3) at (-0.1, -0.75) {};
\end{scope}

\node[left=0.05cm of DCR1] (DCV1) {1};
\node[below=0.05cm of DCR1] (DCV21) {2};
\node (DCV31) at ($(DCR2) + (0.05cm,-1cm)$) {3};
\node (DCV22) at ($(DCR3) + (0.1cm,0.45cm)$) {2};
\node (DCV32) at ($(DCR3) + (-0.75cm,0.05cm)$) {3};
\end{tikzpicture}
\caption{Augmented}
\label{fig:data-layouts:augmented}
\end{subfigure}%
\begin{subfigure}[t]{.5\textwidth}
\centering
\begin{tikzpicture}
\node[rectangle,draw,minimum width=1cm, minimum height=1.5cm,fill=white] (DDR3) {};
\node[rectangle,draw,minimum width=1cm, minimum height=1.5cm,fill=white] (DDR2) at ($(DDR3)+(-0.1cm,-0.1cm)$) {};
\node[rectangle,draw,minimum width=1cm, minimum height=1.5cm,fill=white] (DDR1) at ($(DDR2)+(-0.1cm,-0.1cm)$) {};

\node[left=0.05cm of DDR1] (DDV1) {1};
\node[below=0.05cm of DDR1] (DDV2) {2};
\end{tikzpicture}
\caption{Tensor-like}
\label{fig:data-layouts:tensor-like}
\end{subfigure}
\caption{Illustrations of possible layouts supported by our method. Rectangles correspond to data matrices and numbers represent views.}
\label{fig:data-layouts}
\end{figure}

\citet{Klami2014} and \citet{Kallus2019} describe two data integration methods that support arbitrary augmented layouts and assume that the low-rank signal contained in \(\mat{X}_{ij}\) can be described by an outer product of two \(k\) view-specific factors \(\mat{U}_i\) and \(\mat{U}_j\), respectively, resulting in the model
\begin{equation}\label{eq:cmf-model}
\mat{X}_{ij} = \mat{U}_i \mat{U}_j^\top + \mat{E}_{ij}.
\end{equation}
\citet{Kallus2019} make the additional assumption that \(\mat{U}_i\) and \(\mat{U}_j\) have orthogonal columns.
Despite the similarity of the model above to Factor Analysis \citep[Chapter 12]{Hardle2019}, no separation is made between loadings and factors. This is to facilitate augmented layouts (Figure~\ref{fig:data-layouts:augmented}).
The scale of factor \(l\) in the signal of \(\mat{X}_{ij}\) is equal to \(\norm{\mat{U}_i^{(:, l)}}_2 \norm{\mat{U}_j^{(:, l)}}_2\), where \(\mat{U}_i^{(:, l)}\) is the \(l\)-th column of \(\mat{U}_i\). Factor scale is a proxy for signal strength, with larger factor scales resulting in more prominent factors.
Ideally, despite reusing factors, it would be desirable if factor scale could be chosen separately for each data matrix.
A shortcoming of the model in Eq.~\eqref{eq:cmf-model} is that factor scales can end up in a deadlock situation. In \citet{Klami2014} this can be resolved by duplicating a factor and using it to describe separate aspects of the integrated data sources. However, if factors are required to be orthogonal within a view the consequences are more detrimental. This can limit the range of data sources that can jointly be described by the model.

To illustrate this issue, assume a \(2 \times 2\) grid layout was observed. Let \(\vect{u}_i = s_i \vect{v}_i\) with \(\norm{\vect{v}_i}_2 = 1\) and \(d_i > 0\) for \(i = 1, 2, 3,\) and \(4\).
The data is assumed to follow the model in Eq.~\eqref{eq:cmf-model} such that
\(\mat{X}_{ij} = \vect{w}_i \vect{w}_j^\top + \mat{E}_{ij}\) for
\((i, j) \in \{(1, 3), (1, 4), (2, 3), (2, 4)\}\).
The factor scale of the signal in \(\mat{X}_{ij}\) is therefore \(d_{ij} = s_i s_j\) and the \(2 \times 2\) grid-layout then implies the relationship \(d_{13} = (d_{23} d_{14}) / d_{24}\). One factor scale is therefore completely determined by the three others. Factor scale of different factors relative to each other within a data matrix is of importance. However, the scale of a factor in one data matrix relative to its scale in another data matrix is essentially arbitrary since it can easily change by scaling one matrix. It is therefore crucial to make factor scale a data matrix-specific property instead of a view-specific property.

In this paper, we improve on the model in Eq.~\eqref{eq:cmf-model} by separating factor scales from factors and assuming that
\begin{equation}\label{eq:data-model}
\mat{X}_{ij} = \mat{V}_i \mat{D}_{ij} \mat{V}_j^\top + \mat{E}_{ij}
\end{equation}
where \(\mat{V}_i\) and \(\mat{V}_j\) contain \(k\) orthogonal factors each, \(\mat{D}_{ij}\) contains the factor scales, and, as above, \(\mat{E}_{ij}\) contains uncorrelated additive noise with mean 0 and variance \(\sigma_{ij}^2\).
We will typically assume that \(\mat{X}_{ij}\) is bicentered, i.e. centered in rows as well as columns, due to the exchangable meaning of row and column views.
In addition, normalization such that \(\norm{\mat{X}_{ij}}_F = 1\) is recommended. In particular we assume that, when normalized, matrices \(\mat{X}_{ij}\) have a comparable residual variance. This is an assumption that is often made implicitly \citep[e.g.,][]{Lock2013,Gaynanova2019,Kallus2019}.

The signal \(\mat{Z}_{ij} = \mat{V}_i \mat{D}_{ij} \mat{V}_j^\top\) has many similarities with Singular Value Decomposition (SVD) \citep[Chapter 2.4]{Golub2013} and therefore entries on the diagonal of \(\mat{D}_{ij}\) will be called \emph{singular values}. In contrast to SVD for a single matrix, the diagonal entries of \(\mat{D}_{ij}\) are not ordered in a non-increasing fashion and neither is non-negativity enforced. In particular, any entry on the diagonal of \(\mat{D}_{ij}\) can be zero, indicating that there is no variation along this factor in \(\mat{Z}_{ij}\). A discussion of these properties is given in Section~\ref{ssec:interpreting-singular-values}. 

Both models in Eq.~\eqref{eq:cmf-model} and Eq.~\eqref{eq:data-model} are capable of representing multi-view, grid-like, and augmented layouts (Figures~\ref{fig:data-layouts}a-c). However, when using the model with view-specific factor scaling in Eq.~\eqref{eq:cmf-model} for replicates of data matrices \(\mat{X}_{ij, \gamma}\), only trivial replicates can be described with fluctuation in the additive noise but constant signal across layers.
Our model in Eq.~\eqref{eq:data-model} can describe more flexible, tensor-like, signals (Figure~\ref{fig:data-layouts:tensor-like}).
Factors in \(\mat{V}_i\) and \(\mat{V}_j\) are kept constant across tensor layers. Singular values \(\mat{D}_{ij, \gamma}\), however, are specific to each tensor layer. Factors can therefore have varying scales across layers and factors describing the variation in the signal can vary since some singular values can be zero.

When performing data integration it is desirable to be able to split the signal \(\mat{Z}_{ij}\) into \emph{shared} and \emph{individual} signal. Conceptually, row and column spaces of signals are partitioned appropriately.
\begin{enumerate}
\item \textbf{Shared signal:} Two or more signal matrices contain a shared subspace of dimension 1 or larger. Depending on whether signals match along rows or columns, these are subspaces within the corresponding row or column spaces.
\item \textbf{Individual signal:} Subspaces of the row and column spaces of a signal matrix that are not shared.
\end{enumerate}
Signal can therefore be present globally in all integrated signal matrices, partially shared between a subset of matrices, or individual to a single matrix.
The model in Eq.~\eqref{eq:data-model} encodes information about shared and individual signal into factors and singular values. Given two signal matrices \(\mat{Z}_{ij}\) and \(\mat{Z}_{ik}\), if \(\mat{D}_{ij}^{(l, l)} \neq 0 \neq \mat{D}_{ik}^{(l, l)}\), then the \(l\)-th factor in view \(i\) contains variation that appears in both signal matrices and describes a 1-dimensional shared signal subspace. Clearly, more than one factor can be shared and different subsets of factors can be shared between different subsets of signal matrices.


\subsection{Interpreting singular values}%
\label{ssec:interpreting-singular-values}




Historically, singular values were introduced as square roots of eigenvalues and therefore non-negative. If SVD is defined via \(\mat{X} = \mat{U} \mat{D} \mat{V}^\top\), there is no immediate reason as to why singular values should be non-negative. However, to ensure uniqueness of the result when singular values are all distinct, non-negativity and a non-increasing order of the singular values is imposed. This convention ensures that singular values keep their interpretation as roots of eigenvalues and from a statistical standpoint makes it more natural to interpret singular values as scaled standard deviations in Principal Component Analysis (PCA) \citep[Chapter 11]{Hardle2019}.
From a technical perspective, if a singular value were estimated negative, then the singular vector in \(\mat{V}\) corresponding to this singular value can be elementwise sign-flipped to compensate.

In comparison to SVD and PCA, the model in Eq.~\eqref{eq:data-model} aims to capture a much broader type of data sets, and therefore enforcing non-negativity of singular values is not possible anymore.
A simple example as to why this is not possible is given in the following.
Imagine a layout (1, 2), (1, 3), and (2, 3) and assume that all three signals share a single factor. If the singular value for (2, 3) is negative and positive for (1, 2) and (1, 3), enforcing a positive sign for all matrices would require flipping the sign of the factor in either view 2 or 3. This in turn would require flipping the sign of the corresponding factor in view 1 to ensure that the singular value of (1, 2) or (1, 3) does not become negative. However, only one of view 2 or 3 had its sign flipped, so this leads to the sign for either (1, 2) or (1, 3) being negative. It is therefore not possible to ensure non-negativity of all singular values.
The inability to always restrict signs of singular values to only non-negative values is therefore a consequence of complex layouts between views, see e.g.\@ Figure~\ref{fig:data-layouts:augmented}. However, simpler layouts like the multi-view layout in Figure~\ref{fig:data-layouts:multi-view} always allow selecting non-negative signs of singular values.

This poses the question of how singular values in our model should be interpreted statistically. When restricting the solution to a single data matrix, singular values can always be chosen non-negative as is the case for standard SVD. It is therefore reasonable to continue to interpret factors as directions of variation in the data and absolute values of singular values as the magnitude of this variation within each data matrix.

\subsection{Interpreting factors}%
\label{ssec:interpreting-factors}

Interpretability of factors is important during exploratory data analysis. As is common in PCA, data can be projected onto two or three factors and visualized.
However, factor entries might be of interest on their own. If one view represents a list of genes, it can be insightful to investigate genes that correspond to values large in absolute value within a factor, since they contribute most to the variation described by the factor. 
Another example is drug effect where one view represents drugs and the other samples. Factors describe variation within the space of drug effect and drugs strongly affecting it might be interesting to investigate.
To improve the interpretability of factors, it can be helpful to force some or most factor entries to zero. A factor containing a lot of zeros is called \emph{sparse}.
Sparsity is well-known to increase interpretability of coefficient vectors in regression problems \citep{Tibshirani2002} and has also been shown to decrease bias in estimated principal components in high dimensions \citep{Johnstone2009}.
It is therefore desirable to allow for ways to induce sparsity in factor estimates.
In the context of data integration, this has been previously attempted using spike-and-slab priors on factors in a Bayesian setting \citep{Bunte2016,Argelaguet2018} or using sparsity-inducing penalties on factor entries during parameter optimization \citep{Kallus2019}, where the latter is reminiscent of approaches to sparse PCA \citep{Jolliffe2003,Witten2009}. 

\subsection{Missing data}

Data is often not fully observed. This may be due to, e.g., technical limitations, patient dropout, sensor failure or high cost associated with collecting data. Low-rank models have been extensively used to perform data imputation, i.e., the prediction of missing values \citep{Candes2010,Mazumder2010,Hastie2015}. In addition, data imputation is an important motivation for data integration. Side information coming from other data sources can help in the recovery of entries \citep{Klami2014}.


\subsection{Notation}

We introduce some necessary notation before continuing. The set of all views is denoted \(\mathcal{V} = \{1, \dots, m\}\) and each view is associated with a dimension \(p_i\); indices of all observed data matrices are collected in \(\mathcal{I} = \{ (i, j, \gamma) : \mat{X}_{ij,\gamma} \text{ is observed}\}\);
vector and matrix indexing is denoted in Matlab/NumPy-like notation (e.g., \(\mat{A}^{(i, j)}\) is the \((i, j)\)-th element in \(\mat{A}\) and \(\mat{U}^{(:, l)}\) denotes the \(l\)-th column of \(\mat{U}\));
the indicator function of a set \(\mathcal{S}\) is defined as \(\iota_{\mathcal{S}}(x) = 0\) for \(x \in \mathcal{S}\), and \(\iota_{\mathcal{S}}(x) = \infty\) otherwise; the operator \(\mathcal{P}_\mathrm{\Omega} : \R^{n \times p} \rightarrow \R^{n \times p}\) keeps all elements in an input matrix \(\mat{A}\) with indices in set \(\mathrm{\Omega} \subset \{(i, j): 1 \leq i \leq n, 1 \leq j \leq p\}\) and sets all remaining elements to 0;
the soft-thresholding operator is defined as \(\mathrm{ST}_\beta(x) = \Sign(x) \max(\abs{x} - \beta, 0)\);
the Stiefel manifold in \(\R^{p \times k}\) with \(k \leq p\) is the set \(\Stiefel_{k,p} = \{\mat{A} \in \R^{p \times k} : \mat{A}^\top \mat{A} = \mat{I}_k\}\);
the oblique manifold in \(\R^{p \times k}\) with \(k \leq p\) is the set \(\Oblique_{k, p} =
\{
\mat{A} \in \R^{p \times k} : \norm{\mat{A}^{(:, j)}}_2 = 1 \text{ for all } j
\}\);
the convex set of diagonal matrices in \(\R^{k \times k}\) is denoted as \(\mathcal{D}_k\);
the usual Euclidean norm of a vector \(\vect{a} \in \R^n\) is defined as \(\norm{\vect{a}}_2^2 = \sum_i {\vect{a}^{(i)}}^2\);
the Frobenius norm of a matrix \(\mat{A} \in \R^{n \times p}\) is analogously defined as \(\norm{\mat{A}}_F^2 = \sum_{i, j} {\mat{A}^{(i, j)}}^2\), and the corresponding \(\ell_1\)-norm is defined as \(\norm{\mat{A}}_1 = \sum_{i, j} \abs{\mat{A}^{(i, j)}}\).

\subsection{Signal estimation}

To estimate model parameters an optimization perspective will be adopted.
It is well-known that the SVD of a matrix truncated at \(k\) components produces the best rank-\(k\) estimate of said matrix in Frobenius norm \citep{Eckart1936,Mirsky1960}.
Given data matrices \(\mat{X}_{ij}\) for \((i, j) \in \mathcal{I}\) and index sets \(\mathrm{\Omega}_{ij}\) of observed entries, it is therefore reasonable to attempt to estimate the signal matrices \(\mat{Z}_{ij} = \mat{V}_i \mat{D}_{ij} \mat{V}_j^\top\) 
by solving the following optimization problem
\begin{equation*}
\argmin_{\mat{V}_i \in \Stiefel_{k, p_i}, \mat{D}_{ij} \in \mathcal{D}_k}
\frac{1}{2} \sum_{(i, j) \in \mathcal{I}}
\norm*{%
\mathcal{P}_{\mathrm{\Omega}_{ij}}\left(\mat{X}_{ij}
- \mat{V}_i \mat{D}_{ij} \mat{V}_j^\top\right)
}_F^2.
\end{equation*}
Note that \(k\) is typically unknown and has to be chosen appropriately. By using the operator \(\mathcal{P}_{\mathrm{\Omega}_{ij}}\) to select only observed entries it is straight-forward to account for missing values.
As mentioned above, the additional index \(\gamma\) for replicates is omitted to simplify notation, but all of our results hold true for \((i, j, \gamma) \in \mathcal{I}\). 


To perform the data integration and to select which factors are active we follow \citet{Kallus2019} and impose \(\ell_1\)-penalties on the diagonal entries of each \(\mat{D}_{ij}\). 
In addition, to find structure in factor matrices \(\mat{V}_i\) for \(i \in \mathcal{V}\) as discussed in Section~\ref{ssec:interpreting-factors} sparsity in factor entries is enforced. Similarly to \citet{Witten2009}, we impose \(\ell_1\)-penalties on the entries of each \(\mat{V}_i\).
The final optimization problem, solving the solrCMF problem, is then
\begin{equation}\label{eq:sparse-cmf:penalized-model}
\begin{aligned}
\argmin_{\mat{V}_l \in \Stiefel_{k, p_l}, \mat{D}_{ij} \in \mathcal{D}_k}
&\frac{1}{2} \sum_{(i, j) \in \mathcal{I}}
\norm*{%
\mathcal{P}_{\mathrm{\Omega}_{ij}}\left(\mat{X}_{ij}
- \mat{V}_i \mat{D}_{ij} \mat{V}_j^\top\right)
}_F^2 \\
+\ &\lambda_1 \sum_{(i, j) \in \mathcal{I}} 
\sum_{l = 1}^k \vect{w}_{ij}^{(l)}
\abs{\mat{D}_{ij}^{(l, l)}}
+ \lambda_2 \sum_{l = 1}^m \vect{w}^{(l)} \norm{\mat{V}_l}_1,
\end{aligned}
\end{equation}
where \(\lambda_1 > 0\) and \(\lambda_2 > 0\) are hyperparameters, \(\bm{w}_{ij}\) and \(\bm{w}\) are weight vectors that can be used to account for varying magnitudes of the estimated signal strengths as well as sizes of the matrices in the penalties.
We choose \(\bm{w}^{(l)} = 1 / \sqrt{p_l}\) to account for smaller entries in matrices of larger dimension.
For \(\bm{w}_{ij}\) it is possible to use ideas from the adaptive Lasso \citep{Zou2006b}. After an initial unpenalized run, resulting in estimates \(\widehat{\mat{D}}_{ij}\) the weights can be set to e.g.\@ \(\bm{w}_{ij} = 1 / \Diag(\widehat{\mat{D}}_{ij})^2\), with exponentiation and division being performed elementwise. The goal is to avoid unnecessary \emph{shrinkage} in non-zero entries, which is achieved by decreasing penalization of strong signals. We do not make use of \(\bm{w}_{ij}\) in our implementation but perform a two-step estimation procedure to avoid shrinkage instead. See Section~\ref{ssec:hyperparameter-selection} for details.


\subsection{Model identification}%
\label{ssec:model-identification}

Estimates of model parameters \(\mat{V}_l\) and \(\bm{D}_{ij}\) depend on initial values, due to the non-convexity of the solrCMF objective function in Eq.~\eqref{eq:sparse-cmf:penalized-model}. It is therefore necessary to discuss equivalent solutions and possible identifiability issues. First, note that the order of factors in a solution is arbitrary. Factors can be re-ordered as long as all corresponding values in all \(\mat{V}_l\) and all \(\mat{D}_{ij}\) are re-ordered. Ordering the singular values in a decreasing fashion like in SVD is not possible since factors are not guaranteed to contribute in the same order of magnitude to each data matrix.

In addition, signs of singular values can be flipped by sign-flipping the entries of corresponding factors, as has been discussed in Section~\ref{ssec:interpreting-singular-values}.
This makes signs of singular values often arbitrary. Some setups, as exemplified earlier, enforce a relationship between signs of different singular values but a chain of signs thus defined is equally valid when flipped. 
This implies for our model that solutions with reversed or partially reversed signs have to be seen as equivalent if the view layout allows the corresponding sign flips.

It is well-known that factor analysis models are invariant under rotation since for any rotation matrix \(\mat{R}\) it holds that \(\mat{U} \mat{V}^\top = \mat{U}\mat{R} \mat{R}^\top \mat{V}^\top\). The SVD-like formulation is not susceptible to this invariance if all singular values are distinct.
However, a well-known identifiability issue in SVD is subspace rotation. If \(n \geq 2\) singular values are equal, the corresponding singular vectors form a subspace of dimension \(n\) and any orthogonal basis of this subspace will serve as singular vectors for these \(n\) singular values. In other terms, the \(n\) dimensional subspace is rotationally invariant. 
solrCMF inherits the problem of unidentifability under subspace rotations.
Due to the presence of noise, subspace rotations can occur even for singular values that are only close but not equal to each other.
However, data integration in addition to the assumption of orthogonality among factors can help with subspace rotations. For example, assume \(\mat{D}_{12} = \Diag(2, 2)\) and \(\mat{D}_{13} = \Diag(0, 3)\). On their own, the factors making up the signal in \(\mat{X}_{12}\) are unidentifiable. By identifying the second factor in \(\mat{X}_{13}\), however, orthogonality then ensures that the first factor \(\mat{X}_{12}\) becomes identifiable as well.
In addition, sparsity in factors can guide the identification of the factors even without side information.



\subsection{Quantification of variation in estimates}

After model estimation, quantification of the amount of variation in the data captured by the estimate is an important task. We investigate the \emph{proportion of variation} typically used for PCA and a modification of \emph{directed \(R^2\)} from \citet{Kallus2019}.
To introduce these measures, consider first a more general type of low-rank model where \(\mat{X}_{ij} = \mat{Z}_{ij} + \mat{E}_{ij}\) with \(\Rank(\mat{Z}_{ij}) \leq k\). To avoid having to use mean vectors in rows and columns, assume that \(\mat{X}_{ij}\) is centered in both rows and columns. This bicentering is helpful since in our model there is no clear distinction between samples and features and views can assume these roles exchangeably. Data integration methods then form a structured estimate \(\widehat{\mat{Z}}_{ij}\) to \(\mat{Z}_{ij}\).

The proportion of variation is used to estimate the percentage of variance captured by a subset of the principal components during PCA. Analogously, we define
\begin{equation}\label{eq:prop-variation}
V_{ij} = \norm{\widehat{\mat{Z}}_{ij}}_F^2 / \norm{\mat{X}_{ij}}_F^2
\end{equation}
to be the proportion of variation in \(\mat{X}_{ij}\) captured by the model.
If it is assumed that \(\widehat{\mat{Z}}_{ij} = \mat{V}_i \mat{D}_{ij} \mat{V}_j^\top\) for matrices \(\mat{V}_i \in \Stiefel_{k, p_i}\) and \(\mat{V}_j \in \Stiefel_{k, p_j}\) then \(\norm{\widehat{\mat{Z}}_{ij}}_F^2 = \norm{\mat{V}_i \mat{D}_{ij} \mat{V}_j^\top}_F^2 = \sum_l \left(\mat{D}_{ij}^{(l, l)}\right)^2\).
It holds that \(\EV[\norm{\mat{X}_{ij}}_F^2] = \norm{\mat{Z}_{ij}}_F^2 + p_i p_j \sigma_{ij}^2 = \norm{\mat{Z}_{ij}}_F^2 (1 + 1 / \mathrm{SNR})\) where SNR is the signal-to-noise ratio of the input data.
This implies that the maximal value of \(V_{ij}\) depends on the SNR. A large SNR allows \(V_{ij}\) close to 1, whereas a lower SNR leads to a smaller upper-bound.
For example, \(\mathrm{SNR} = 0.5\) implies \(V_{ij} = 1/3\) assuming \(\widehat{\mat{Z}}_{ij} = \mat{Z}_{ij}\). If a larger proportion of variation is measured, then noise has been captured by the estimate \(\widehat{\mat{Z}}_{ij}\), and a smaller proportion of variation indicates that signal was missed. In practice however, the true residual variance and therefore the signal-to-noise ratio will often be unknown.
It holds that \(\sigma_{ij}^2 = \Variance[\mat{X}_{ij} - \mat{Z}_{ij}]\) and a natural estimate is therefore \(\widehat{\sigma}_{ij}^2 = \frac{1}{p_i p_j} \norm{\mat{X}_{ij} - \widehat{\mat{Z}}_{ij}}_F^2\), which can be used to compute an estimate for the signal-to-noise ratio.

Proportion of variation is only computed for a single matrix and does not account for relationships between matrices in the data integration model. Let two matrices \(\mat{X}_{ij}\) and \(\mat{X}_{ik}\) be given that share view \(i\) in their rows. We are interested in the amount of variation in the signal part of \(\mat{X}_{ij}\) that is linearly predicted by the signal in \(\mat{X}_{ik}\).
Assuming that both matrices follow the low-rank model specified above, we are interested in the linear relationship between signal matrices \(\mat{Y}_{ij} = \EV[\mat{X}_{ij}] = \EV[\mat{X}_{ik}] \mat{B} = \mat{Y}_{ik} \mat{B}\) for some coefficients \(\mat{B}\).
This corresponds to a multi-response linear regression model which can be solved using the usual normal equations.
Define the pseudo-inverse of a matrix \(\mat{A} = \mat{R}\mat{\Sigma}\mat{T}^\top\), with \(\mat{R}, \mat{T}\) orthogonal and \(\mat{\Sigma}\) diagonal, to be \(\mat{A}^+ := \mat{R} \mat{\Sigma}^+ \mat{T}^\top\), where \(\mat{\Sigma}^+\) is the diagonal matrix where all non-zero entries of \(\mat{\Sigma}\) were inverted. It then holds that
\(\mat{B} = (\mat{Y}_{ik}^\top \mat{Y}_{ik})^+ \mat{Y}_{ik}^\top \mat{Y}_{ij}\). Assuming that Eq.~\eqref{eq:data-model} holds implies \(\mat{B} = \mat{V}_k \mat{D}_{ik}^+ \mat{D}_{ij} \mat{V}_j^\top\)
and therefore \(\mat{Y}_{ik} \mat{B} = \mat{V}_i \mat{J}_{ik} \mat{D}_{ij} \mat{V}_j^\top\), where \(\mat{J}_{ik}\) is the diagonal matrix with ones on the diagonal where \(\mat{D}_{ik}\) is non-zero and zero otherwise. Analogous to the definition of proportion of variation above, we can then define a \emph{directed \(R^2\) value} of \(\mat{X}_{ij}\) predicted by \(\mat{X}_{ik}\) as
\begin{equation}\label{eq:directed-r2}
V_{ik \rightarrow ij} = \norm{\widehat{\mat{X}}_{ik} \mat{B}}_F^2 / \norm{\mat{X}_{ij}}_F^2 = \sum_l \mat{J}_{ik}^{(l, l)} \left(\mat{D}_{ij}^{(l, l)}\right)^2 / \norm{\mat{X}_{ij}}_F^2.
\end{equation}
Note that the definition of proportion of variation in Eq.~\eqref{eq:prop-variation} is a special case of directed \(R^2\) in Eq.~\eqref{eq:directed-r2} if data follows the model in Eq.~\eqref{eq:data-model} since \(V_{ij} = V_{ij \rightarrow ij}\). Since linear regression estimates are projections of the input data, it holds that \(V_{ik \rightarrow ij} \leq V_{ij}\) for any predictor matrix \(\mat{X}_{ik}\).
As a consequence, the range of \(V_{ik \rightarrow ij}\) is dependent on the SNR of \(\mat{X}_{ij}\).



If the shared view is in the columns then we consider the linear approximation \(\widehat{\mat{X}}_{ji}^\top = \widehat{\mat{X}}_{ki}^\top \mat{B}\). For the model in Eq.~\eqref{eq:data-model} this results in \(\mat{B} = \mat{V}_k \mat{D}_{ki}^+ \mat{D}_{ji} \mat{V}_j^\top\) and the same formula for \(V_{j,i \rightarrow k}\) as in Eq.~\eqref{eq:directed-r2} is recovered.
Similarly, it is possible that a view appears in the rows of one matrix and the columns of the other or vice-versa. The dependent and independent parts can then be transposed appropriately so that the shared view is always in the rows.


Other data integration methods use a different model than Eq.~\eqref{eq:data-model}. To be able to use \(V_{ik \rightarrow ij}\) with these methods it may not always be possible to find convenient expressions as in Eq.~\eqref{eq:directed-r2} after the second equality sign. In that case, the regression form is used.
Bayesian methods typically do not return a binary decision on which factors are included or excluded in an estimate. It is then necessary to threshold factor inclusion to obtain an interesting estimate of directed \(R^2\). Otherwise, all factors are always active and \(V_{ik \rightarrow ij} = V_{ij}\) for all predictors \(\mat{X}_{ik}\).

\section{Parameter estimation in the solrCMF problem}%
\label{sec:m-admm}

As established in the aforementioned sections, recovering a low-rank approximation of a data collection useful for data integration amounts to solving the solrCMF problem stated in Eq.~\eqref{eq:sparse-cmf:penalized-model}.
However, this problem has several complicating properties; it is non-convex due to the interactions between \(\mat{V}_i\) and \(\mat{V}_j\) for \(i \neq j\), and \(\mat{V}_l\) are constrained to the non-convex Stiefel manifolds \(\Stiefel_{k, p_l}\) \citep{Absil2008}. Moreover, the sparsity-inducing penalties render the problem non-smooth, which, if not handled carefully, can slow down algorithms significantly. Despite this, several of these issues can be mitigated by the fact that projecting onto the Stiefel manifold has a closed formula projection, and the $\ell_1$--regularization has an efficient proximal operator, namely soft thresholding. Therefore, we propose leveraging a multi-block ADMM framework with multi-affine constraints \citep{Gao2020}, which will enable us to use these efficient subroutines.

To be able to apply this framework, additional variables need to be introduced. As mentioned above, the loss function contains a non-convex term which can instead be replaced by new variables fulfilling multi-affine constraints. By introducing the multi-affine constraint
\(\mat{Z}_{ij} = \mat{V}_i \mat{D}_{ij} \mat{V}_j^\top\) the loss for input \(\mat{X}_{ij}\) becomes \(\norm{\mathcal{P}_{\mathrm{\Omega}_{ij}}(\mat{X}_{ij} - \mat{Z}_{ij})}_F^2\) which is convex in \(\mat{Z}_{ij}\).
The subproblem for \(\mat{V}_l\) requires handling elementwise-sparsity, due to the \(\ell_1\)-penalties, and orthogonality constraints simultaneously. No closed-form solution to this subproblem is known to the authors, and it was therefore decided, to decouple the orthogonality constraint from the element-wise sparsity constraint.
New variables \(\mat{U}_l \in \Oblique_{k, p_l}\) defined on the Oblique manifold \citep{Absil2008} are introduced, such that \(\mat{U}_l = \mat{V}_l\), and the \(\ell_1\)-penalties are applied to \(\mat{U}_l\) while the orthogonality constraint remains on \(\mat{V}_l\).
However, due to technical constraints in the framework of \citet{Gao2020}
it is necessary to introduce slack variables
\(\mat{V}'_l \in \R^{p_l \times k}\) such that
\(\mat{U}_l = \mat{V}_l + \mat{V}'_l\). Terms proportional to
\(\norm{\mat{V}'_l}_F^2\) are added to the objective function to control the
magnitude of the slack variables, with the aim of keeping them small.
The requirement that \(\mat{U}_l \in \Oblique_{k, p_l}\) may seem unnecessarily restrictive. If the columns of \(\mat{U}_l\) were unrestricted, then whole columns could become zero due to the \(\ell_1\)-penalty placed upon \(\mat{U}_l\). Since \(\mat{V}_l\) is an orthogonal matrix, its columns will never be zero vectors and the slack variable \(\mat{V}'_l\) will have to absorb the difference to the zero column in \(\mat{U}_l\). \(\mat{V}'_l\) then essentially becomes equal to \(\mat{V}_l\) in that column. However, we aim for the components of \(\mat{V}'_l\) to be close to zero and to only capture minor, essentially inconsequential, differences between \(\mat{U}_l\) and \(\mat{V}_l\). Therefore, it is beneficial to require that columns of \(\mat{U}_l\) are unit vectors to meaningfully guide \(\mat{V}_l\) towards sparser solutions. The complexity of the optimization subproblem for \(\mat{U}_l\) does increase when solved on the Oblique manifold, but allows for the derivation of a closed-form solution.

In the following, split the set of variables into two groups
\begin{equation*}
\mat{\Theta} = (\mat{V}, \mat{D}, \mat{U})\quad\text{and}\quad
\mat{\Delta} = (\mat{V}', \mat{Z}).
\end{equation*}
In addition denote
\(\mat{\Delta}_1 = \mat{V}'\) and \(\mat{\Delta}_2 = \mat{Z}\).

The resulting optimization problem is then
\begin{equation}\tag{P}\label{eq:m-admm:admm-opt-problem}
\left\{\begin{array}{rl}
\displaystyle\min_{\mat{\Theta}, \mat{\Delta}}
&\displaystyle\frac{1}{2} \sum_{(i, j) \in \mathcal{I}}
\norm*{\mathcal{P}_{\mathrm{\Omega}_{ij}}\left(\mat{X}_{ij,} - \mat{Z}_{ij}\right)}_2^2 \\
&\displaystyle+ \lambda_1 \sum_{(i, j) \in \mathcal{I}} \sum_{l = 1}^k
\vect{w}_{ij}^{(l)} \abs{\mat{D}_{ij}^{(l, l)}}
+ \lambda_2 \sum_{l = 1}^m  \vect{w}_1^{(l)} \norm{\mat{U}_l}_1
+ \frac{\mu}{2} \sum_{l = 1}^m \vect{w}_2^{(l)} \norm{\mat{V}'_l}_F^2 \\
\text{subject to }& \mat{V}_l \in \Stiefel_{k, p_l}, \mat{U}_l \in \Oblique_{k, p_l}, \mat{D}_{ij} \in \mathcal{D}_k, \text{ and} \\
A(\mat{\Theta}) + Q(\mat{\Delta}) &=
\begin{pmatrix}
A_1(\mat{\Theta}) & + & Q_1(\mat{\Delta}_1) \\
A_2(\mat{\Theta}) & + & Q_2(\mat{\Delta}_2) \\
\end{pmatrix}
=
\begin{pmatrix}
\mat{U}_l - \mat{V}_l & + & (-\mat{V}'_l) \\
-\mat{V}_i \mat{D}_{ij} \mat{V}_j^\top & + & \mat{Z}_{ij} \\
\end{pmatrix}
= \mat{0}
\end{array}
\right.
\end{equation}
where it is understood that each constraint is included for all \(l \in \mathcal{V}\)
and \((i, j) \in \mathcal{I}\).
The objective function of the reformulated problem \eqref{eq:m-admm:admm-opt-problem} is convex in all participating variables,
except for the manifold constraints on \(\mat{V}_l\) and \(\mat{U}_l\).

The ADMM approach requires the introduction of Lagrange multipliers for the
equality constraints in \eqref{eq:m-admm:admm-opt-problem} and a step-size
parameter \(\rho\).
Let \(\mat{M}^{(1)}_l\) for \(l \in \mathcal{V}\) and \(\mat{M}^{(2)}_{ij}\) for
\((i, j) \in \mathcal{I}\) be the raw multipliers and define the scaled
multipliers as \(\mat{\Lambda}^{(j)} = \mat{M}^{(j)} / \rho\). Denote the
set of all multipliers by \(\mat{\Lambda}\).
The augmented Lagrangian function corresponding to \eqref{eq:m-admm:admm-opt-problem} is then
\begin{equation}\label{eq:m-admm:admm-augmented-lagrangian}
\begin{aligned}
\mathcal{L}_{\rho}(\mat{\Theta}, \mat{\Delta}, \mat{\Lambda}) &=
\frac{1}{2} \sum_{(i, j) \in \mathcal{I}}
\norm*{\mathcal{P}_{\mathrm{\Omega}_{ij}}\left(\mat{X}_{ij} - \mat{Z}_{ij}\right)}_2^2 \\
&+ \lambda_1 \sum_{(i, j) \in \mathcal{I}} 
\sum_{l = 1}^k \left(
    \iota_{\mathrm{D}_k}(\mat{D}_{ij,})
    + \vect{w}_{ij}^{(l)} \abs{\mat{D}_{ij}^{(l, l)}}
\right) \\
&+ \lambda_2 \sum_{l = 1}^m \vect{w}_1^{(l)} \norm{\mat{U}_l}_1
+ \frac{\mu}{2} \sum_{l = 1}^m 
\vect{w}_2^{(l)} \norm{\mat{V}'_l}_F^2 \\
&+ \frac{\rho}{2} \sum_{l = 1}^m \left[
\norm{\mat{U}_l - \mat{V}_l - \mat{V}'_l + \mat{\Lambda}^{(1)}_{l}}_F^2
- \norm{\mat{\Lambda}^{(1)}_{l}}_F^2\right] \\
&+ \frac{\rho}{2} \sum_{(i, j) \in \mathcal{I}} \left[
\norm{\mat{Z}_{ij} - \mat{V}_i \mat{D}_{ij} \mat{V}_j^\top
+ \mat{\Lambda}^{(2)}_{ij}}_F^2
- \norm{\mat{\Lambda}^{(2)}_{ij}}_F^2\right].
\end{aligned}
\end{equation}
Multi-block ADMM with multi-affine constraints as described in \citet{Gao2020} finds a solution to \eqref{eq:m-admm:admm-opt-problem} by
(i) sequentially optimizing the augmented Lagrangian over the variables in \(\mat{\Theta}\),
(ii) simultaneously optimizing over the variables in \(\mat{\Delta}\), and by
(iii) updating the multipliers.
Denote \(L^t_\rho = \mathcal{L}_\rho(\mat{\Theta}^t, \mat{\Delta}^t, \mat{\Lambda}^t)\) where \(t\) indicates the iteration. The algorithm terminates when decrease in the augmented Lagrangian is below a threshold either in absolute terms \(\abs{L^t_\rho - L^{t-1}_\rho}\) or relative terms \(\abs{L^t_\rho - L^{t-1}_\rho} / \abs{L^{t-1}_\rho}\).
The necessary steps to perform the optimization are summarized in
Algorithm~\ref{alg:m-admm:admm-alg} and the solutions to the subproblems are
given in Table~\ref{tbl:m-admm:subproblems}.

\subsection{Convergence analysis}

The theory in \citet{Gao2020} was developed for optimization problems on \(\R^n\). In this work, some subproblems are constrained to an embedded manifold of \(\R^n\), namely the Stiefel and Oblique manifolds. We therefore extended the work of \citet{Gao2020} to proof that the convergence results hold even when embedded manifold set constraints are added to sequential subproblems. This holds as long as the subproblems can be solved to optimality over embedded manifolds that can be described by the level set of a smooth function. See the supplementary for a detailed description of our extension of \citet{Gao2020}.



\begin{theorem}\label{thm:solrcmf-convergence}
Choose
\begin{equation}
\rho > \max\left(2, \max\left(
2\frac{\mu \left(\max_l \bm{w}_2^{(l)}\right)^2}{\min_l \bm{w}_2^{(l)}},
\frac{1 + \mu \max_l \bm{w}_2^{(l)}}{2}
\left(1 + 2 \frac{\max_l \bm{w}_2^{(l)}}{\min_l \bm{w}_2^{(l)}}\right)^2
\right)\right).
\end{equation}
It then holds that Algorithm~\ref{alg:m-admm:admm-alg} applied to the augmented Lagrangian \(L_\rho(\mat{\Theta}, \mat{\Delta}, \mat{\Lambda})\) in Eq.~\eqref{eq:m-admm:admm-augmented-lagrangian} produces a sequence of iterates \((\mat{\Theta}^t, \mat{\Delta}^t, \mat{\Lambda}^t)\) that converges globally to a local minimum \((\mat{\Theta}^*, \mat{\Delta}^*, \mat{\Lambda}^*)\) of \eqref{eq:m-admm:admm-opt-problem}.
\end{theorem}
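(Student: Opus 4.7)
The plan is to invoke our extension of the multi-block ADMM convergence theory of \citet{Gao2020} to embedded manifold constraints (stated and proved in the supplement), applied to the reformulated problem \eqref{eq:m-admm:admm-opt-problem}. The argument proceeds in three steps: cast the problem into the abstract form of that framework, verify the structural assumptions, and show that the lower bound on $\rho$ in the statement is exactly what is required to drive a sufficient-descent inequality for the augmented Lagrangian.

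For the structural checks, I identify how each hypothesis is satisfied. The constraint map $A(\mat{\Theta}) + Q(\mat{\Delta}) = \mat{0}$ is multi-affine in the blocks of $\mat{\Theta}$ (the $\mat{V}_l$ appear bilinearly only through the products $\mat{V}_i \mat{D}_{ij} \mat{V}_j^\top$) and affine in $\mat{\Delta}$, with $Q_2$ the identity on the $\mat{Z}$-block, so the surjectivity requirement on the last constraint block holds. The objective splits into a smooth quadratic fit in $\mat{Z}$, a quadratic regularizer in $\mat{V}'$, and proximable $\ell_1$ terms in $\mat{U}$ and $\mat{D}$; each sequential subproblem is solved exactly via the closed-form expressions of Table~\ref{tbl:m-admm:subproblems}, namely a polar-decomposition projection onto $\Stiefel_{k, p_l}$ for $\mat{V}_l$, a soft-thresholded column-normalization onto $\Oblique_{k, p_l}$ for $\mat{U}_l$, and componentwise soft-thresholding for $\mat{D}_{ij}$. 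Both $\Stiefel_{k, p_l}$ and $\Oblique_{k, p_l}$ are smooth embedded submanifolds given as level sets of smooth defining equations, so the supplementary extension applies. Coercivity of $\mathcal{L}_\rho$ follows because the manifolds bound $\mat{V}$ and $\mat{U}$, the $\mu$-weighted Frobenius term bounds $\mat{V}'$, and the combination of the $\mat{Z}$-fit with the quadratic penalty bounds $\mat{Z}$ and subsequently the multipliers.

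The main obstacle is the sufficient-descent inequality: one must show that each complete sweep decreases $\mathcal{L}_\rho$ by at least a multiple of the squared change in $(\mat{\Theta}, \mat{\Delta})$. Because the multiplier update generates a dual residual proportional to the change in the block against which $Q$ acts as $-I$ (here $\mat{V}'$), the bookkeeping requires controlling the change in $\mat{V}'$ by the changes in $\mat{U}$ and $\mat{V}$, and then bounding those against the strong convexity contributed jointly by the $\mu$-term and the augmented-penalty parameter $\rho$. Unfolding this chain yields a quadratic inequality in $\rho$ whose positivity threshold is exactly the bound in the statement: the ratios $\max_l \bm{w}_2^{(l)} / \min_l \bm{w}_2^{(l)}$ measure how much the weights can amplify the coupling between $\mat{V}'$ and its neighbours, while the additive $\mu \max_l \bm{w}_2^{(l)}$ reflects the strong-convexity constant of the $\mat{V}'$ subproblem. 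The embedded-manifold extension enters only through the exactness of the $\mat{V}_l$ and $\mat{U}_l$ subproblems, which the closed-form projections provide; the quadratic estimates themselves are then identical to the Euclidean case.

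Once sufficient descent is established, boundedness of the iterates gives summability of squared successive differences, and hence vanishing primal and dual residuals. A standard limit-point argument then shows that any accumulation point $(\mat{\Theta}^*, \mat{\Delta}^*, \mat{\Lambda}^*)$ satisfies the KKT conditions of \eqref{eq:m-admm:admm-opt-problem}, and uniqueness of each block minimizer promotes subsequential convergence to global convergence of the whole sequence. Because the $\mat{V}_l$ and $\mat{U}_l$ subproblems are solved to optimality on their respective manifolds, the limit satisfies the Riemannian first-order optimality conditions on $\Stiefel_{k, p_l}$ and $\Oblique_{k, p_l}$ and is therefore a local minimum of \eqref{eq:m-admm:admm-opt-problem}, completing the proof.
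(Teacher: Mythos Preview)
Your overall strategy is the paper's: invoke the multi-block ADMM convergence theory of \citet{Gao2020}, extended in the supplement to allow embedded-manifold constraints on the sequential blocks, and check that problem \eqref{eq:m-admm:admm-opt-problem} meets the required structural hypotheses. The paper itself records only this one-line summary and defers the verification to the supplement, so in substance you are on the same track.

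Two points in your sketch need tightening. First, the step from subsequential to whole-sequence convergence in this framework is not driven by uniqueness of block minimizers; it comes from the Kurdyka--{\L}ojasiewicz property of the augmented Lagrangian, which holds here because every ingredient (quadratic fit, $\ell_1$-norms, Frobenius penalties, and indicator functions of the algebraically defined Stiefel and oblique manifolds) is semi-algebraic. Uniqueness of the $\mat{V}_l$ update is in fact not guaranteed---the Stiefel projection is non-unique when the argument is rank-deficient---so your stated mechanism cannot be the one doing the work. Second, satisfying the Riemannian first-order optimality conditions at the limit gives you a constrained KKT (stationary) point; it does not by itself certify a local minimum. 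The theorem's phrase ``local minimum'' should be read in that stationary-point sense, and your final sentence overstates what first-order optimality delivers. A minor additional point: your dual-residual bookkeeping mentions only the $\mat{V}'$ block, but the $\mat{Z}$ block also carries an identity map $Q_2$, and the simple condition $\rho > 2$ in the bound stems from the Lipschitz constant of the data-fit term governing that block.
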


\begin{proof}
The proof uses an extended form of the framework by \citet{Gao2020} and can be found in the Supplementary Material.
\end{proof}

Particularly remarkable is that global convergence to a local minimum can be achieved for any combination of hyperparameters \(\lambda_1, \lambda_2\), and \(\mu\) as long as \(\rho\) is chosen as in Theorem~\ref{thm:solrcmf-convergence}.

\begin{algorithm}[htb]
\caption{Multi-block ADMM algorithm.
\newline The notation \(\mat{\Theta}_{<}\) denotes all variables in 
\(\mat{\Theta}\) updated before \(\vect{\theta}\) and \(\mat{\Theta}_{>}\) all variables updated after \(\vect{\theta}\). The superscript indicates if the variables have been already updated in step \(t\) or old values from step \(t-1\) are used.
\(L_\rho^t\) and \(L_\rho^{t-1}\) indicate the value of the augmented Lagrangian
at the updated and previous iterates, respectively.}%
\label{alg:m-admm:admm-alg}
\begin{algorithmic}
\Require{Data matrices \(\mat{X}_{ij,\gamma} \in \mathbb{R}^{p_i \times p_j}\), regularization parameters \(\mat{\lambda} = (\lambda_1, \lambda_2) > \mat{0}\), \(\mu > 0\), step size parameter \(\rho > 0\), tolerances \(\epsilon_{\text{abs}}\) and \(\epsilon_{\text{rel}}\), maximum iterations \(T_{\max}\)}
\Initialize{Initial values for \(\mat{\Theta}\) and \(\mat{\Delta}\).
Set \(\mat{\Lambda} = \mat{0}\) and \(L_\rho^0 = \infty\).}
\Statex
\For{\(t \in \{1, \dots, T_{\max}\}\)}
\Comment{Main loop}
    \For{\(\mat{\theta} \in \mat{\Theta}\)}
    \Comment{Sequential updates}
        \Let{\(\mat{\theta}^t\)}{\(\displaystyle\argmin_{\mat{\theta} \in \mathcal{M}_\theta}
        \mathcal{L}_{\rho}((\mat{\Theta}_{<}^t, \mat{\theta}, \mat{\Theta}_{>}^{t - 1}),
            \mat{\Delta}^{t - 1}, \mat{\Lambda}^{t - 1}) + \frac{\alpha}{2}\norm{\mat{\theta} - \mat{\theta}^{t-1}}^2\)}
        \State{or}
        \Let{\(\mat{\theta}^t\)}{\(\displaystyle\argmin_{\mat{\theta}}
        \mathcal{L}_{\rho}((\mat{\Theta}_{<}^t, \mat{\theta}, \mat{\Theta}_{>}^{t - 1}),
            \mat{\Delta}^{t - 1}, \mat{\Lambda}^{t - 1})\)}
    \EndFor
    \Let{\(\mat{\Delta}^t\)}{\(\displaystyle\argmin_{\mat{\Delta}}
        \mathcal{L}_{\rho}(\mat{\Theta}^t, \mat{\Delta}, \mat{\Lambda}^{t - 1})\)}
    \Comment{Simultaneous update}
    \Let{\(\mat{\Lambda}^t\)}{\(\mat{\Lambda}^{t - 1} +
        A(\mat{\Theta}^{t + 1}) + Q(\mat{\Delta}^{t + 1})\)}
    \Comment{Lagrange multiplier update}
    \Let{\(L_\rho^t\)}{\(\mathcal{L}_\rho(\mat{\Theta}^t, \mat{\Delta}^t, \mat{\Lambda}^t)\)}
    \Comment{Check convergence}
    \If{\(\abs{L_{\rho}^t - L_{\rho}^{t - 1}} < \epsilon_{\text{abs}}\) or
    \(\abs{L_{\rho}^t - L_{\rho}^{t - 1}} / \abs{L_{\rho}^{t - 1}} < \epsilon_{\text{rel}}\)}
        \Break
    \EndIf
\EndFor
\Statex
\Return{final estimates for \(\mat{\Theta}\), \(\mat{\Delta}\), \(\mat{\Lambda}\)}
\end{algorithmic}
\end{algorithm}


\begin{table}[ht]
\caption{Solutions of the ADMM Subproblems. The superscript \(+\) indicates an updated iterate and the superscript \(-\) indicates the previous iterate.}%
\label{tbl:m-admm:subproblems}
\begin{tabular}{p{0.25\textwidth}p{0.7\textwidth}}
\toprule
Variable & Solution to subproblem \\
\midrule
\multicolumn{2}{l}{\emph{Solved sequentially}} \\
\(\mat{V}_i\) for \(i \in \mathcal{V}\) & Compute
\(\mat{M} = \sum_{(i, l) \in \mathcal{I}} 
(\mat{Z}_{il}^- + \mat{\Lambda}^{(2), -}_{il}) \mat{V}_l \mat{D}_{il}^-
+ \sum_{(l, i) \in \mathcal{I}} 
(\mat{Z}_{li}^- + \mat{\Lambda}^{(2), -}_{li})^\top \mat{V}_l \mat{D}_{li}^-
+ (\mat{U}_i^- - {\mat{V}'_i}^- + \mat{\Lambda}^{(1), -}_i) 
+ \frac{\alpha}{\rho} \mat{V}_i^-
\). Project \(\mat{M}\) onto Stiefel manifold \(\Stiefel_{k, p_l}\)
by obtaining the SVD \(\mat{M} = \mat{B}_1 \mat{\Pi} \mat{B}_2^\top\) and
setting \(\mat{V}_i^+ = \mat{B}_1 \mat{B}_2^\top\). \\
\(\mat{D}_{ij}\) for \((i, j) \in \mathcal{I}\) & Compute
\(\mat{M} = {\mat{V}_i^+}^\top (\mat{Z}_{ij}^- + \mat{\Lambda}^{(2), -}_{ij})
{\mat{V}_j^+}\). To obtain \(\mat{D}_{ij}^+\) set
\({\mat{D}_{ij}^+}^{(r, c)} = 0\) for \(r \neq c\) and
\({\mat{D}_{ij}^+}^{(l, l)} =
\mathrm{ST}_{\lambda_1 \vect{w}_{ij}^{(l)} / \rho}(\mat{M}^{(l, l)})\)
for \(l \in \{1, \dots, k\}\). \\
\(\mat{U}_i\) for \(i \in \mathcal{V}\) & Compute
\(\mat{M}_i = \mat{V}_i^+ + {\mat{V}'_i}^- - \mat{\Lambda}_i^{(1), -} 
+ \frac{\alpha}{\rho} \mat{U}_i^-
\). Denote
\(w = \lambda_2 \vect{w}_1^{(i)} / \rho\) and for each \(l \in \{1, \dots, k\}\) set
\(w'_l = \max_j \abs{\mat{M}_i^{(j, l)}}\). If \(w'_l > w\), then set
\({\mat{U}^{(:, l)}}^+ = \mathrm{ST}_w(\mat{M}_i^{(:, l)}) /
\norm{\mathrm{ST}_w(\mat{M}_i^{(:, l)})}_2\). If \(w'_l \leq w\), then find 
\(j_0 = \argmin_j -\abs{\mat{M}_i^{(j, l)}} + w\), and set
\({\mat{U}^{(:, l)}}^+ = \Sign(\mat{M}_i^{(j_0, l)}) \vect{e}_{j_0}\).
\\
\multicolumn{2}{l}{\emph{Solved simultaneously}} \\
\(\mat{V}'_i\) for \(i \in \mathcal{V}\) & Set
\({\mat{V}'_i}^+ =  \frac{\rho}{\rho + \mu \vect{w}_2^{(i)}}
(\mat{U}_i^+ - \mat{V}_i^+ + \mat{\Lambda}_i^{(1),-})\). \\
\(\mat{Z}_{ij}\) for \((i, j) \in \mathcal{I}\) & Set
\(\mat{Z}_{ij}^{(r, c),+} = \frac{\rho}{\rho + 1}
(\mat{V}_i^+ \mat{D}_{ij}^+ {\mat{V}_j^+}^\top
- \mat{\Lambda}_{ij}^{(2),-})^{(r, c)} 
+ \frac{1}{\rho + 1} \mat{X}_{ij}^{(r, c)}\)
if \(\mat{X}_{ij}^{(r, c)}\) observed and
\(\mat{Z}_{ij}^{(r, c),+} =
(\mat{V}_i^+ \mat{D}_{ij}^+ {\mat{V}_j^+}^\top
- \mat{\Lambda}_{ij}^{(2),-})^{(r, c)}\) otherwise. \\
\bottomrule
\end{tabular}
\end{table}


\subsection{Algorithm initialization}

Due to the non-convexity of the constraints in \eqref{eq:m-admm:admm-opt-problem} convergence to a global minimum is not guaranteed and the computed solution depends on the initial values.
Finding a useful initial state is therefore important.
The algorithm can either be initialized randomly or with a structured initial guess.
For the purposes of this work, we found that testing multiple random
starting points, estimating our model without penalization (i.e.\@ \(\lambda_1 = 0 = \lambda_2\)), and keeping the solution leading to the minimal value of the augmented Lagrangian in Eq.~\eqref{eq:m-admm:admm-augmented-lagrangian} worked well.
A description of our initialization strategy and a structured alternative for multiview data is described in the supplementary.
This initial estimate is then used as a starting point during hyperparameter selection. For practical application we recommend normalizing input data matrices \(\mat{X}_{ij}\) before initialization to Frobenius norm \(n_{ij} / (p_i p_j)\), where \(n_{ij}\) is the number of observed entries in \(\mat{X}_{ij}\). This ensures that singular values are between -1 and 1, which in particular implies that scales of variables during optimization do not vary severely from each other.

\subsection{Hyperparameter selection}%
\label{ssec:hyperparameter-selection}

Optimal hyperparameters are data dependent and therefore need to be selected for each new application of solrCMF. We perform hyperparameter selection in two sequential steps: First, pairs of hyperparameters are chosen randomly from a uniform distribution on a logarithmic scale, model parameters are estimated using all data for each pair of hyperparameters, and the zero patterns in solutions for \(\mat{D}_{ij, \gamma}\) and \(\mat{U}_l\) are recorded. Second, we use Wold-type \(K\)-fold cross-validation (CV) \citep{WoldCV}, where the observed entries in each data matrix are randomly divided into \(K\) group, to determine the best structure and consequentially best pair of hyperparameters. During parameter estimation in this second step, penalties are not used but zero patterns in \(\mat{D}_{ij, \gamma}\) and \(\mat{U}_l\) are fixed. Following the Wold-type CV strategy, one fold per data matrix is considered missing in addition to actual missing values in the data. The trained model is then used to predict the values in the left out folds. The predicted values are then compared to observed test values and a score is computed. We use negative Mean Squared Error (MSE) to compare models. To choose the final model, first the zero pattern leading to the highest negative MSE is determined. Then the solution is chosen that has the most sparsity and whose score is at most one standard error away from the one with the maximum negative MSE. Once the best zero pattern is determined, the model is either retrained on all data using with this particular zero pattern, or model parameters are re-computed on all data with penalties corresponding to the best solution. In this work we only followed the first strategy, but both are offered in the package.

The reason for this two-step hyperparameter selection method is as follows.
When penalties are active (i.e.\@ one or both of \(\lambda_1 > 0\) and \(\lambda_2 > 0\)) and their magnitudes increase, then singular value estimates in \(\mat{D}_{ij, \gamma}\) are increasingly shrunk towards zero and columns of \(\mat{V}_l\) will progressively approximate standard unit vectors. These effects have been widely observed in penalized regression \citep{StatLearningSparsity}. This can lead to models that capture the correct sparsity structure, however, due to potentially high penalization singular values and factors can be shrunk too much to predict left out test data appropriately. A remedy to this problem is to estimate a sequence of structures for a sequence of pairs of hyperparameters, fix the sparsity structures and re-estimate model parameters without penalties. A variant of this has been used for LARS (a regression framework encompassing the Lasso) and ordinary least squares \citep{Efron2004} and found to improve \(R^2\) in regression.
We chose Wold-type CV over other types of matrix cross-validation, such as BCV \citep{Owen2009}, due to its ability to deal with missing values in the data.
A similar \emph{structure-estimation first, model selection second} strategy is used by SLIDE \citep{Gaynanova2019}.

\section{Simulation results}%
\label{sec:simulation}

To evaluate the performance of solrCMF we performed two simulation studies and compared the algorithm to other methods in the field.

\subsection{Data integration and structure estimation}

In a first study, we compare solrCMF to (1) CMF \citep{Klami2014}, (2) MMPCA \citep{Kallus2019}, (3) SLIDE \citep{Gaynanova2019}, and (4) MOFA \citep{Argelaguet2018} on a complex data setup, which includes both augmented and tensor-like data (Figures~\ref{fig:data-layouts:augmented} and \ref{fig:data-layouts:tensor-like}).
Data was generated according to the model in Eq.~\eqref{eq:data-model} with the additional assumption of normally distributed noise. We used a setup with \(\mathcal{V} = \{1, 2, 3, 4\}\) and \(\mathcal{I} = \{(1, 2, 1), (1, 3, 1), (1, 3, 2), (4, 3, 1), (1, 4, 1)\}\).
Three indices are used to indicate replication of view relationships in the last index.
Figure~\ref{fig:graph-res:full-layout} illustrates the layout of this simulation.
Dimensions of views were chosen as \(p_1 = 50\), \(p_2 = 25\), \(p_3 = 35\), and \(p_4 = 30\).
Five orthogonal factors with 75\% sparsity are simulated for each view and the following singular values were used:
\begin{equation*}
\begin{array}{rclllll}
\mat{D}_{12,1} &=& \Diag(3,   & 3.5,  & 0,   & 0,   & 4), \\
\mat{D}_{13,1} &=& \Diag(2.5, & 2.75, & 3,   & 0,   & 0), \\
\mat{D}_{13,2} &=& \Diag(2.5, & 0,    & 3.5, & 3,   & 0), \\
\mat{D}_{43,1} &=& \Diag(3,   & 0,    & 4.5, & 3.5, & 0), \text{ and} \\
\mat{D}_{14,1} &=& \Diag(0,   & 0,    & 3.5, & 0,   & 4).
\end{array}
\end{equation*}
Many shared subspace relationships are implied by this layout. For example, there is a two-dimensional shared subspace between tensor layers (B and C) and a one-dimensional individual subspace in each. There is one dimensional subspace that is shared between tensor layers which is also shared with the side-information matrices on the left (A) and above (D). There is an augmented matrix (E) between views 1 and 4 which shares separate one-dimensional subspaces of variation with A and D.
To simulate sparse orthogonal factors we sample non-zero entries in random locations from a standard normal distribution and employ a sparsity-preserving variant of the Gram-Schmidt orthogonalization algorithm to ensure orthogonality. Details can be found in the Supplementary Material.
The residual noise variance for each matrix was determined by assuming that the signal-to-noise ratio \(\norm{\mat{V}_i \mat{D}_{ij,\gamma} \mat{V}_j^\top}_F^2 / \mathbb{E}(\norm{\mat{E}_{ij,\gamma}}_F^2) = 0.5\), where \(\mathbb{E}(\norm{\mat{E}_{ij,\gamma}}_F^2) = \sigma^2 p_i p_j\) assuming that residuals are normally distributed with zero mean and variance \(\sigma^2\). Simulations were repeated 250 times.
Each data matrix was preprocessed by bicentering each matrix using the method described in \citet{Hastie2015}. After centering, each matrix was scaled to have Frobenius 1.

\begin{figure}[ht]
\centering
\input{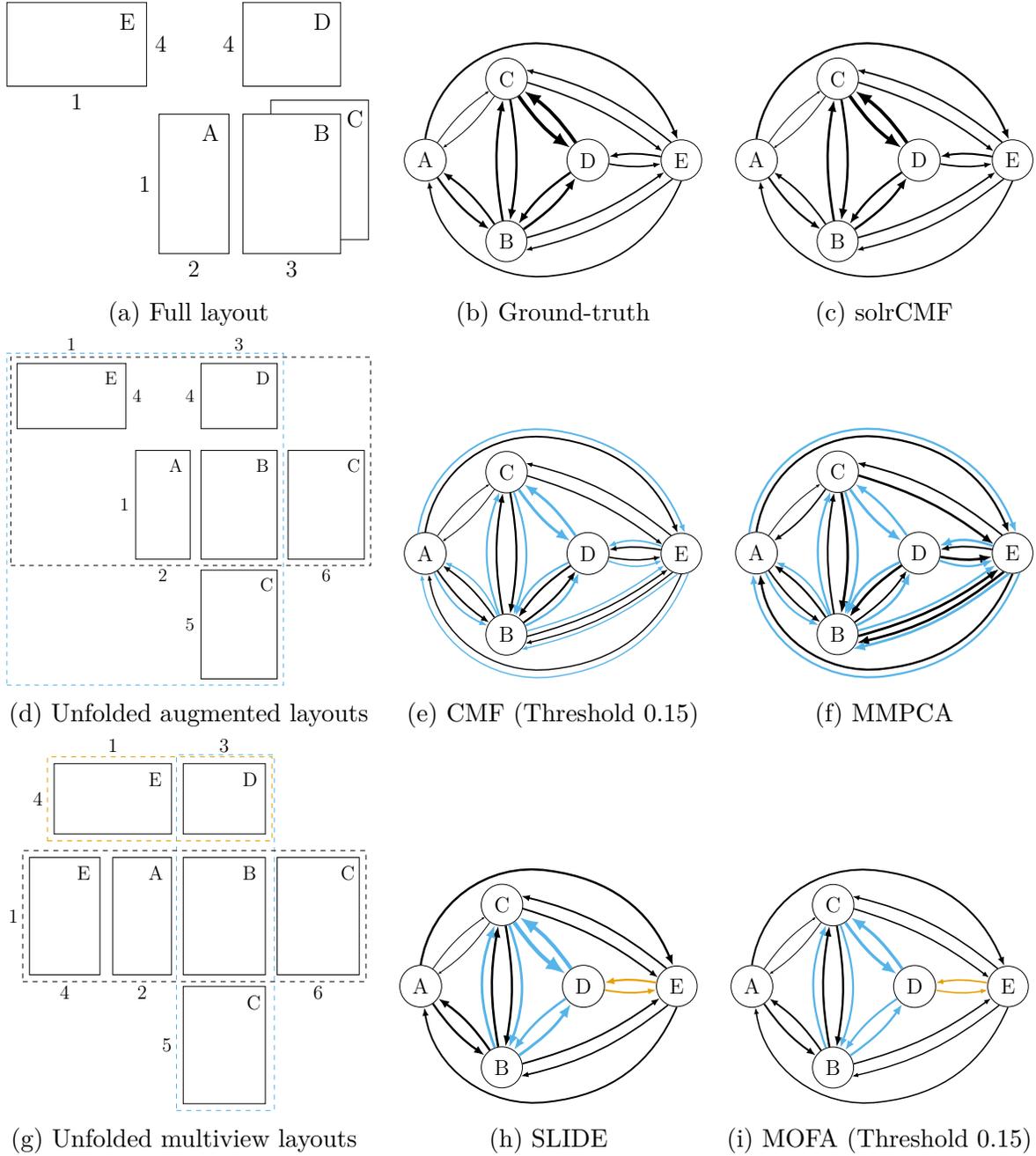}
\caption{
Directed \(R^2\) between matrices predicted by multiple methods and the layouts used to estimate all parts of the graph.
(a) shows the full layout used for simulation and estimation in solrCMF. (d) and (g) show unfolded layouts and dashed lines indicate different scenarios where it was not possible to estimate the full model at once. Descriptions of the unfolding strategies are given in the text.
Colors orange, blue, and black indicate results from different scenarios in all figures.
Matrices are represented by nodes in (b), (c), (e), (f), (h), and (i) and arrows indicate that a non-zero directed \(R^2\) was found between matrices. The size of the arrows is proportional to the magnitude of average directed \(R^2\) across 250 simulation runs, and arrows are pointing from the predictor matrix to the dependent matrix.}
\label{fig:graph-res}
\end{figure}

The four competing methods are divided into two groups: CMF and MMPCA, which can be applied to augmented layouts but not tensor-like data, and SLIDE and MOFA, which can only be applied to multi-view layouts.
To be able to use competitor methods on the complex data layout in Figure~\ref{fig:graph-res:full-layout}, we had to rearrange the data.

For CMF and MMPCA, the second layer of the tensor (C) was placed next to the first layer (B), once vertically, by columns, and once horizontally, by rows. The resulting augmented layouts are therefore called \textbf{horizontal} and \textbf{vertical} in the following. See Figure~\ref{fig:graph-res:unrolled-augmented-layout} for an illustration.
Changing the layout this way requires the introduction of new views making the unfolded layout not equivalent to the original one. Conceptually, view 1 and 5, as well as view 3 and 6 can respectively be seen as independent copies of each other.

To adapt the data for SLIDE and MOFA, we divided the original data into three partially overlapping layouts.
\begin{enumerate}
\item \textbf{Horizontal multi-view layout}: The tensor is unfolded by rows and the augmented matrix (E) is placed such that its column view matches the common row view 1. An additional view 6 needs to be introduced as an independent copy of view 3.
\item \textbf{Vertical multi-view layout}: The tensor is unfolded by columns and view 3 is regarded as the common view. An additional view 5 needs to be introduced as an independent copy of view 1.
\item \textbf{Upper multi-view layout}: Includes only the augmented matrix (E) along the common view 4 together with its match (D).
\end{enumerate}
See Figure~\ref{fig:graph-res:unrolled-multiview-layout} for an illustration of the layouts..

Whenever possible, all methods were run with default parameters.
For solrCMF a maximum rank of 10 was requested and 100 pairs of hyperparameters were chosen on a log-scale between 0.05 and 1. \(K = 10\) folds were used for CV.
CMF was run without bias terms and hyperparameters for the ARD prior were set to \(\alpha = 1\) and \(\beta = 0.001\) since factor selection performed poorly otherwise. 25000 iterations were performed and the maximum rank was set to 10.
MMPCA is setup to only use the data integration and sparsity penalties, maximum rank was set to 10, and 100 pairs of hyperparameters were chosen in the same way as for solrCMF.
SLIDE is run using default parameters only.
MOFA is run with ARD and spike-and-slab priors for both factors and weights, the maximum rank of the solution is set to 10, and MOFA is allowed to scale each data matrix by its standard deviation. This is the default preprocessing for MOFA and performance was poor without allowing this.

Directed \(R^2\) was used as a metric to compare methods due to its capability of quantifying shared variation between two matrices in a single number. Directed graphs can be constructed with data matrices as nodes by adding edges from the predictor to dependent matrix whenever directed \(R^2\) between them is non-zero. This is exemplified for the ground-truth of this simulation in Figure~\ref{fig:graph-res:truth}. Groundtruth values for directed \(R^2\) were computed using the regression formulation in Eq.~\eqref{eq:directed-r2} for each simulated signal. Visual summaries of the directed \(R^2\) graphs estimated by each method are shown in Figures~\ref{fig:graph-res}.

SLIDE and MMPCA return binary decisions on which factors are included and the regression formulation in Eq.~\eqref{eq:directed-r2} can readily be applied to returned estimates.
CMF and MOFA return expected values of parameter estimates which are continuous. This means in practice that exact zeros are observed seldom and usually all factors are active. However, due to the ARD prior, the contribution of non-active priors is shrunk towards zero. MOFA does remove factors that contribute very little across all data matrices, but does not explicitly predict which factors appear only in a subset of matrices.
To determine whether a factor is active in a signal matrix, we need to decide on a threshold applied to factor scale. Making manual inclusion/exclusion decisions on a per-factor basis is infeasible due to the large number of matrices and repetitions. Instead, for each unfolded layout we investigate a histogram of factor scales across all matrices, factors, and repeated simulation runs. We looked for a clear division between very small and larger values. For our specific example, a threshold of 0.15 was found to be suitable for both methods. See Figure 1 in the Supplementary Material for additional information. Note that in practice typically no repeated runs are available and the number of factors to use to decide on a possible threshold will be much smaller.


Numerical results are collected in Table~\ref{tbl:directed-r2} for each pair of connected data matrices. Expected directed \(R^2\) for the groundtruth is shown in rows marked with dependent and predictor matrices. Mean deviations from the groundtruth of each sample are computed in comparison to the directed \(R^2\) estimated for each method. We then use paired Wilcoxon signed rank tests \citep{Wilcoxon1945} to determine whether the differences between groundtruth and predicted directed \(R^2\) for solrCMF are significantly different than those for other methods and layouts. We show results of the test by indicating three different thresholds for statistical significance in the table.

The overall root mean squared deviation (RMSE) in Table~\ref{tbl:directed-r2} for each method shows that solrCMF performs overall best, with MOFA following not far after, however results vary depending on the layout under consideration. Overall, layouts including more data matrices resulted in better overall RMSE for MOFA. Investigating individual links solrCMF often either performs best or at least appears among the group of best-performing methods. In addition, in cases where solrCMF does not achieve the best mean deviation, its standard deviation is lower and therefore predicts results more consistently.


Table 1 in the Supplementary Material shows examples of estimated rank for matrices A, B, as well as shared signal between them. solrCMF and CMF behave similarly by occasionally over- and underestimating the rank but for a large proportion of runs the ranks were estimated correctly. SLIDE has a tendency to underestimate rank but picks the right rank for matrices A and B in most cases. Shared rank is underestimated more often. MOFA overestimates rank in almost all cases. MMPCA exhibits a wide range of estimated ranks, both largely over- and underestimated, but estimates the correct ranks in at least 50\% of all runs.

CMF and MOFA consistently underestimated directed \(R^2\) whereas SLIDE mostly overestimated the ground-truth value. We investigated the results for MOFA and CMF and found that the chosen threshold is not responsible for the underestimation, with results only changing minimally for lower thresholds. However, reducing the threshold too much and including all factors consistently leads to non-informative estimates of directed \(R^2\) since \(V_{ik \rightarrow ij} = V_{ij}\) for all predictors \(\mat{X}_{ik}\). Estimated ranks could be considered the culprit, with underestimated ranks leading to underestimated directed \(R^2\) and vice-versa. However, as is discussed above, CMF largely estimates the right ranks and MOFA actually overestimates ranks, whereas SLIDE tends to underestimate ranks. A possible explanation can be found in estimated factor scale. Table 2 in the Supplementary Material shows examples for matrices A and B. solrCMF estimates factor scale very close to the groundtruth, with the exception of Factor 3 in matrix A. CMF and MOFA both underestimate the groundtruth whereas SLIDE tends to overestimate it. This is likely the origin of the under-/overestimation of directed \(R^2\).
solrCMF as well as and MMPCA occasionally over and occasionally underestimate the directed \(R^2\) which is to be expected. However, MMPCA's large variation in estimated rank and therefore signal quality is likely the cause of the high standard deviation in directed \(R^2\) estimates which can be seen in Table~\ref{tbl:directed-r2}.


\subsection{Recovering factor sparsity}

\begin{figure}[hbt]
    \centering
    \includegraphics[width=\textwidth]{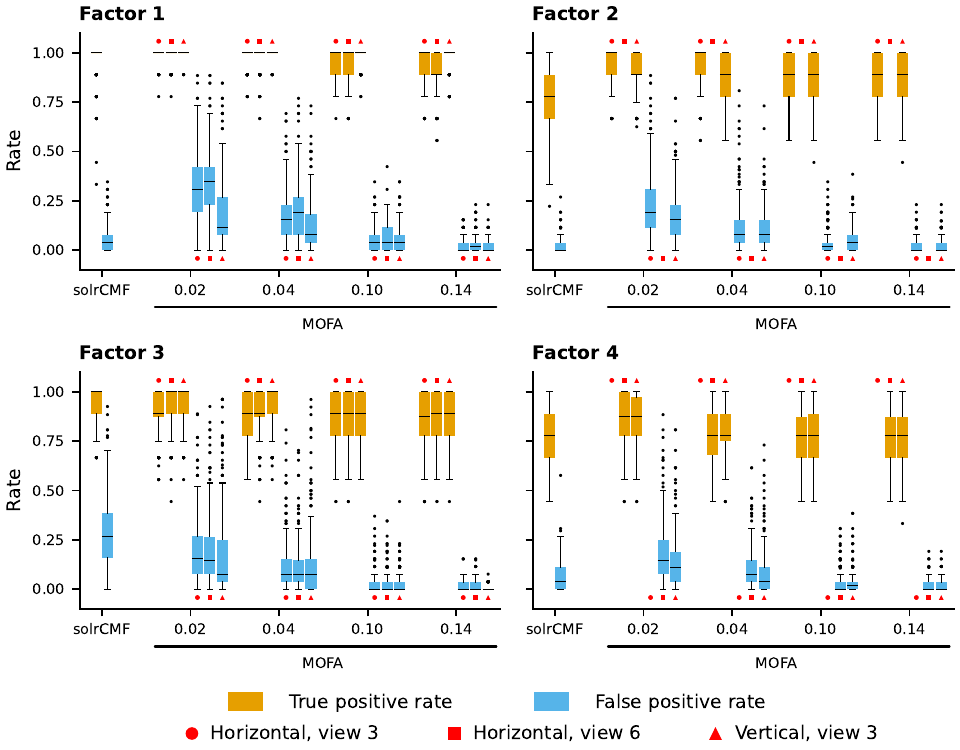}
    \caption{Sparsity patterns of factors in view 3 (and view 6 in the horizontal multi-view layout) in Figure~\ref{fig:graph-res:full-layout} are compared to groundtruth. True positive rate and false positive rate of estimating the correct non-zero entries are computed for each factor and shown as boxplots. For MOFA, view 3 appears also as view 6 in the horizontal layout and appears again as view 3 in the vertical layout. Results for all layouts are included. Thresholds used to determine sparsity patterns for factor vectors in MOFA are indicated on the horizontal axis. Note that Factor 2 and 4 are not estimable in view 6 and 3, respectively, due to the unfolding of tensor layers into a multi-view layout.}
    \label{fig:fpr-tpr-varied-signal}
\end{figure}

Estimation of factor sparsity is evaluated on results from Simulation 1. In particular, we focused on comparing results for solrCMF and MOFA. View 3 was selected since it appears three times in the unfolding for MOFA. Once as-is in the horizontal layout, once as view 6, a copy of view 3, again in the horizontal layout, and finally as the common column view in the vertical layout (see Figure~\ref{fig:graph-res:unrolled-multiview-layout}). Results for all three cases were considered.

As is the case during structure estimation, MOFA does not return estimates of factors containing exact zeros. Therefore, thresholding was required once again to evaluate factor sparsity. Since it is not clear a-priori which thresholds are suitable we test multiple. We decided on possible thresholds 0.02, 0.04, 0.1, and 0.14. All of these seem plausible without any obvious candidate when investigating results from representative simulation runs (see Figure 2 in the Supplementary Material for an example). Factors were then thresholded and the zero/non-zero pattern of the factors was extracted. solrCMF provides sparse estimates of factors directly, stored in matrix \(\mat{U}_3\), and no thresholding is necessary. We then compared sparsity patterns to the underlying groundtruth and evaluate results by computing true positive rates (TPR) and false positive rates (FPR).

For both solrCMF and MOFA, factors are returned in arbitrary order and need to be matched against groundtruth. To do so, the scalar product between estimated factors and groundtruth factors was computed and a match was considered when its value was at least 0.75 in absolute value.

Investigating results in Figure~\ref{fig:fpr-tpr-varied-signal}, we see that the performance of MOFA, as expected, depends on the choice of threshold, as well as on the choice of unfolding. Differences in unfoldings are most apparent in the FPR for small thresholds, suggesting that shrinkage has been applied in varying proportions dependent on the data. Note how not all factors are estimable in all views. However, this is a technical artifact due to the unfolding. 
Unsurprisingly, lower thresholds lead to a higher TPR as well as a higher FPR, since more entries are left untouched. Increasing the threshold removes spurious false positives, but eventually also decreases performance with respect to the TPR, setting true non-zero entries to zero.

solrCMF consistently performs comparably to MOFA. All factors are estimable and except for the case of Factor 3 the FPR is low. For Factor 2, solrCMF tends to discover fewer TPs than MOFA, however, FPR is almost always 0, which means that almost all of the discovered non-zero entries are correct. In case of Factor 4, MOFA can achieve a better TPR than solrCMF, however, this is dependent on the chosen threshold and leads to an increase in FPR as well.
If binary decisions on factor entries are required, solrCMF can provide a more comprehensive answer than MOFA by performing variable selection within factors during estimation. This, however, comes at the expense of flexibility by possibly discarding entries that could have been relevant.

\subsection{Identification of factors}

As mentioned during the discussion of model identifiability in Section~\ref{ssec:model-identification}, subspace rotations can occur when singular values are close. In the Supplementary Material of \citet{Argelaguet2020} it is mentioned that sparsity in factors can help identify factor rotation. In addition, we explore the impact of side information on identification of factors.
To do so, we use a second simulation setup with \(\mathcal{V} = \{1, 2, 3, 4\}\) and \(\mathcal{I} = \{(1, 2, 1), (1, 3, 1), (1, 3, 2), (4, 3, 1)\}\). Dimensions of views were chosen as \(p_1 = 100\), \(p_2 = 50\), \(p_3 = 100\), and \(p_4 = 50\)
Five factors are simulated overall with the following factor scales
\begin{equation*}
\begin{array}{rclllll}
\mat{D}_{12,1}&=&\Diag(0,   & 3.5,  & 0,    & 0, & 4), \\
\mat{D}_{13,1}&=&\Diag(3.1, & 3.15, & 3.05, & 0, & 0), \\
\mat{D}_{13,2}&=&\Diag(3.1, & 3.15, & 3.05, & 0, & 0), \text{ and} \\
\mat{D}_{43,1}&=&\Diag(0,   & 0,    & 3.5,  & 4, & 0).
\end{array}
\end{equation*}
In contrast to Simulation 1, orthogonal factors with 0\%, 25\%, 50\%, and 75\% sparsity were simulated.
Residual noise was simulated as in Simulation 1. The simulation was repeated 50 times for each sparsity level.
Note that singular values are very close in \(\mat{D}_{13,1}\) and \(\mat{D}_{13,2}\) such that, after adding noise, it is likely for factors to be unidentifiable.
The layout above is L-shaped with two-layers in the center. Since MOFA cannot be applied to tensor-like data, we unfold the tensor by columns and apply MOFA vertically with common view 3. This means that \(\mat{X}_{12, 1}\) is not part of the estimation.
solrCMF and MOFA are run with the same settings as for Simulation 1. For the simulation with 0\% sparsity, spike-and-slab priors are turned off for MOFA and solrCMF is setup to estimate structure only.


\begin{figure}[ht]
    \centering
    \includegraphics[trim={1.9cm 0 1cm 0},clip,width=\textwidth]{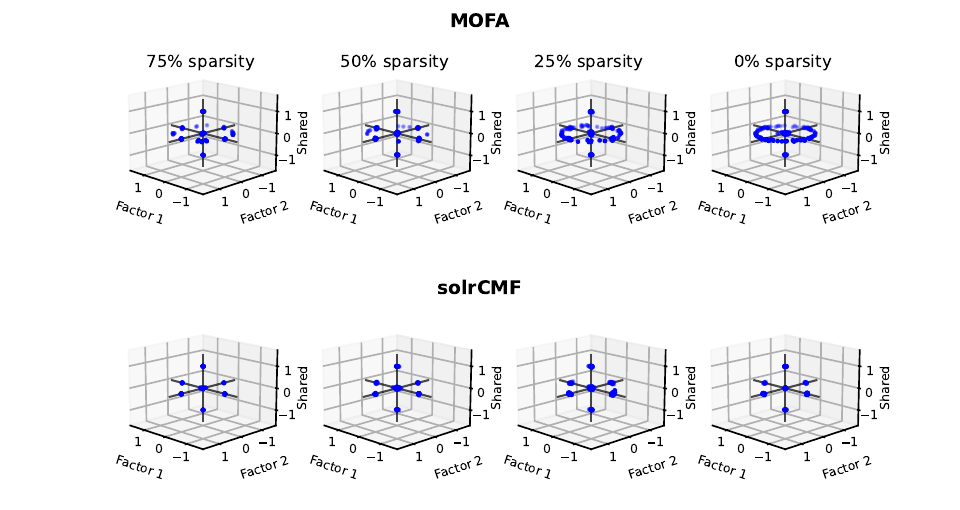}
    \caption{Point clouds of scalar products between estimated and groundtruth factors for view 3 in Simulation 2 are shown. Unrelated pairs of factors result in points around the origin. Axes are shown in black to indicate where factors are correctly identified.}
    \label{fig:factor-identification}
\end{figure}

The similarity between estimated factors and groundtruth for view 3 is then investigated. Here, we did not perform any matching as was done for the factor sparsity result above. Instead, scalar products between all estimated normalized factors and all groundtruth factors were computed and are shown in Figure~\ref{fig:factor-identification}. Unrelated factors appear as points around the origin. If a factor can be identified, then scalar products become close to -1 or 1 along the respective axis. If factor rotation occurs, then the estimated factor will be an approximate linear combination of two or more groundtruth factors. Due to normalization, scalar products are then expected to assume any value on a sphere within the unidentifiable subspace.

It can be seen in Figure~\ref{fig:factor-identification}, particularly for the 0\% sparsity case, that factor rotation occurs. Side information clearly identifies the vertical factor for both methods. The results for MOFA show that increasing sparsity in the groundtruth reduces factor rotation and helps to identify factors 1 and 2. However, even in case of 75\% sparsity there is no guarantee that rotations cannot occur. The results for solrCMF show that additional side information can help to fully identify the rotated factors, even in the case without sparsity. Side information is therefore a much stronger tool in identifying factors and emphasizes the importance of being able to integrate flexible layouts.

\begin{landscape}
\fontsize{10}{12}\selectfont
\begin{table}
\captionsetup{font=scriptsize}
\caption{Numerical comparison of directed \(R^2\) for all pairs of connected matrices in the full layout. The expected groundtruth is shown in absolute value and for each method the mean difference and standard deviation of the estimated directed \(R^2\) and the groundtruth is shown for each method. Positive numbers indicate an overestimation of the directed \(R^2\) and negative numbers indicate an underestimation. Results for competitor methods are shown separately for each unfolding.
Statistical significance levels for the paired Wilcoxon signed rank test between solrCMF and other methods are marked with each entry (* \(< 0.05\), ** \(< 0.01\), *** \(< 0.001\)). Table entries without any asterisk had a significance level \(\geq 0.05\).
The method achieving the smallest mean deviation in absolute value within each row is marked in bold font and color.
\textcolor{uogblue}{Blue} results indicate that either solrCMF achieved the best result and mean deviations are significantly different from all other methods, or, a method with a significantly different mean difference from solrCMF achieved the best result.
In rows containing \textcolor{uoggreen}{green} and \textcolor{uogred}{orange} entries, the best result was not significantly different from the result achieved by solrCMF, or despite solrCMF achieving the best result there were other methods achieving results of similar quality. The overall best entry is marked in green and whereas entries that are potential candidates for solutions are marked in orange.
}
\label{tbl:directed-r2}
\begin{adjustbox}{max width=\linewidth}
\sisetup{detect-weight, table-format=-2.2, mode=text}
\begin{tabular}{llS*{11}{Sr}}
\toprule
& & & \multicolumn{22}{c}{Difference between Expected and Estimate in \% (Mean, Standard Deviation, Significance level)} \\
\cmidrule(lr){4-25}
{Dependent} & {Predictor} & {Expected (\%)} & \multicolumn{2}{c}{Our method} & \multicolumn{4}{c}{CMF} & \multicolumn{4}{c}{MMPCA} & \multicolumn{6}{c}{SLIDE} & \multicolumn{6}{c}{MOFA} \\
\cmidrule(lr){6-9} \cmidrule(lr){10-13} \cmidrule(lr){14-19} \cmidrule(lr){20-25}
&&&&& \multicolumn{2}{c}{Horizontal} & \multicolumn{2}{c}{Vertical} & \multicolumn{2}{c}{Horizontal} & \multicolumn{2}{c}{Vertical} & \multicolumn{2}{c}{Upper} & \multicolumn{2}{c}{Middle} & \multicolumn{2}{c}{Vertical} & \multicolumn{2}{c}{Upper} & \multicolumn{2}{c}{Middle} & \multicolumn{2}{c}{Vertical} \\
\midrule
A & A & 33.46 & \B -2.42 & (4.8) & -6.65 & (3.0, ***) & -7.10 & (3.0, ***) & -8.52 & (6.6, ***) & -11.22 & (7.8, ***) & \multicolumn{2}{c}{---} & 7.36 & (5.4, ***) & \multicolumn{2}{c}{---} & \multicolumn{2}{c}{---} & -3.67 & (2.9, ***) & \multicolumn{2}{c}{---} \\
& B & 19.26 & -3.22 & (4.5) & -4.11 & (2.2, ***) & -4.86 & (2.4, ***) & \B -0.73 & (8.8, ***) & -2.03 & (9.8, *) & \multicolumn{2}{c}{---} & 2.33 & (6.0) & \multicolumn{2}{c}{---} & \multicolumn{2}{c}{---} & -2.16 & (2.1, ***) & \multicolumn{2}{c}{---} \\
& C &  8.21 & -2.75 & (3.5) & -1.60 & (1.6, ***) & \multicolumn{2}{c}{---} & 6.75 & (9.5, ***) & \multicolumn{2}{c}{---} & \multicolumn{2}{c}{---} & -2.09 & (5.6) & \multicolumn{2}{c}{---} & \multicolumn{2}{c}{---} & \B -0.74 & (1.5, ***) & \multicolumn{2}{c}{---} \\
& E & 14.39 & \B 0.60 & (2.2) & -2.35 & (1.9, ***) & -2.35 & (1.9, ***) & 6.06 & (7.3, ***) & 5.31 & (7.7, ***) & \multicolumn{2}{c}{---} & 2.43 & (2.0, ***) & \multicolumn{2}{c}{---} & \multicolumn{2}{c}{---} & -1.40 & (1.8, ***) & \multicolumn{2}{c}{---} \\
\midrule
B & A & 20.29 & \E -1.37 & (4.4) & -3.77 & (2.0, ***) & -4.35 & (2.4, ***) & -3.49 & (8.7, ***) & \E -1.88 & (10.3) & \multicolumn{2}{c}{---} & \BE 0.53 & (5.9) & \multicolumn{2}{c}{---} & \multicolumn{2}{c}{---} & -2.21 & (1.9, ***) & \multicolumn{2}{c}{---} \\
& B & 33.42 & \B 0.61 & (3.0) & -5.24 & (2.3, ***) & -5.49 & (2.4, ***) & -5.13 & (6.6, ***) & -6.44 & (8.6, ***) & \multicolumn{2}{c}{---} & 4.96 & (3.9, ***) & 7.31 & (3.4, ***) & \multicolumn{2}{c}{---} & -3.44 & (2.4, ***) & -4.33 & (2.5, ***) \\
& C & 22.34 & \BE 0.59 & (2.1) & -3.02 & (1.9, ***) & -3.24 & (2.1, ***) & 2.80 & (7.2, ***) & \E 0.81 & (8.0) & \multicolumn{2}{c}{---} & \E -0.88 & (5.8) & 2.28 & (5.0, ***) & \multicolumn{2}{c}{---} & -2.30 & (1.9, ***) & -2.84 & (1.9, ***) \\
& D & 22.37 & \E -0.98 & (4.2) & -3.34 & (2.3, ***) & -3.48 & (2.3, ***) & \E 1.80 & (7.6) & 2.19 & (8.6, **) & \multicolumn{2}{c}{---} & \multicolumn{2}{c}{---} & \BE 0.81 & (5.7) & \multicolumn{2}{c}{---} & \multicolumn{2}{c}{---} & -2.84 & (2.0, ***) \\
& E & 13.25 & \B 0.44 & (1.4) & -1.34 & (1.3, ***) & -1.31 & (1.3, ***) & 10.92 & (7.6, ***) & 11.31 & (8.6, ***) & \multicolumn{2}{c}{---} & 1.73 & (1.4, ***) & \multicolumn{2}{c}{---} & \multicolumn{2}{c}{---} & -1.10 & (1.4, ***) & \multicolumn{2}{c}{---} \\
\midrule
C & A &  7.65 & -0.88 & (3.3) & -1.55 & (1.3, ***) & \multicolumn{2}{c}{---} & 0.92 & (6.5) & \multicolumn{2}{c}{---} & \multicolumn{2}{c}{---} & -2.11 & (5.1, *) & \multicolumn{2}{c}{---} & \multicolumn{2}{c}{---} & \B -0.82 & (1.2, ***) & \multicolumn{2}{c}{---} \\
& B & 22.44 & 0.95 & (2.0) & -3.72 & (1.9, ***) & -4.15 & (2.0, ***) & -5.03 & (5.3, ***) & -1.87 & (7.4) & \multicolumn{2}{c}{---} & \B 0.29 & (4.8, **) & 2.99 & (4.6, ***) & \multicolumn{2}{c}{---} & -1.99 & (1.8, ***) & -2.47 & (1.9, ***) \\
& C & 33.32 & \B 0.84 & (3.4) & -6.49 & (2.5, ***) & -6.57 & (2.5, ***) & -14.65 & (6.0, ***) & -10.02 & (7.2, ***) & \multicolumn{2}{c}{---} & 4.09 & (5.0, ***) & 6.16 & (4.2, ***) & \multicolumn{2}{c}{---} & -3.47 & (2.5, ***) & -3.92 & (2.5, ***) \\
& D & 33.32 & \B -0.54 & (4.7) & \multicolumn{2}{c}{---} & -6.81 & (2.7, ***) & \multicolumn{2}{c}{---} & -10.19 & (7.3, ***) & \multicolumn{2}{c}{---} & \multicolumn{2}{c}{---} & 3.07 & (6.0, ***) & \multicolumn{2}{c}{---} & \multicolumn{2}{c}{---} & -3.93 & (2.5, ***) \\
& E & 14.88 & \BE 0.73 & (1.5) & -1.83 & (1.5, ***) & \multicolumn{2}{c}{---} & \E 1.67 & (5.6) & \multicolumn{2}{c}{---} & \multicolumn{2}{c}{---} & 1.98 & (1.5, ***) & \multicolumn{2}{c}{---} & \multicolumn{2}{c}{---} & \E -1.02 & (1.5) & \multicolumn{2}{c}{---} \\
\midrule
D & B & 23.68 & \E -1.21 & (3.9) & -4.71 & (2.7, ***) & -3.41 & (2.6, ***) & -6.85 & (5.3, ***) & \BE 0.12 & (7.2) & \multicolumn{2}{c}{---} & \multicolumn{2}{c}{---} & 3.23 & (5.6, ***) & \multicolumn{2}{c}{---} & \multicolumn{2}{c}{---} & -2.57 & (2.8, ***) \\
& C & 33.48 & \B -1.36 & (4.7) & \multicolumn{2}{c}{---} & -5.84 & (3.3, ***) & \multicolumn{2}{c}{---} & -7.20 & (7.2, ***) & \multicolumn{2}{c}{---} & \multicolumn{2}{c}{---} & 5.76 & (6.5, ***) & \multicolumn{2}{c}{---} & \multicolumn{2}{c}{---} & -3.93 & (3.3, ***) \\
& D & 33.48 & \B -1.29 & (4.6) & -10.85 & (5.0, ***) & -5.81 & (3.3, ***) & -14.91 & (5.8, ***) & -6.97 & (7.2, ***) & 3.03 & (12.0, **) & \multicolumn{2}{c}{---} & 8.37 & (6.5, ***) & -6.14 & (4.0, ***) & \multicolumn{2}{c}{---} & -3.85 & (3.3, ***) \\
& E & 16.46 & 1.46 & (2.3) & -1.78 & (2.1) & -0.96 & (2.1) & \B 0.37 & (5.2, ***) & 7.33 & (7.2, ***) & 2.57 & (4.8, ***) & \multicolumn{2}{c}{---} & \multicolumn{2}{c}{---} & -1.63 & (2.3) & \multicolumn{2}{c}{---} & \multicolumn{2}{c}{---} \\
\midrule
E & A & 18.84 & 1.21 & (2.3) & -2.46 & (2.0, ***) & -2.25 & (1.9, ***) & \B 0.88 & (7.2, **) & 1.53 & (7.7) & \multicolumn{2}{c}{---} & 2.45 & (2.0, ***) & \multicolumn{2}{c}{---} & \multicolumn{2}{c}{---} & -1.37 & (1.9) & \multicolumn{2}{c}{---} \\
& B & 14.62 & \BE 0.71 & (1.7) & -1.17 & (1.5, *) & -1.57 & (1.6, ***) & 9.60 & (8.9, ***) & 6.14 & (9.5, ***) & \multicolumn{2}{c}{---} & 2.04 & (1.6, ***) & \multicolumn{2}{c}{---} & \multicolumn{2}{c}{---} & \E -1.03 & (1.6) & \multicolumn{2}{c}{---} \\
& C & 14.58 & \BE 0.75 & (1.7) & -1.19 & (1.5, *) & \multicolumn{2}{c}{---} & 8.27 & (8.6, ***) & \multicolumn{2}{c}{---} & \multicolumn{2}{c}{---} & 2.08 & (1.6, ***) & \multicolumn{2}{c}{---} & \multicolumn{2}{c}{---} & \E -1.02 & (1.6) & \multicolumn{2}{c}{---} \\
& D & 14.71 & \B 0.62 & (1.7) & -1.09 & (1.6, *) & -1.49 & (1.6, ***) & 9.51 & (8.9, ***) & 6.05 & (9.5, ***) & 1.72 & (4.0, ***) & \multicolumn{2}{c}{---} & \multicolumn{2}{c}{---} & -1.39 & (1.7, ***) & \multicolumn{2}{c}{---} & \multicolumn{2}{c}{---} \\
& E & 33.27 & 2.04 & (2.7) & -4.13 & (2.6, ***) & -4.54 & (2.6, ***) & \B -1.01 & (4.2, **) & -6.43 & (6.2, ***) & 6.38 & (2.8, ***) & 5.16 & (2.9, ***) & \multicolumn{2}{c}{---} & -3.49 & (2.6, ***) & -2.64 & (2.6) & \multicolumn{2}{c}{---} \\
\bottomrule
\multicolumn{3}{l}{Root mean-squared deviation} & 1.40 & & 4.17 & & 4.42 & & 7.20 & & 6.56 & & 3.85 & & 3.21 & & 5.05 & & 3.69 & & 2.13 & & 3.47 \\

\end{tabular}
\end{adjustbox}
\end{table}
\end{landscape}

\clearpage

\section{Discussion}
\label{sec:discussion}

In this paper, we present a novel data integration method, solrCMF, capable of integrating and partitioning signal in a collection of data sources into subspaces of individual variation as well as variation shared, globally or restricted to a subset, with other data sources. Our method allows flexible integration of many data source layouts encompassing multi-view, multi-grid, augmented, and even tensor-like layouts. Interpretable factors are obtained using sparsity-inducing penalties during optimization. The resulting parameter estimation problem is challenging and we present an extension of multi-block ADMM with affine constraints to accommodate embedded manifold constraints. This extension can be be useful for the solution of related optimization problems in the future.

We demonstrate through simulation studies that solrCMF performs well in terms of rank and factor scale estimation in comparison to other methods. Being an optimization-based method and therefore committing to a point estimate, solrCMF tends to find fewer but more relevant entries in sparse factors in some cases and performs comparably with existing methods.

Factor sparsity and side information were both found to be helpful in identifying factors and avoiding rotational invariance. However, side information can improve identifiability even in situations where no sparsity is present. This emphasizes the importance of supporting flexible data layouts such that inclusion of side information is possible.

Requiring orthogonality between factors in every view restricts the range of matrices that can be represented by the solrCMF model. An argument for relaxing this assumption has been made in the case of an L-shaped layout with one central matrix \(\mat{X}_{12}\) and two matrices providing side information \(\mat{X}_{13}\) and \(\mat{X}_{24}\) with all matrices modeled according to \(\mat{X}_{ij} = \mat{V}_i \mat{D}_{ij} \mat{V}_j^\top + \mat{E}_{ij}\).
\citet{OConnell2019} show that in order to identifiably represent any three matrices in such a layout it is sufficient to assume that \(\mat{V}_1\) and \(\mat{V}_2\) have orthogonal columns and \(\mat{D}_{12}\). No assumptions about the other factors and scaling matrices are made. This can be alternatively formulated as choosing all \(\mat{V}_i\) to have orthogonal columns, letting \(\mat{D}_{12}\) be diagonal, and \(\mat{D}_{13}\) and \(\mat{D}_{24}\) to be arbitrary \(k \times k\) matrices.
A similar argument has been made with respect to sparse PCA. \citet{Chen2023} suggest to use the approach
\begin{equation*}
\argmin_{\mat{U}, \mat{V}, \mat{D}} \norm{\mat{X} - \mat{U} \mat{D} \mat{V}^\top}_F^2
\quad\text{such that}\quad
\mat{U} \in \Stiefel_{k, n}, \mat{V} \in \Stiefel_{k, p}, \norm{\mat{V}}_1 \leq \gamma,
\end{equation*}
to estimate sparse principal components. Note that there is no restriction on \(\mat{D}\) and it is therefore an arbitrary \(k \times k\) matrix. To extend the class of data collections that solrCMF can represent it would therefore be necessary to either relax the orthogonality constraint on some \(\mat{V}_i\) or to allow some \(\mat{D}_{ij}\) to be arbitrary \(k \times k\) matrices. It is not immediately clear how such a generalization could be performed, since the solrCMF problem in Eq.~\eqref{eq:sparse-cmf:penalized-model} ties automatic partitioning of variation in data sources intimately to the assumption that all \(\mat{D}_{ij}\) are diagonal. It remains to be explored if allowing correlation between columns of some \(\mat{V}_i\) could extend solrCMF's range of application.

By using a Frobenius loss in Eq.~\eqref{eq:sparse-cmf:penalized-model} and focusing on continuous data, solrCMF indirectly assumes an approximately normal distributed error model. It has been shown before in matrix factorization models as well as data integration models \citep[e.g.,][]{Collins2001,Klami2014,Li2018} that supporting other than continuous error models can be beneficial to improve prediction performance. However, the complexity of parameter estimation typically increases or approximations introducing additional bias need to be used.
To include other data distributions into solrCMF the loss function in Eq.~\eqref{eq:sparse-cmf:penalized-model} needs to be replaced.
This would affect the subproblem for \(\mat{Z}_{ij}\). Replacing the quadratic loss with a Bernoulli or Poisson loss instead, in order to model binary or count data, leads to a subproblem that cannot be solved analytically anymore. A possible solution is to extend the ADMM framework used in this article to support linearized subproblems \citep[e.g.,][]{Ouyang2015}.

\section*{Acknowledgments}

The computations were enabled by resources provided by the National Academic Infrastructure for Supercomputing in Sweden (NAISS) at National Supercomputer Centre (NSC), Linköping University, partially funded by the Swedish Research Council through grant agreement no. 2022-06725.


\bibliographystyle{jasa3}
\bibliography{main}

\begin{thebibliography}{55}
\newcommand{\enquote}[1]{``#1''}
\expandafter\ifx\csname natexlab\endcsname\relax\def\natexlab#1{#1}\fi
\expandafter\ifx\csname url\endcsname\relax
  \def\url#1{{\tt #1}}\fi
\expandafter\ifx\csname urlprefix\endcsname\relax\def\urlprefix{URL }\fi

\bibitem[\protect\citeauthoryear{Absil, Mahony, and Sepulchre}{Absil et~al.}{2008}]{Absil2008}
Absil, P.-A., Mahony, R., and Sepulchre, R. (2008), {\em Optimization Algorithms on Matrix Manifolds\/}, Princeton University Press.

\bibitem[\protect\citeauthoryear{Argelaguet, Arnol, Bredikhin, Deloro, Velten, Marioni, and Stegle}{Argelaguet et~al.}{2020}]{Argelaguet2020}
Argelaguet, R., Arnol, D., Bredikhin, D., Deloro, Y., Velten, B., Marioni, J.~C., and Stegle, O. (2020), \enquote{{MOFA}$+$: a statistical framework for comprehensive integration of multi-modal single-cell data,} {\em Genome Biology\/}, 21, \urlprefix\url{https://doi.org/10.1186/s13059-020-02015-1}.

\bibitem[\protect\citeauthoryear{Argelaguet, Cuomo, Stegle, and Marioni}{Argelaguet et~al.}{2021}]{Argelaguet2021}
Argelaguet, R., Cuomo, A. S.~E., Stegle, O., and Marioni, J.~C. (2021), \enquote{Computational principles and challenges in single-cell data integration,} {\em Nature Biotechnology\/}, 39, 1202--1215, \urlprefix\url{https://doi.org/10.1038/s41587-021-00895-7}.

\bibitem[\protect\citeauthoryear{Argelaguet, Velten, Arnol, Dietrich, Zenz, Marioni, Buettner, Huber, and Stegle}{Argelaguet et~al.}{2018}]{Argelaguet2018}
Argelaguet, R., Velten, B., Arnol, D., Dietrich, S., Zenz, T., Marioni, J.~C., Buettner, F., Huber, W., and Stegle, O. (2018), \enquote{{M}ulti-{O}mics {F}actor {A}nalysis{\textemdash}a framework for unsupervised integration of multi-omics data sets,} {\em Molecular Systems Biology\/}, 14, \urlprefix\url{https://doi.org/10.15252/msb.20178124}.

\bibitem[\protect\citeauthoryear{Boyd, Parikh, Chu, Peleato, and Eckstein}{Boyd et~al.}{2011}]{Boyd2011}
Boyd, S., Parikh, N., Chu, E., Peleato, B., and Eckstein, J. (2011), \enquote{Distributed Optimization and Statistical Learning via the Alternating Direction Method of Multipliers,} {\em Foundations and Trends{\textregistered} in Machine Learning\/}, 3, 1--122, \urlprefix\url{https://doi.org/10.1561/2200000016}.

\bibitem[\protect\citeauthoryear{Bunte, Leppäaho, Saarinen, and Kaski}{Bunte et~al.}{2016}]{Bunte2016}
Bunte, K., Leppäaho, E., Saarinen, I., and Kaski, S. (2016), \enquote{Sparse group factor analysis for biclustering of multiple data sources,} {\em Bioinformatics\/}, 32, 2457--2463, \urlprefix\url{https://doi.org/10.1093/bioinformatics/btw207}.

\bibitem[\protect\citeauthoryear{Candès and Tao}{Candès and Tao}{2010}]{Candes2010}
Candès, E.~J. and Tao, T. (2010), \enquote{{The Power of Convex Relaxation: Near-Optimal Matrix Completion},} {\em IEEE Transactions on Information Theory\/}, 56, \urlprefix\url{https://doi.org/10.1109/TIT.2010.2044061}.

\bibitem[\protect\citeauthoryear{Chen and Rohe}{Chen and Rohe}{2023}]{Chen2023}
Chen, F. and Rohe, K. (2023), \enquote{A New Basis for Sparse Principal Component Analysis,} {\em Journal of Computational and Graphical Statistics\/}, 1--14, \urlprefix\url{https://doi.org/10.1080/10618600.2023.2256502}.

\bibitem[\protect\citeauthoryear{Chen and de~Rijke}{Chen and de~Rijke}{2018}]{Chen2018b}
Chen, Y. and de~Rijke, M. (2018), \enquote{A Collective Variational Autoencoder for Top-N Recommendation with Side Information,} in {\em Proceedings of the 3rd Workshop on Deep Learning for Recommender Systems\/}, DLRS 2018, ACM, \urlprefix\url{https://doi.org/10.1145/3270323.3270326}.

\bibitem[\protect\citeauthoryear{Collins, Dasgupta, and Schapire}{Collins et~al.}{2001}]{Collins2001}
Collins, M., Dasgupta, S., and Schapire, R.~E. (2001), \enquote{{A Generalization of Principal Components Analysis to the Exponential Family},} in Dietterich, T., Becker, S., and Ghahramani, Z. (editors), {\em Advances in Neural Information Processing Systems\/}, volume~14, MIT Press, \urlprefix\url{https://proceedings.neurips.cc/paper_files/paper/2001/file/f410588e48dc83f2822a880a68f78923-Paper.pdf}.

\bibitem[\protect\citeauthoryear{Eckart and Young}{Eckart and Young}{1936}]{Eckart1936}
Eckart, C. and Young, G. (1936), \enquote{The approximation of one matrix by another of lower rank,} {\em Psychometrika\/}, 1, 211--218, \urlprefix\url{https://doi.org/10.1007/bf02288367}.

\bibitem[\protect\citeauthoryear{Edelman, Arias, and Smith}{Edelman et~al.}{1998}]{Edelman1998}
Edelman, A., Arias, T.~A., and Smith, S.~T. (1998), \enquote{The Geometry of Algorithms with Orthogonality Constraints,} {\em SIAM Journal on Matrix Analysis and Applications\/}, 20, 303--353, \urlprefix\url{https://doi.org/10.1137/S0895479895290954}.

\bibitem[\protect\citeauthoryear{Efron, Hastie, Johnstone, and Tibshirani}{Efron et~al.}{2004}]{Efron2004}
Efron, B., Hastie, T., Johnstone, I., and Tibshirani, R. (2004), \enquote{Least angle regression,} {\em The Annals of Statistics\/}, 32, \urlprefix\url{https://doi.org/10.1214/009053604000000067}.

\bibitem[\protect\citeauthoryear{Gao, Goldfarb, and Curtis}{Gao et~al.}{2020}]{Gao2020}
Gao, W., Goldfarb, D., and Curtis, F.~E. (2020), \enquote{{ADMM} for multiaffine constrained optimization,} {\em Optimization Methods and Software\/}, 35, 257--303, \urlprefix\url{https://doi.org/10.1080/10556788.2019.1683553}.

\bibitem[\protect\citeauthoryear{Gaynanova and Li}{Gaynanova and Li}{2019}]{Gaynanova2019}
Gaynanova, I. and Li, G. (2019), \enquote{Structural learning and integrative decomposition of multi-view data,} {\em Biometrics\/}, 75, 1121--1132, \urlprefix\url{https://doi.org/10.1111/biom.13108}.

\bibitem[\protect\citeauthoryear{Golub and Van~Loan}{Golub and Van~Loan}{2013}]{Golub2013}
Golub, G.~H. and Van~Loan, C.~F. (2013), {\em Matrix Computations\/}, The Johns Hopkins University Press, 4 edition.

\bibitem[\protect\citeauthoryear{Hastie, Mazumder, Lee, and Zadeh}{Hastie et~al.}{2015{\natexlab{a}}}]{Hastie2015}
Hastie, T., Mazumder, R., Lee, J.~D., and Zadeh, R. (2015{\natexlab{a}}), \enquote{Matrix Completion and Low-Rank {SVD} via Fast Alternating Least Squares,} {\em Journal of Machine Learning Research\/}, 16, 3367--3402, \urlprefix\url{http://jmlr.org/papers/v16/hastie15a.html}.

\bibitem[\protect\citeauthoryear{Hastie, Tibshirani, and Wainwright}{Hastie et~al.}{2015{\natexlab{b}}}]{StatLearningSparsity}
Hastie, T., Tibshirani, R., and Wainwright, M. (2015{\natexlab{b}}), {\em {Statistical Learning with Sparsity: The Lasso and Generalizations}\/}, Chapman {\&} Hall/CRC.

\bibitem[\protect\citeauthoryear{Horst}{Horst}{1961}]{Horst1961}
Horst, P. (1961), \enquote{Relations among $m$ sets of measures,} {\em Psychometrika\/}, 26, 129--149, \urlprefix\url{https://doi.org/10.1007/BF02289710}.

\bibitem[\protect\citeauthoryear{Hotelling}{Hotelling}{1936}]{Hotelling1936}
Hotelling, H. (1936), \enquote{Relations Between Two Sets Of Variates,} {\em Biometrika\/}, 28, 321--377, \urlprefix\url{https://doi.org/10.1093/biomet/28.3-4.321}.

\bibitem[\protect\citeauthoryear{Huang and Wei}{Huang and Wei}{2022}]{huang2022riemannian}
Huang, W. and Wei, K. (2022), \enquote{Riemannian proximal gradient methods,} {\em Mathematical Programming\/}, 194, 371--413.

\bibitem[\protect\citeauthoryear{Härdle and Simar}{Härdle and Simar}{2019}]{Hardle2019}
Härdle, W.~K. and Simar, L. (2019), {\em Applied Multivariate Statistical Analysis\/}, Springer International Publishing, 5 edition, \urlprefix\url{https://doi.org/10.1007/978-3-030-26006-4}.

\bibitem[\protect\citeauthoryear{Johnstone and Lu}{Johnstone and Lu}{2009}]{Johnstone2009}
Johnstone, I.~M. and Lu, A.~Y. (2009), \enquote{On Consistency and Sparsity for Principal Components Analysis in High Dimensions,} {\em Journal of the American Statistical Association\/}, 104, 682--693, \urlprefix\url{https://doi.org/10.1198/jasa.2009.0121}.

\bibitem[\protect\citeauthoryear{Jolliffe, Trendafilov, and Uddin}{Jolliffe et~al.}{2003}]{Jolliffe2003}
Jolliffe, I.~T., Trendafilov, N.~T., and Uddin, M. (2003), \enquote{A Modified Principal Component Technique Based on the {LASSO},} {\em Journal of Computational and Graphical Statistics\/}, 12, 531--547, \urlprefix\url{https://doi.org/10.1198/1061860032148}.

\bibitem[\protect\citeauthoryear{Kallus, Johansson, Nelander, and Jörnsten}{Kallus et~al.}{2019}]{Kallus2019}
Kallus, J., Johansson, P., Nelander, S., and Jörnsten, R. (2019), \enquote{MM-PCA: Integrative Analysis of Multi-group and Multi-view Data,} \urlprefix\url{https://arxiv.org/abs/1911.04927}.

\bibitem[\protect\citeauthoryear{Kettenring}{Kettenring}{1971}]{Kettenring1971}
Kettenring, J.~R. (1971), \enquote{Canonical analysis of several sets of variables,} {\em Biometrika\/}, 58, 433--451, \urlprefix\url{https://doi.org/10.2307/2334380}.

\bibitem[\protect\citeauthoryear{Klami, Bouchard, and Tripathi}{Klami et~al.}{2014}]{Klami2014}
Klami, A., Bouchard, G., and Tripathi, A. (2014), \enquote{Group-sparse Embeddings in Collective Matrix Factorization,} in {\em Proceedings of International Conference on Learning Representations (ICLR) 2014\/}, \urlprefix\url{https://arxiv.org/abs/1312.5921}.

\bibitem[\protect\citeauthoryear{Klami, Virtanen, Leppäaho, and Kaski}{Klami et~al.}{2015}]{Klami2015}
Klami, A., Virtanen, S., Leppäaho, E., and Kaski, S. (2015), \enquote{Group Factor Analysis,} {\em IEEE Transactions on Neural Networks and Learning Systems\/}, 26, 2136--2147, \urlprefix\url{https://doi.org/10.1109/TNNLS.2014.2376974}.

\bibitem[\protect\citeauthoryear{Li and Gaynanova}{Li and Gaynanova}{2018}]{Li2018}
Li, G. and Gaynanova, I. (2018), \enquote{A general framework for association analysis of heterogeneous data,} {\em The Annals of Applied Statistics\/}, 12, 1700--1726, \urlprefix\url{https://doi.org/10.1214/17-aoas1127}.

\bibitem[\protect\citeauthoryear{Li and Lin}{Li and Lin}{2015}]{li2015accelerated}
Li, H. and Lin, Z. (2015), \enquote{Accelerated proximal gradient methods for nonconvex programming,} {\em Advances in neural information processing systems\/}, 28.

\bibitem[\protect\citeauthoryear{Lock, Hoadley, Marron, and Nobel}{Lock et~al.}{2013}]{Lock2013}
Lock, E.~F., Hoadley, K.~A., Marron, J.~S., and Nobel, A.~B. (2013), \enquote{Joint and Individual Variation Explained ({JIVE}) for integrated analysis of multiple data types,} {\em The Annals of Applied Statistics\/}, 7, 523--542, \urlprefix\url{https://doi.org/10.1214/12-AOAS597}.

\bibitem[\protect\citeauthoryear{Löfstedt and Trygg}{Löfstedt and Trygg}{2011}]{OnPLS}
Löfstedt, T. and Trygg, J. (2011), \enquote{{OnPLS}--a novel multiblock method for the modelling of predictive and orthogonal variation,} {\em Journal of Chemometrics\/}, 25, 441--455, \urlprefix\url{https://doi.org/10.1002/cem.1388}.

\bibitem[\protect\citeauthoryear{Mackay}{Mackay}{1995}]{Mackay1995}
Mackay, D. J.~C. (1995), \enquote{Probable networks and plausible predictions --- a review of practical Bayesian methods for supervised neural networks,} {\em Network: Computation in Neural Systems\/}, 6, 469--505, \urlprefix\url{https://doi.org/10.1088/0954-898X_6_3_011}.

\bibitem[\protect\citeauthoryear{Mazumder, Hastie, and Tibshirani}{Mazumder et~al.}{2010}]{Mazumder2010}
Mazumder, R., Hastie, T., and Tibshirani, R. (2010), \enquote{Spectral Regularization Algorithms for Learning Large Incomplete Matrices,} {\em Journal of Machine Learning Research\/}, 11, 2287--2322, \urlprefix\url{http://jmlr.org/papers/v11/mazumder10a.html}.

\bibitem[\protect\citeauthoryear{Mirsky}{Mirsky}{1960}]{Mirsky1960}
Mirsky, L. (1960), \enquote{Symmetric gauge functions and unitarily invariant norms,} {\em The Quarterly Journal of Mathematics\/}, 11, 50--59, \urlprefix\url{https://doi.org/10.1093/qmath/11.1.50}.

\bibitem[\protect\citeauthoryear{Mitchell and Beauchamp}{Mitchell and Beauchamp}{1988}]{Mitchell1988}
Mitchell, T.~J. and Beauchamp, J.~J. (1988), \enquote{Bayesian Variable Selection in Linear Regression,} {\em Journal of the American Statistical Association\/}, 83, 1023--1032, \urlprefix\url{https://doi.org/10.1080/01621459.1988.10478694}.

\bibitem[\protect\citeauthoryear{Ning and Karypis}{Ning and Karypis}{2012}]{Ning2012}
Ning, X. and Karypis, G. (2012), \enquote{Sparse linear methods with side information for top-n recommendations,} in {\em Proceedings of the sixth ACM conference on Recommender systems\/}, RecSys ’12, ACM, \urlprefix\url{https://doi.org/10.1145/2365952.2365983}.

\bibitem[\protect\citeauthoryear{O'Connell and Lock}{O'Connell and Lock}{2019}]{OConnell2019}
O'Connell, M.~J. and Lock, E.~F. (2019), \enquote{Linked matrix factorization,} {\em Biometrics\/}, 75, 582--592, \urlprefix\url{https://doi.org/10.1111/biom.13010}.

\bibitem[\protect\citeauthoryear{Ouyang, Chen, Lan, and Pasiliao}{Ouyang et~al.}{2015}]{Ouyang2015}
Ouyang, Y., Chen, Y., Lan, G., and Pasiliao, E. (2015), \enquote{{An Accelerated Linearized Alternating Direction Method of Multipliers},} {\em SIAM Journal on Imaging Sciences\/}, 8, 644--681, \urlprefix\url{https://doi.org/10.1137/14095697X}.

\bibitem[\protect\citeauthoryear{Owen and Perry}{Owen and Perry}{2009}]{Owen2009}
Owen, A.~B. and Perry, P.~O. (2009), \enquote{Bi-cross-validation of the {SVD} and the nonnegative matrix factorization,} {\em The Annals of Applied Statistics\/}, 3, \urlprefix\url{https://doi.org/10.1214/08-aoas227}.

\bibitem[\protect\citeauthoryear{Parikh and Boyd}{Parikh and Boyd}{2014}]{Parikh2014}
Parikh, N. and Boyd, S. (2014), \enquote{Proximal Algorithms,} {\em Foundations and Trends{\textregistered} in Optimization\/}, 1, 127--239, \urlprefix\url{https://doi.org/10.1561/2400000003}.

\bibitem[\protect\citeauthoryear{Ryu, Han, Jung, and Hwang}{Ryu et~al.}{2023}]{Ryu2023}
Ryu, Y., Han, G.~H., Jung, E., and Hwang, D. (2023), \enquote{{Integration of Single-Cell RNA-Seq Datasets: A Review of Computational Methods},} {\em Molecules and Cells\/}, 46, 106--119, \urlprefix\url{https://doi.org/10.14348/molcells.2023.0009}.

\bibitem[\protect\citeauthoryear{Singh and Gordon}{Singh and Gordon}{2008}]{SinghGordon2008}
Singh, A.~P. and Gordon, G.~J. (2008), \enquote{Relational learning via collective matrix factorization,} in {\em Proceedings of the 14th ACM SIGKDD international conference on Knowledge discovery and data mining\/}, KDD08, ACM, \urlprefix\url{https://doi.org/10.1145/1401890.1401969}.

\bibitem[\protect\citeauthoryear{Sun, Ji, Yu, and Ye}{Sun et~al.}{2009}]{Sun2009}
Sun, L., Ji, S., Yu, S., and Ye, J. (2009), \enquote{On the equivalence between canonical correlation analysis and orthonormalized partial least squares,} in {\em Proceedings of the Twenty-First International Joint Conference on Artificial Intelligence\/}, \urlprefix\url{https://www.ijcai.org/Proceedings/09/Papers/207.pdf}.

\bibitem[\protect\citeauthoryear{{The Cancer Genome Atlas Network}}{{The Cancer Genome Atlas Network}}{2012}]{TCGA2012}
{The Cancer Genome Atlas Network} (2012), \enquote{Comprehensive molecular portraits of human breast tumours,} {\em Nature\/}, 490, 61--70, \urlprefix\url{https://doi.org/10.1038/nature11412}.

\bibitem[\protect\citeauthoryear{Tibshirani, Hastie, Narasimhan, and Chu}{Tibshirani et~al.}{2002}]{Tibshirani2002}
Tibshirani, R., Hastie, T., Narasimhan, B., and Chu, G. (2002), \enquote{Diagnosis of multiple cancer types by shrunken centroids of gene expression,} {\em Proceedings of the National Academy of Sciences\/}, 99, 6567--6572, \urlprefix\url{https://doi.org/10.1073/pnas.082099299}.

\bibitem[\protect\citeauthoryear{Trygg}{Trygg}{2002}]{O2PLS}
Trygg, J. (2002), \enquote{{O2-PLS} for qualitative and quantitative analysis in multivariate calibration,} {\em Journal of Chemometrics\/}, 16, 283--293, \urlprefix\url{https://doi.org/10.1002/cem.724}.

\bibitem[\protect\citeauthoryear{Trygg and Wold}{Trygg and Wold}{2002}]{OPLS}
Trygg, J. and Wold, S. (2002), \enquote{Orthogonal projections to latent structures {(O-PLS)},} {\em Journal of Chemometrics\/}, 16, 119--128, \urlprefix\url{https://doi.org/10.1002/cem.695}.

\bibitem[\protect\citeauthoryear{Tucker}{Tucker}{1958}]{Tucker1958}
Tucker, L.~R. (1958), \enquote{An inter-battery method of factor analysis,} {\em Psychometrika\/}, 23, 111--136, \urlprefix\url{https://doi.org/10.1007/BF02289009}.

\bibitem[\protect\citeauthoryear{Udell and Townsend}{Udell and Townsend}{2019}]{Udell2019}
Udell, M. and Townsend, A. (2019), \enquote{Why Are Big Data Matrices Approximately Low Rank?} {\em {SIAM} Journal on Mathematics of Data Science\/}, 1, 144--160, \urlprefix\url{https://doi.org/10.1137/18m1183480}.

\bibitem[\protect\citeauthoryear{Wang, Yin, and Zeng}{Wang et~al.}{2019}]{wang2019global}
Wang, Y., Yin, W., and Zeng, J. (2019), \enquote{Global convergence of ADMM in nonconvex nonsmooth optimization,} {\em Journal of Scientific Computing\/}, 78, 29--63.

\bibitem[\protect\citeauthoryear{Wilcoxon}{Wilcoxon}{1945}]{Wilcoxon1945}
Wilcoxon, F. (1945), \enquote{Individual Comparisons by Ranking Methods,} {\em Biometrics Bulletin\/}, 1, 80--83, \urlprefix\url{https://doi.org/10.2307/3001968}.

\bibitem[\protect\citeauthoryear{Witten, Tibshirani, and Hastie}{Witten et~al.}{2009}]{Witten2009}
Witten, D.~M., Tibshirani, R., and Hastie, T. (2009), \enquote{A penalized matrix decomposition, with applications to sparse principal components and canonical correlation analysis,} {\em Biostatistics\/}, 10, 515--534, \urlprefix\url{https://doi.org/10.1093/biostatistics/kxp008}.

\bibitem[\protect\citeauthoryear{Wold}{Wold}{1978}]{WoldCV}
Wold, S. (1978), \enquote{{Cross-Validatory Estimation of the Number of Components in Factor and Principal Components Models},} {\em Technometrics\/}, 20, 397--405, \urlprefix\url{https://doi.org/10.1080/00401706.1978.10489693}.

\bibitem[\protect\citeauthoryear{Zou}{Zou}{2006}]{Zou2006b}
Zou, H. (2006), \enquote{The Adaptive Lasso and Its Oracle Properties,} {\em Journal of the American Statistical Association\/}, 101, 1418--1429, \urlprefix\url{https://doi.org/10.1198/016214506000000735}.

\end{thebibliography}


\begin{thebibliography}{16}
\newcommand{\enquote}[1]{``#1''}
\expandafter\ifx\csname natexlab\endcsname\relax\def\natexlab#1{#1}\fi
\expandafter\ifx\csname url\endcsname\relax
  \def\url#1{{\tt #1}}\fi
\expandafter\ifx\csname urlprefix\endcsname\relax\def\urlprefix{URL }\fi

\bibitem[\protect\citeauthoryear{Absil, Mahony, and Sepulchre}{Absil et~al.}{2008}]{Absil2008}
Absil, P.-A., Mahony, R., and Sepulchre, R. (2008), {\em Optimization Algorithms on Matrix Manifolds\/}, Princeton University Press.

\bibitem[\protect\citeauthoryear{Attouch, Bolte, Redont, and Soubeyran}{Attouch et~al.}{2010}]{Attouch2010}
Attouch, H., Bolte, J., Redont, P., and Soubeyran, A. (2010), \enquote{Proximal Alternating Minimization and Projection Methods for Nonconvex Problems: {A}n Approach Based on the {K}urdyka-{{\L}}ojasiewicz Inequality,} {\em Mathematics of Operations Research\/}, 35, 438--457, \urlprefix\url{https://doi.org/10.1287/moor.1100.0449}.

\bibitem[\protect\citeauthoryear{Attouch, Bolte, and Svaiter}{Attouch et~al.}{2013}]{Attouch2013}
Attouch, H., Bolte, J., and Svaiter, B.~F. (2013), \enquote{Convergence of descent methods for semi-algebraic and tame problems: proximal algorithms, forward--backward splitting, and regularized {G}auss--{S}eidel methods,} {\em Mathematical Programming\/}, 137, 91--129, \urlprefix\url{https://doi.org/10.1007/s10107-011-0484-9}.

\bibitem[\protect\citeauthoryear{Bochnak, Coste, and Roy}{Bochnak et~al.}{1998}]{RealAlgebraicGeometry1998}
Bochnak, J., Coste, M., and Roy, M.-F. (1998), {\em Real Algebraic Geometry\/}, Springer Berlin Heidelberg, \urlprefix\url{https://doi.org/10.1007/978-3-662-03718-8}.

\bibitem[\protect\citeauthoryear{Bolte, Daniilidis, and Lewis}{Bolte et~al.}{2007{\natexlab{a}}}]{Bolte2007b}
Bolte, J., Daniilidis, A., and Lewis, A. (2007{\natexlab{a}}), \enquote{The {\L}ojasiewicz Inequality for Nonsmooth Subanalytic Functions with Applications to Subgradient Dynamical Systems,} {\em {SIAM} Journal on Optimization\/}, 17, 1205--1223, \urlprefix\url{https://doi.org/10.1137/050644641}.

\bibitem[\protect\citeauthoryear{Bolte, Daniilidis, Lewis, and Shiota}{Bolte et~al.}{2007{\natexlab{b}}}]{Bolte2007a}
Bolte, J., Daniilidis, A., Lewis, A., and Shiota, M. (2007{\natexlab{b}}), \enquote{Clarke Subgradients of Stratifiable Functions,} {\em {SIAM} Journal on Optimization\/}, 18, 556--572, \urlprefix\url{https://doi.org/10.1137/060670080}.

\bibitem[\protect\citeauthoryear{Gao, Goldfarb, and Curtis}{Gao et~al.}{2020}]{Gao2020}
Gao, W., Goldfarb, D., and Curtis, F.~E. (2020), \enquote{{ADMM} for multiaffine constrained optimization,} {\em Optimization Methods and Software\/}, 35, 257--303, \urlprefix\url{https://doi.org/10.1080/10556788.2019.1683553}.

\bibitem[\protect\citeauthoryear{Gaynanova and Li}{Gaynanova and Li}{2019}]{Gaynanova2019}
Gaynanova, I. and Li, G. (2019), \enquote{Structural learning and integrative decomposition of multi-view data,} {\em Biometrics\/}, 75, 1121--1132, \urlprefix\url{https://doi.org/10.1111/biom.13108}.

\bibitem[\protect\citeauthoryear{Golub and Van~Loan}{Golub and Van~Loan}{2013}]{Golub2013}
Golub, G.~H. and Van~Loan, C.~F. (2013), {\em Matrix Computations\/}, The Johns Hopkins University Press, 4 edition.

\bibitem[\protect\citeauthoryear{Janin}{Janin}{1984}]{Janin1984}
Janin, R. (1984), \enquote{Directional derivative of the marginal function in nonlinear programming,} in Fiacco, A.~V. (editor), {\em Sensitivity, Stability and Parametric Analysis\/}, Berlin, Heidelberg: Springer Berlin Heidelberg, 110--126, \urlprefix\url{https://doi.org/10.1007/BFb0121214}.

\bibitem[\protect\citeauthoryear{Lee}{Lee}{2012}]{Lee2012}
Lee, J.~M. (2012), {\em Introduction to Smooth Manifolds\/}, Graduate Texts in Mathematics, Springer New York, \urlprefix\url{https://doi.org/10.1007/978-1-4419-9982-5}.

\bibitem[\protect\citeauthoryear{Rockafellar and Wets}{Rockafellar and Wets}{2009}]{Rockafellar2009}
Rockafellar, R.~T. and Wets, R. J.~B. (2009), {\em Variational Analysis\/}, Springer Berlin Heidelberg, corrected 3rd printing edition, \urlprefix\url{https://doi.org/10.1007/978-3-642-02431-3}.

\bibitem[\protect\citeauthoryear{Stewart}{Stewart}{1980}]{Stewart1980}
Stewart, G.~W. (1980), \enquote{The efficient generation of random orthogonal matrices with an application to condition estimators,} {\em SIAM Journal on Numerical Analysis\/}, 17, 403--409, \urlprefix\url{https://www.jstor.org/stable/2156882}.

\bibitem[\protect\citeauthoryear{Yang, Zhang, and Song}{Yang et~al.}{2014}]{Yang2014}
Yang, W.~H., Zhang, L.-H., and Song, R. (2014), \enquote{Optimality conditions for the nonlinear programming problems on {R}iemannian manifolds,} {\em Pacific Journal of Optimization\/}, 10, 415--434, \urlprefix\url{http://www.yokohamapublishers.jp/online2/oppjo/vol10/p415.html}.

\bibitem[\protect\citeauthoryear{Yu}{Yu}{2013}]{Yu2013}
Yu, Y.-L. (2013), \enquote{On Decomposing the Proximal Map,} in Burges, C. J.~C., Bottou, L., Welling, M., Ghahramani, Z., and Weinberger, K.~Q. (editors), {\em Advances in Neural Information Processing Systems\/}, volume~26, Curran Associates, Inc., \urlprefix\url{https://proceedings.neurips.cc/paper/2013/hash/98dce83da57b0395e163467c9dae521b-Abstract.html}.

\bibitem[\protect\citeauthoryear{Zhang, Qiu, and Du}{Zhang et~al.}{2007}]{Zhang2007}
Zhang, Z., Qiu, Y., and Du, K. (2007), \enquote{Conditions for optimal solutions of unbalanced Procrustes problem on Stiefel manifold,} {\em Journal of Computational Mathematics\/}, 25, 661--671, \urlprefix\url{http://www.jstor.org/stable/43693402}.

\end{thebibliography}

\end{document}


\def\spacingset#1{\renewcommand{\baselinestretch}%
{#1}\small\normalsize} \spacingset{1}


\if0\blind%
{
  \title{\bf Supplementary Material: Sparse and Orthogonal Low-rank Collective Matrix Factorization (solrCMF): Efficient data integration in flexible layouts}
  \author{Felix Held, Jacob Lindbäck, Rebecka Jörnsten}
  \date{}
  \maketitle
} \fi

\if1\blind%
{
  \bigskip
  \bigskip
  \bigskip
  \begin{center}
    {\LARGE\bf Supplementary Material: Sparse and Orthogonal Low-rank Collective Matrix Factorization (solrCMF): Efficient data integration in flexible layouts}
  \date{}
  \end{center}
  \medskip
} \fi




\tableofcontents


\section{Definitions and useful results}%
\label{apx:sec:defs-res}

\begin{definition}[Multi-affine functions and equations]
\hfill\par
\begin{enumerate}
\item A function \(f(x_1, x_2, \dots, x_n)\) is called multi-affine if
for each \(i = 1, \dots, n\) the function
\(x \mapsto f(x_1, \dots, x_{i - 1}, x, x_{i + 1}, \dots, x_n)\) with
\(x_j\) fixed for \(j \neq i\) is affine.
\item The equation \(f(x_1, \dots, x_n) = 0\) is called multi-affine if the
function \(f\) is multi-affine in \(x_1, \dots, x_n\).
\end{enumerate}
\end{definition}
\begin{definition}[Lipschitz differentiable functions]
A function \(f: \R^n \rightarrow \R\) is called \(M\)-Lipschitz differentiable
if it is differentiable and its gradient is Lipschitz with constant
\(M > 0\).
\end{definition}
\begin{definition}[Strongly convex functions]
A function \(f: \R^n \rightarrow \R\) is called \((m, M)\)-strongly convex if
it is convex, \(M\)-Lipschitz differentiable and if there is \(m > 0\) that
satisfies the inequality
\begin{equation}\label{eq:defs-res:strongly-convex}
f(\vect{y}) \geq f(\vect{x}) +
\left\langle \nabla_{\vect{x}} f(\vect{x}), \vect{y} - \vect{x}\right\rangle +
\frac{m}{2} \norm{\vect{y} - \vect{x}}_2^2
\end{equation}
for all \(\vect{x}\) and \(\vect{y}\).
\end{definition}
\begin{definition}[Vectorization]
Vectorization is a bijective operation that maps matrices in 
\(\R^{n \times m}\) to vectors in \(\R^{nm}\). Given a matrix 
\(\mat{A} \in \R^{n \times m}\), column-wise vectorization assigns the
corresponding vector 
\(\Vectorize(\mat{A}) =
[{\mat{A}^{(:, 1)}}^\top, {\mat{A}^{(:, 2)}}^\top, \dots,
{\mat{A}^{(:, m)}}^\top]^\top\). Row-wise vectorization can be defined
analogously but will not be used here.
\end{definition}
\begin{definition}[Kronecker product]
Given two matrices \(\mat{A} \in \R^{r \times c}\) and \(\mat{B} \in \R^{n \times p}\) the Kronecker product of \(\mat{A}\) and \(\mat{B}\) is defined as
\begin{equation*}
\mat{A} \times \mat{B} =
\begin{pmatrix}
\mat{A}^{(1, 1)} \mat{B} & \dots & \mat{A}^{(1, c)} \mat{B} \\
\vdots & & \vdots \\
\mat{A}^{(r, 1)} \mat{B} & \dots & \mat{A}^{(r, c)} \mat{B} \\
\end{pmatrix}
\in \R^{rn \times cp}.
\end{equation*}
\end{definition}
The following definitions can be found in \citep[Chapter~1]{Rockafellar2009}.
\begin{definition}[Extended value functions]
A function is called an \emph{extended value function} if its image is in the
set \(\overline{\R} := \R \cup \{-\infty, +\infty\}\).
\end{definition}
\begin{definition}[Domain]
The domain of an extended value function \(f: \R^n \rightarrow \overline{\R}\) 
is the set \(\Domain(f) := \{\vect{x} \in \R^n : f(\vect{x}) < \infty\}\).
\end{definition}
\begin{definition}[Proper functions]
An extended value function \(f\) is called proper if \(\Domain(f)\) is
non-empty and \(f(\vect{x}) > -\infty\) for all \(\vect{x}\).
\end{definition}
\begin{definition}[Lower-semicontinuous functions \emph{(lsc)}]
A function \(f: \R^n \rightarrow \overline{\R}\) is called
lower-semicontinuous (lsc) if
\begin{equation*}
\liminf_{\vect{x} \rightarrow \vect{x}_0} f(\vect{x}) = 
f(\vect{x}_0) \text{ for all } \vect{x}_0 \in \Domain(f).
\end{equation*}
\end{definition}
Some useful results for lower-semicontinuous functions.
\begin{lemma}\label{lem:defs-res:lsc-properties}
\begin{enumerate}
\item An indicator function \(\iota_{\mathrm{C}}\) is lsc
if \(\mathrm{C}\) is closed \citep[Remark on p.\ 11]{Rockafellar2009}.
\item The sum of finitely many lsc functions is lsc as long as the term
\(\infty - \infty\) never occurs \citep[Proposition~1.38]{Rockafellar2009}.
\end{enumerate}
\end{lemma}

\begin{definition}[Tangent cone, {\citep[Definition 6.1]{Rockafellar2009}}]
A vector \(\vect{w} \in \R^m\) is tangent to a set \(C \subset \R^m\) at a point \(\bar{\vect{x}} \in C\), written \(\vect{w} \in T_C(\bar{\vect{x}})\), if
\begin{equation*}
\frac{\vect{x}_n - \bar{\vect{x}}}{\tau_n} \rightarrow \vect{w}
\text{ for some } \vect{x}_n \in C, \tau_n > 0 \text{ such that } \vect{x}_n \rightarrow \bar{\vect{x}}, \tau_n \rightarrow 0.
\end{equation*}
The set \(T_C(\bar{\vect{x}})\) is called the \emph{tangent cone} of \(C\) at \(\bar{\vect{x}}\).
\end{definition}

\begin{definition}[Regular and general normal cone, {\citep[Definition 6.3]{Rockafellar2009}}]
Let \(C \subset \R^m\) and \(\bar{\vect{x}} \in C\). A vector \(\vect{v}\) is normal to \(C\) at \(\bar{\vect{x}}\) in the \emph{regular sense}, written \(\vect{v} \in \widehat{N}_C(\bar{\vect{x}})\), if
\begin{equation*}
\limsup_{\vect{x} \in C, \vect{x} \neq \bar{\vect{x}}, \vect{x} \rightarrow \bar{\vect{x}}}\frac{\langle \vect{v}, \vect{x} - \bar{\vect{x}}\rangle}{\norm{\vect{x} - \bar{\vect{x}}}} \leq 0.
\end{equation*}
The vector \(\vect{v}\) is normal to \(C\) at \(\bar{\vect{x}}\) in the \emph{general sense}, written \(\vect{v} \in N_C(\bar{\vect{x}})\), if there are sequences \(\vect{x}_n \in C, \vect{v}_n \in \widehat{N}_C(\vect{x}_n)\) such that \(\vect{x}_n \rightarrow \bar{\vect{x}}\) and \(\vect{v}_n \rightarrow \vect{v}\).
\(\widehat{N}_C(\bar{\vect{x}})\) and \(N_C(\bar{\vect{x}})\) are called the \emph{regular} and \emph{general normal cone} of \(C\) at \(\bar{\vect{x}}\).
\end{definition}
Note that if \(C\) is a convex set then the definition simplifies to
\begin{equation*}
N_C(\bar{\vect{x}}) = \{\vect{v} \,\vert\, \langle \vect{v}, \vect{x} - \bar{\vect{x}} \rangle \leq 0 \text{ for all } \vect{x} \in C\}.
\end{equation*}

\begin{definition}[Regular, general and horizon subgradients, {\citep[Definition 8.3]{Rockafellar2009}}]
Let \(f\) be a proper lsc function and \(\vect{x} \in \Domain(f)\).
A vector \(\vect{v}\) is a regular subgradient of \(f\) at \(\vect{x}\),
indicated by \(\vect{v} \in \widehat{\partial} f(\vect{x})\), if
\(f(\vect{y}) \geq f(\vect{x}) + \langle \vect{v}, \vect{y} - \vect{x}\rangle
+ o(\norm{\vect{y} - \vect{x}})\) for all \(\vect{y} \in \R^n\).
A vector \(\vect{v}\) is a general subgradient, indicated by
\(\vect{v} \in \partial f(\vect{x})\), if there exist sequences
\(\vect{x}_n \rightarrow \vect{x}\) and \(\vect{v}_n \rightarrow \vect{v}\)
with \(f(\vect{x}_n) \rightarrow f(\vect{x})\) and 
\(\vect{v}_n \in \widehat{\partial} f(\vect{x}_n)\). A vector \(\vect{v}\) is a horizon subgradient, indicated by \(v \in \partial^\infty f(\vect{x})\), if there exist sequences \(\vect{x}_n \rightarrow \vect{x}, \lambda_n \rightarrow 0\), and \(\vect{v}_n \in \widehat{\partial} f(x_n)\) with \(f(\vect{x}_n) \rightarrow f(\vect{x})\) and \(\lambda_n \vect{v}_n \rightarrow \vect{v}\).
\end{definition}
The following definitions can be found in \citep[Section 5.5]{Gao2020}.
\begin{definition}[Domain of a subgradient]
Let \(f\) be a proper lsc function. The domain of its subgradient mapping
\(\Domain(\partial f)\) is the set 
\(\{\vect{x} : \partial f(\vect{x}) \neq \emptyset\}\).
\end{definition}
\begin{definition}[K-\L\ property]
A function \(f: \R^n \rightarrow \overline{\R}\) is said to have the
K-\L\ property at \(\vect{x} \in \Domain(\partial f)\) if there exists
\(\eta \in (0, \infty]\), a neighbourhood \(U\) of \(\vect{x}\), and a
continuous concave function \(\phi: [0, \eta) \rightarrow \R\) such that:
\begin{enumerate}
\item \(\phi(0) = 0\)
\item For all \(s \in (0, \eta), \phi'(s) > 0\)
\item For all \(\vect{y} \in U \cap 
\{\vect{w}: f(\vect{x}) < f(\vect{w}) < f(\vect{x}) + \eta\}\), 
the K-\L\ inequality holds:
\begin{equation*}
\phi'\left(f(\vect{y}) - f(\vect{x})\right) \Distance(0, \partial f(\vect{x}))
\geq 1.
\end{equation*}
\end{enumerate}
A proper lsc function \(f\) that satisfies the K-\L\ property at every point
of \(\Domain(\partial f)\) is called a \emph{K-\L\ function}.
\end{definition}
\begin{definition}[Real semi-algebraic sets]\label{def:semi-algebraic-set}
A subset \(\mathcal{I}\) of \(\R^n\) is real semi-algebraic if there exists a
finite number of real polynomial functions 
\(P_{ij}, Q_{ij}: \R^n \rightarrow \R\) such that
\begin{equation*}
\mathcal{I} = \bigcup_{j = 1}^p \bigcap_{i = 1}^q
\left\{\vect{x} \in \R^n: P_{ij}(\vect{x}) = 0, Q_{ij}(\vect{x}) < 0\right\}.
\end{equation*}
\end{definition}
\begin{definition}[Semi-algebraic function]\label{def:semi-algebraic-function}
A function \(f: \R^n \rightarrow \R^m\) is semi-algebraic if its graph \(\{(\vect{x}, \vect{y}) \in \R^{n + m} : f(\vect{x}) = \vect{y}\}\) is a real semi-algebraic subset of \(\R^{n + m}\).
\end{definition}
It has been established \citep[see e.g.\@][]{Bolte2007a,Bolte2007b,Attouch2010,Attouch2013} that semi-algebraic functions are K-\L\ functions.
An projection property of semi-algebraic sets, following from the Tarski-Seidenberg principle \citep{RealAlgebraicGeometry1998}, is the theorem below.
\begin{theorem}[{\citep[Theorem 2.2.1]{RealAlgebraicGeometry1998}}]
Let \(\mathcal{I}\) be a semi-algebraic subset of \(\R^{n + m}\). Then the projection onto the first \(n\) coordinates is a semi-algebraic subset of \(\R^n\).    
\end{theorem}


\section{Initialization strategies}%
\label{sec:init-strategies}

To initialize our algorithm, we propose a general strategy based on random initialization. In addition, an example is given of how the algorithm can be initialized, using a more structured approach for the case of fully observed multiview setups.

A random initialization for our algorithm can be obtained as follows. Specification of the desired maximum rank \(k\) of the model is necessary. First, for each \(l \in \mathcal{V}\), QR factorization \citep[Chapter~5.2]{Golub2013} is performed on \(p_l \times k\) matrices with entries sampled from a standard normal distribution, i.e.\@ \(\tilde{\mat{V}}_l = \mat{Q}_l \mat{R}_l\). The matrix \(\mat{V}_l\) is then initialized with the first \(k\) columns of \(\mat{Q}_l\). For simplicity, \(\mat{U}_l\) is set equal to \(\mat{V}_l\) and \(\mat{V}'_l\) is set equal to all zeros.
The diagonal of \(\mat{D}_{ij, \gamma}\) is sampled from a uniform distribution between -1 and 1. This assumes that data matrices \(\mat{X}_{ij, \gamma}\) are preprocessed to have Frobenius norm 1.
The matrix \(\mat{Z}_{ij,\gamma}\) is set equal to \(\mat{V}_i \mat{D}_{ij,\gamma} \mat{V}_j^\top\).
All Lagrange multipliers are set equal to zero.

Due to the high dimensionality of the problem, it is unlikely that a single random initialization will perform well. We therefore generate \(n_{\text{init}}\) initial states and use them to estimate the model repeatedly without any penalization.
Finally, we choose the initial state which led to the minimal objective value after optimization.
It may be necessary to use a larger convergence threshold for these initial estimates to avoid that the algorithm is getting trapped in an uninformative local minimum.
The best initial state is then used during hyperparameter selection. 

Some data layouts allow the generation of structured initial states. One example is in case of multiview data. Assume \(n\) different data matrices are observed, sharing a common column view\footnote{The description below holds analogously for a common row view.}. Data matrices are concatenated row-wise and an SVD is performed on this concatenated matrix \(\mat{X}_C = \mat{U}_C \mat{D}_C \mat{V}_C^\top\). \(\mat{V}_1\), assuming that the common view is view 1, is easily found by taking the first \(k\) columns of \(\mat{V}_C\). To initialize \(\mat{V}_l\) for a row view \(l\), \(\mat{U}_C\) is split into sections in accordance to the dimensions of the input data matrices, and \(\mat{V}_l\) is initialized by taking the first \(k\) columns of the corresponding section. \(\mat{D}_{ij, \gamma}\) are initialized to the diagonal of \(\mat{V}_i^\top \mat{X}_{ij, \gamma} \mat{V}_j\). As for random initialization, \(\mat{U}_l\) can be set equal to \(\mat{V}_l\), and \(\mat{V}'_l\) can be set to zero. All Lagrange multipliers are set equal to zero.
\citet{Gaynanova2019} use a variant of this approach to initialize their method.
While easy to perform this approach can over-emphasize joint components, hiding individual or partially shared components. Large matrices and those with strong signals will dominate the concatenated matrix. Proper normalization of input data is therefore crucial.


\section{Simulating orthogonal factors}

To simulate dense orthogonal factors \(\mat{V} \in \R^{p \times k}\) without sparsity, first \(\mat{A} \in \R^{p \times k}\) is randomly generated by simulating \(pk\) independent values from a standard normal distribution and arranging them as a matrix. QR factorization \citep[Chapter 5.2]{Golub2013} is then applied to perform orthogonalization as and the first \(k\) columns of \(\mat{Q}\) are used to generate \(\mat{V}\). This procedure generates uniformly distributed random samples on the Stiefel manifold \citep{Stewart1980}.

To simulate sparse orthogonal factors with q\% sparsity, we sample, as above, \(\mat{A} \in \R^{p \times k}\) elementwise independently from a standard normal distribution.
Only the q\% largest entries in each column are kept, all other entries are set to zero. Keeping only the largest entries improves numerical stability and generates clearer sparsity patterns.
QR factorization for orthogonalization as above would introduce non-zero entries into zero locations again. We therefore present a variant of the classical Gram-Schmidt algorithm \citep[5.2.7]{Golub2013} which preserves zeros.
We proceed as follows. For convenience, denote columns of \(\mat{A}\) by \(\bm{a}_i\). We start by normalizing the first column and set \(\bm{v}_1 = \bm{a}_1 / \norm{\bm{a}_1}_2\). Then for columns \(\bm{a}_i\) for \(i = 2, \dots, k\), determine \(\mathrm{I}_i = \{l : \bm{a}_i^{(l)} \neq 0\}\). For any column \(j = 1, \dots, i - 1\), check if \(\bm{v}_j^{(\mathrm{I}_i)} \neq \bm{0}\). If so, update \(\bm{a}_i \leftarrow \bm{a}_i - \bm{a}_i^\top \bm{v}_j / \norm{\bm{v}_j^{(\mathrm{I}_i)}}_2 \bm{v}_j^{(\mathrm{I}_i)}\). This step orthogonalizes \(\bm{a}_i\) to \(\bm{v}_j\) without changing the zero pattern. Finally, normalize \(\bm{a}_i\) and set \(\bm{v}_i = \bm{a}_i / \norm{\bm{a}_i}_2\). The matrix of sparse orthogonal factors \(\mat{V}\) is obtained by concatenating vectors \(\bm{v}_i\) column-wise.


\section{Derivations of subproblem solutions}


\subsection{Subproblems for \texorpdfstring{\(\mat{\Theta}\)}{Theta}}%
\label{ssec:m-admm:subproblems-theta}


\subsubsection{Subproblem for \texorpdfstring{\(\mat{V}\)}{V}}%
\label{sssec:m-admm:subproblem-v}

The subproblems for \(\mat{V}_i\) have to be solved sequentially since they
depend on other \(\mat{V}_l\) for \(l \neq i\). Note that some \(\mat{V}_l\)
will be updated before \(\mat{V}_i\) and some after.
This is not indicated below to avoid excessive notation.
\(\mat{V}_i\) is restricted to be in the Stiefel manifold \(\Stiefel_{k, p_i}\)
and it therefore holds for arbitrary but fixed matrices
\(\mat{A} \in \R^{p_i \times p}\), \(\mat{B} \in \R^{k \times p}\), and
\(\mat{C} \in \R^{p_i \times k}\) that
\begin{align*}
\norm{\mat{A} - \mat{V}_i \mat{B}}_F^2 &=
-2 \Trace(\mat{B} \mat{A}^\top \mat{V}_i) + \text{const}, \\
\norm{\mat{A} - \mat{B} \mat{V}_i^\top}_F^2 &=
-2 \Trace(\mat{B}^\top \mat{A} \mat{V}_i) + \text{const}, \text{ and } \\
\norm{\mat{V}_i - \mat{C}}_F^2
&= -2 \Trace(\mat{C}^\top \mat{V}_i) + \text{const},
\end{align*}
where the constants are independent of \(\mat{V}_i\).
The optimization problem that needs to be solved to update \(\mat{V}_i\) is
\begin{equation*}
\begin{aligned}
\mat{V}_i^+ = \argmin_{\mat{V} \in \Stiefel_{k, p_i}}\quad 
&\frac{\rho}{2} \sum_{(i, l, \gamma) \in \mathcal{I}} 
\norm{\mat{Z}_{il,\gamma}^- - \mat{V} \mat{D}_{il,\gamma}^- \mat{V}_l^\top
+ \mat{\Lambda}^{(2), -}_{il,\gamma}}_F^2 \\
+ &\frac{\rho}{2} \sum_{(l, i, \gamma) \in \mathcal{I}} 
\norm{\mat{Z}_{li,\gamma}^- - \mat{V}_l \mat{D}_{li,\gamma}^- \mat{V}^\top 
+ \mat{\Lambda}^{(2), -}_{il,\gamma}}_F^2 \\
+ &\frac{\rho}{2} \norm{\mat{U}_i^- - \mat{V} - {\mat{V}'_i}^-
+ \mat{\Lambda}^{(1), -}_i}_F^2. 
\end{aligned}
\end{equation*}
This can be reformulated using the identities above to be
\begin{equation}\label{eq:m-admm:unbalanced-procrustes}
\mat{V}_i^+ = \argmin_{\mat{V} \in \Stiefel_{k, p_i}}\quad
\frac{1}{2} \norm{\mat{V} - \mat{M}}_F^2,
\end{equation}
where
\begin{equation*}
\begin{aligned}
\mat{M} = &\sum_{(i, l, \gamma) \in \mathcal{I}} 
(\mat{Z}_{il,\gamma}^- + \mat{\Lambda}^{(2), -}_{il,\gamma}) \mat{V}_l \mat{D}_{il,\gamma}^-
+ \sum_{(l, i, \gamma) \in \mathcal{I}} 
(\mat{Z}_{li,\gamma}^- + \mat{\Lambda}^{(2), -}_{li,\gamma})^\top \mat{V}_l \mat{D}_{li,\gamma}^-
\\
&+ (\mat{U}_i^- - {\mat{V}'_i}^- + \mat{\Lambda}^{(1), -}_i).
\end{aligned}
\end{equation*}
The optimization problem in Eq.~\eqref{eq:m-admm:unbalanced-procrustes} is
known as the unbalanced Procrustes problem \citep{Zhang2007}.
To find a solution, obtain the full SVD of
\(\mat{M} = \mat{B}_1 \mat{\Sigma} \mat{B}_2^\top\) where
\(\mat{B}_1 \in \R^{p_i \times p_i}\) and \(\mat{B}_2 \in \R^{K \times K}\)
are orthonormal matrices, and
\(\mat{\Sigma} = (\mat{\Sigma}_0, \mat{0}_{K \times p_i})^\top
\in \R^{p_i \times K}\) with  \(\mat{\Sigma}_0 \in \R^{K \times K}\).
Set then
\(\mat{Q} = \mat{B}_1 \mat{I}_{p_i, K} \mat{B}_2^\top\) where
\(\mat{I}_{p_i, K}\) are the first \(K\) columns of the unit matrix
\(\mat{I}_{p_i}\). \(\mat{Q}\) clearly has orthonormal columns and fulfills
\(\mat{Q} + \mat{Q} \mat{\Lambda} = \mat{M}\) with the symmetric matrix 
\(\mat{\Lambda} = \mat{B}_2 (\mat{\Sigma}_0 - \mat{I}_K) \mat{B}_2^\top\). 
Following \citet{Zhang2007},~Theorem~4.1, it is therefore sufficient that the
diagonal entries of \(\mat{\Sigma}_0\) are greater or equal to zero for 
\(\mat{Q}\) to be a globally optimal solution to
Eq.~\eqref{eq:m-admm:unbalanced-procrustes}.
Since the diagonal entries of \(\mat{\Sigma}_0\) contain the singular values
of \(\mat{M}\), which are always non-negative, this is clearly fulfilled.
\(\mat{Q}\) is the unique solution to
Eq.~\eqref{eq:m-admm:unbalanced-procrustes} if all diagonal entries of
\(\mat{\Sigma}_0\) are strictly greater than zero. Columns corresponding to singular values equal to zero are not necessarily unique. However, these columns are not relevant for further analysis and non-uniqueness is not problematic.

Set \(\mat{V}_i^+ = \mat{Q}\) to finish the update in this subproblem.


\subsubsection{Subproblem for \texorpdfstring{\(\mat{D}\)}{D}}%
\label{sssec:m-admm:subproblem-d}

The minimization problem for \(\mat{D}_{ij,\gamma}\) is separable across \((i, j, \gamma) \in \mathcal{I}\). Using the column orthogonality of \(\mat{V}_l\) for \(l \in [m]\) it follows that
\begin{equation*}
\begin{aligned}
\mat{D}_{ij,\gamma}^+ &= \argmin_{\mat{D} \in \mathcal{D}_k}
\frac{\rho}{2} \norm{\mat{Z}_{ij,\gamma}^- - \mat{V}_i^+ \mat{D} {\mat{V}_j^+}^\top
+ \mat{\Lambda}^{(2), -}_{ij,\gamma}}_F^2
+ \lambda_1 \sum_{l = 1}^k \vect{w}_{ij,\gamma}^{(l)} \abs{\mat{D}^{(l, l)}}_1 \\
&= \argmin_{\mat{D} \in \mathcal{D}_k}
\frac{1}{2} \norm{\mat{D} - \mat{M}}_F^2
+ \frac{\lambda_1}{\rho} \sum_{l = 1}^k
\vect{w}_{ij,\gamma}^{(l)} \abs{\mat{D}^{(l, l)}},
\end{aligned}
\end{equation*}
where
\begin{equation*}
\mat{M} = {\mat{V}_i^+}^\top 
(\mat{Z}_{ij,\gamma}^- + \mat{\Lambda}^{(2), -}_{ij,\gamma}) {\mat{V}_j^+}.
\end{equation*}
Replacing the set constraint with an indicator function, the solution of the above optimization problem is the proximal operator of the function \(\mat{D} \mapsto \iota_{\mathcal{D}_k}(\mat{D}) + \frac{\lambda_1}{\rho} \sum_{l = 1}^k \vect{w}_{ij,\gamma}^{(l)} \abs{\mat{D}^{(l, l)}}\) evaluated at \(\mat{M}\).
To determine the functional form of the proximal operator the following theorem can be used.
\begin{theorem}\label{thm:m-admm:prox-op}
Let \(f\) be a closed convex proper function then the proximal operator
of \(\iota_{\mathcal{D}_k} + f\) is
\(\Proximal_{\iota_{\mathcal{D}_k}} \circ \Proximal_f\).
\end{theorem}
\begin{proof}
Note that \(\mathcal{D}_k\) is a convex closed set and therefore, by
\citet[p.~310]{Rockafellar2009}, the subgradient of 
\(\iota_{\mathcal{D}_k}\) at any \(\mat{D} \in \mathcal{D}_k\) is equal to the normal cone
of \(\mathcal{D}_k\) at \(\mat{D}\), i.e.\@
\begin{equation*}
\begin{aligned}
\mathrm{N}_{\mathcal{D}_k}(\mat{D}) &= \left\{
    \mat{G} \in \R^{k \times k} :
    \langle\mat{G}, \mat{D}\rangle_F \geq
    \langle\mat{G}, \mat{A}\rangle_F
    \text{ for all }
    \mat{A} \in \mathcal{D}_k
\right\} \\
&= \left\{
    \mat{G} \in \R^{k \times k} :
    \mat{G}^{(i, i)} = 0 \text{ for } i \in [k]
\right\}.
\end{aligned}
\end{equation*}
Note that the subgradient is independent of \(\mat{D}\). In addition,
the function \(\mat{D} \mapsto \norm{\mat{D}}_1\) is convex and closed%
\footnote{%
\textbf{Definition} (\emph{closed function}): The epigraph of \(f\), i.e.\@
the set \(\{(\vect{x}, t) \in \R^{n + 1} :
\vect{x} \in \Domain f,\ f(\vect{x}) \leq t\}\), is closed.}, since it is
continuous on its closed domain (\(\R^{k \times k}\)).
Since \(f\) is assumed to be closed convex and proper, Theorem~1 in
\citet{Yu2013} then shows that the proximal operator of
\(\iota_{\mathcal{D}_k} + f\) is
\(\Proximal_{\iota_{\mathcal{D}_k}} \circ \Proximal_{f}\).
\end{proof}
The proximal operator of \(\iota_{\mathcal{D}_k}\) is simply the
projection of \(\mat{D}\) onto the diagonal matrices by setting all
non-diagonal entries to zero. The proximal operator of \(\beta \abs{x}\)
is the soft-thresholding operator
\(\mathrm{ST}_\beta(x) = \Sign(x) \max(\abs{x} - \beta, 0)\).

The updated iterate for \(\mat{D}_{ij,\gamma}\) is then obtained by setting
\(\mat{D}_{ij,\gamma}^{(r, c)} = 0\) for \(r \neq c\) and
\(\mat{D}_{ij,\gamma}^{(l, l)} =
\mathrm{ST}_{\lambda_1 \vect{w}_{ij,\gamma}^{(l)} / \rho}(\mat{M}^{(l, l)})\)
for \(l \in [k]\).


\subsubsection{Subproblem for \texorpdfstring{\(\mat{U}\)}{U}}%
\label{sssec:m-admm:subproblem-u}

The minimization sub-problem for \(\mat{U}_i\) is separable across \(i \in [m]\)
and can be solved in parallel. The optimization problem that needs to
solved for each \(i \in [m]\) is
\begin{equation*}
\mat{U}_i^+ = \argmin_{\mat{U} \in \Oblique_{k, p}}
\frac{1}{2} \norm{\mat{U} - \mat{M}_i}_F^2 
+ \frac{\lambda_2 \mat{w}_1^{(i)}}{\rho} \norm{\mat{U}}_1,
\end{equation*}
where \(\mat{M}_i = \mat{V}_i^+ + {\mat{V}'_i}^- - \mat{\Lambda}_i^{(1), -}\). 
Note that the minimization problem for \(\mat{U}_i\) is furthermore separable
across its columns. Denote an arbitrary column of \(\mat{M}_i\) by 
\(\vect{m} \in \R^{p_i}\) and let \(\vect{u} \in \R^{p_i}\) be the respective
column in \(\mat{U}_i\) which is to be updated.
Denote \(w = \lambda_2 \vect{w}_1^{(i)} / \rho\) and after
simplification, the optimization problem becomes
\begin{equation}\label{eq:m-admm:prox-on-oblique}
\argmin_{\vect{u}} -\vect{u}^\top \vect{m} + w \norm{\vect{u}}_1
\ \text{such that } \vect{u}^\top \vect{u} = 1.
\end{equation}

\begin{theorem}\label{thm:m-admm:prox-on-oblique}
The optimization problem in Eq.\@~\eqref{eq:m-admm:prox-on-oblique} has the
following optimal solutions \(\vect{u}^*\):
\begin{enumerate}
\item If \(\vect{m} = 0\), then any \(\vect{u}^* = \pm\vect{e}_l\)
for any \(l \in [p_i]\) is an optimal solution.
\item If \(\vect{m} \neq 0\), set \(w' = \max_i \abs{\vect{m}^{(i)}}\) and
define \(\mathrm{J}(\vect{m}) = \{j : \abs{\vect{m}^{(j)}} = w'\}\).
Then one of following three cases holds:
\begin{enumerate}
\item If \(w' > w\), then \(\vect{u}^* = \mathrm{ST}_w(\vect{m}) / 
\norm{\mathrm{ST}_w(\vect{m})}_2\) is the unique optimal solution.
\item If \(w' = w\), then all optimal solutions can be written as the
component-wise products \({\vect{u}^*}^{(j)}
= \vect{c}^{(j)} \Sign(\vect{m}^{(j)})\), where \(\vect{c} \in \R^{p_i}\)
is a non-negative vector such that \(\vect{c}^{(j)} = 0\) if 
\(j \not \in \mathrm{J}(\vect{m})\) and \(\vect{c}^{(j)} > 0\) for at least
one \(j \in \mathrm{J}(\vect{m})\).
\item If \(w' < w\), then any \(\vect{u}^* = \Sign(\vect{m}^{(j)}) \vect{e}_j\)
for any \(j \in \mathrm{J}(\vect{m})\) is an optimal solution.
\end{enumerate}
\end{enumerate}
\end{theorem}
\begin{proof}
If \(\vect{m} = \vect{0}\), then any vector \(\vect{u} = \pm \vect{e}_l\)
for \(l \in [p_i]\) is a solution to the minimization problem.
The reason for this is that \(1 = \norm{\vect{u}}_2 \leq \norm{\vect{u}}_1\) for
any unit vector \(\vect{u}\) with equality for standard unit vectors.
The optimization problem therefore has \(2 p_i\) solutions.

In the following, assume that \(\vect{m} \neq \vect{0}\) and define
\(w' = \max_i \abs{\vect{m}^{(i)}}\).
Introducing a Lagrange multiplier \(\delta \in \R\) and forming the
subgradients with respect to \(\vect{u}\) and \(\delta\) leads to the
following stationarity conditions
\begin{equation}\label{eq:stationary-conds}
\begin{aligned}
\vect{0} &\in -\vect{m} + 2 \delta \vect{u} 
+ w \partial_{\vect{u}} \norm{\vect{u}}_1, \\
0 &= \vect{u}^\top \vect{u} - 1.
\end{aligned}
\end{equation}
Since the gradient of the constraint \(2 \vect{u} \neq \vect{0}\) for all
feasible \(\vect{u}\), the Linear Independence Constraint
Qualification (LICQ) is fulfilled. This ensures that for every optimal point
of the original optimization point there exists a \(\delta\) such that
the stationarity conditions above are fulfilled.
The first stationarity condition in Eq.~\eqref{eq:stationary-conds}
is equivalent to
\begin{equation}\label{eq:equiv-cond1}
\begin{aligned}
-2\delta \vect{u}^{(i)} + \vect{m}^{(i)} = -w
& \ \text{if } \vect{u}^{(i)} < 0 \\
\vect{m}^{(i)} \in [-w, w] & \ \text{if } \vect{u}^{(i)} = 0 \\
-2\delta \vect{u}^{(i)} + \vect{m}^{(i)} = w
&\  \text{if } \vect{u}^{(i)} > 0 \\
\end{aligned}
\end{equation}
If \(w' > w\), then it follows from Eq.~\eqref{eq:equiv-cond1} that
every stationary point fulfills
\(\vect{u}^{(i)} = (\vect{m}^{(i)} + w) / (2\delta)\) if
\(\vect{m}^{(i)} < -w\) or
\(\vect{u}^{(i)} = (\vect{m}^{(i)} - w) / (2\delta) > 0\)
if \(\vect{m}^{(i)} > w\) for at least one component \(i\).
Therefore the subgradient in the first stationarity condition contains the vector
\(\vect{u} = \mathrm{ST}_w(\vect{m}) / (2\delta)\).
Since \(\vect{u} \neq \vect{0}\), the second stationarity condition in 
Eq.~\eqref{eq:stationary-conds} implies that
\begin{equation*}
(\vect{u}, \delta) =
(\mathrm{ST}_w(\vect{m}) / \norm{\mathrm{ST}_w(\vect{m})}_2,
\norm{\mathrm{ST}_w(\vect{m})}_2 / 2)
\end{equation*}
is the only stationary point. Note that there will always be at least one
global minimizer \(\vect{u}^*\), since the constraint set is compact and the
objective function is continuous. In addition, LICQ is fulfilled and there
exists a Lagrange multiplier \(\delta^*\) such that \((\vect{u}^*, \delta^*)\)
is a stationary point.
Since there is only one stationary point, it has to be the global minimum of the
original constrained optimization problem in Eq.~\eqref{eq:m-admm:prox-on-oblique}.

If \(0 < w' \leq w\), then \(\mathrm{ST}_w(\vect{m}) = \vect{0}\) and the arguments
above do not hold. Instead, the existence of global minimizers will be shown in
a direct way. Define \(\mathrm{J}(\vect{m}) = \{j : \abs{\vect{m}^{(j)}} = w'\}\).
Two cases will be considered.

\textbf{Case i.} (\(w' = w\)) In this case, any solution can be characterized
as \(\vect{u}^{(i)} = \vect{c}^{(i)} \Sign(\vect{m}^{(i)})\) where \(\vect{c}\)
is an element-wise non-negative unit vector such that \(\vect{c}^{(i)} = 0\) if
\(i \not \in \mathrm{J}(\vect{m})\) and \(\vect{c}^{(i)} > 0\) for at least one 
\(i \in \mathrm{J}(\vect{m})\). Each such vector has objective function value
\begin{equation*}
-\sum_i \vect{c}^{(i)} \abs{\vect{m}^{(i)}} + w \norm{\vect{u}}_1 =
w \sum_{i \in \mathrm{J}(\vect{m})} (\vect{c}^{(i)} - \vect{c}^{(i)}) = 0.
\end{equation*}
Note that for any vector \(\vect{u}\), using the triangle inequality and
Hölder's inequality, it holds that
\begin{equation*}
\abs{\vect{u}^\top \vect{m}}
\leq \sum_i \abs{\vect{u}^{(i)} \vect{m}^{(i)}}
\leq \norm{\vect{u}}_1 \norm{\vect{m}}_\infty =
\norm{\vect{u}}_1 w'.
\end{equation*}
Equality holds only for unit vectors \(\vect{u}\) that are non-zero on
components in \(\mathrm{J}(\vect{m})\) and zero otherwise. This can be
seen as follows. Equality in Hölder's inequality holds if
\begin{equation*}
0 = \sum_i \abs{\vect{u}^{(i)} \vect{m}^{(i)}}
- \sum_i \abs{\vect{u}^{(i)}} w'
= \sum_i \abs{\vect{u}^{(i)}} (\abs{\vect{m}^{(i)}} - w'),
\end{equation*}
which implies that \(\vect{u}^{(i)} = 0\) if \(\vect{m}^{(i)} \neq w'\) and
the entries where \(\vect{m}^{(i)} = w\) need to be chosen such that their
\(\ell_2\) norm is 1. Note that any other unit vector with non-zero components in 
\(\mathrm{J}(\vect{m})\) with sign pattern different from \(\vect{m}\) have
objective function value larger than 0, since
\begin{equation*}
\begin{aligned}
-\vect{u}^\top \vect{m} + w \norm{\vect{u}}_1 &=
\sum_{i \in \mathrm{J}(\vect{m})}
\left(-\vect{u}^{(i)} \vect{m}^{(i)} + w \abs{\vect{u}^{(i)}}\right) \\
&= \sum_{\substack{i \in \mathrm{J}(\vect{m})\\%
\Sign(\vect{u}^{(i)}) = \Sign(\vect{m}^{(i)})}}
\left(-\abs{\vect{u}^{(i)}} \abs{\vect{m}^{(i)}}
+ w \abs{\vect{u}^{(i)}}\right) \\
&+ \sum_{\substack{i \in \mathrm{J}(\vect{m})\\%
\Sign(\vect{u}^{(i)}) \neq \Sign(\vect{m}^{(i)})}}
\left(\abs{\vect{u}^{(i)}} \abs{\vect{m}^{(i)}}
+ w \abs{\vect{u}^{(i)}}\right) \\
&= \sum_{\substack{i \in \mathrm{J}(\vect{m})\\%
\Sign(\vect{u}^{(i)}) \neq \Sign(\vect{m}^{(i)})}} 2w \abs{\vect{u}^{(i)}}
\end{aligned}
\end{equation*}
which is larger than zero for any \(\vect{u}\) which is not of the optimal form
given above. This shows that there are infinitely many global minimizers
in this case.

\textbf{Case ii.} (\(w' < w\)) In this case, any solution is of the form
\(\vect{u} = \Sign(\vect{m}^{(j)}) \vect{e}_j\) for an arbitrary
\(j \in \mathrm{J}(\vect{m})\) with objective function value \(-w' + w > 0\).
Clearly, the objective function value is larger for any \(\vect{u} = \pm \vect{e}_l\) 
for \(l \in [p_i]\) not of the specific form above.
If \(\vect{u} \neq \pm \vect{e}_l\) for any \(l \in [p_i]\) the following
equivalence holds:
\begin{equation}\label{eq:case2-equiv}
-\vect{u}^\top \vect{m} + w \norm{\vect{u}}_1 > -w' + w
\quad\Leftrightarrow\quad
w > \frac{\vect{u}^\top \vect{m} -w'}{\norm{\vect{u}}_1 - 1}
\end{equation}
Arguing with Hölder's inequality again shows that
\begin{equation*}
\abs{\vect{u}^\top \vect{m}}
\leq \norm{\vect{u}}_1 \norm{\vect{m}}_\infty = \norm{\vect{u}}_1 w'
\end{equation*}
which ensures that the right hand side of Eq.~\eqref{eq:case2-equiv} holds which
gives the result. Therefore, there is a finite number of
solutions of the form \(\vect{u} = \Sign(\vect{m}^{(j)}) \vect{e}_j\) for an
arbitrary \(j \in \mathrm{J}(\vect{m})\).
\end{proof}

Theorem 5 shows that only in Case 2 (a) a unique optimal solution exists.
However, it is the only case that is of practical interest. The columns of 
\(\mat{M}_i\) are mostly controlled by the columns of \(\mat{V}_i^+\).
Since the columns of \(\mat{V}_i^+\) are of length one, there is at least one component that is larger than \(1 / \sqrt{p_i}\). As long as \(w \sqrt{p_i} < 1 \Leftrightarrow \lambda_2 < \rho/(\sqrt{p_i} \bm{w}_1^{(i)})\) it is therefore in practice unlikely that any of the other cases appears.
In our experience, \(w \sqrt{p_i} \geq 1\) indicates that \(\lambda_2\) has been chosen too large.

If Cases 1, 2 (b), or 2 (c) do occur, the standard unit vector minimizing the
objective function is chosen. It can easily be determined by determining \(l\)
such that \(-\abs{\vect{m}^{(l)}} + \lambda_2 \vect{w}_1^{(i)} / \rho\) is
minimal. The sign is given by the sign of \(\vect{m}^{(l)}\).
If there are ties, the first component determines the direction and sign.


\subsection{Subproblem for \texorpdfstring{\(\mat{\Delta}\)}{Delta}}%
\label{ssec:m-admm:subproblems-delta}

The subproblem for \(\mat{\Delta}\) is reduced to
\begin{equation}
\begin{aligned}
\mat{\Delta}^+ = \argmin_{\mat{\Delta}} \quad&
\frac{1}{2} \sum_{(i, j, \gamma) \in \mathcal{I}}
\norm*{\mathcal{P}_{\mathrm{\Omega}_{ij, \gamma}}\left(\mat{X}_{ij,\gamma} - \mat{Z}_{ij,\gamma}\right)}_2^2
+ \frac{\mu}{2} \sum_{l = 1}^m \vect{w}_2^{(l)} \norm{\mat{V}'_l}_F^2 \\
+\; &\frac{\rho}{2} \sum_l
\norm{\mat{U}_l^+ - \mat{V}_l^+ - \mat{V}'_l
+ \mat{\Lambda}^{(1),-}_{l}}_F^2 \\
+\; &\frac{\rho}{2} \sum_{(i, j, \gamma) \in \mathcal{I}}
\norm{\mat{Z}_{ij,\gamma} - \mat{V}_i^+ \mat{D}_{ij,\gamma}^+ {\mat{V}_j^+}^\top
+ \mat{\Lambda}^{(2),-}_{ij,\gamma}}_F^2. \\
\end{aligned}
\end{equation}
The minimization problem is separable in all involved variables and admits the following analytic solution
\begin{align}
{\mat{V}'_i}^+ &=  \frac{\rho}{\rho + \mu \vect{w}_2^{(i)}}
\left(\mat{U}_i^+ - \mat{V}_i^+ + \mat{\Lambda}_i^{(1),-}\right), \\
{\mat{Z}_{ij,\gamma}^{(r, c)}}^+ &=  \frac{\rho}{\rho + 1}
\left(\mat{V}_i^+ \mat{D}_{ij,\gamma}^+ {\mat{V}_j^+}^\top
- \mat{\Lambda}_{ij,\gamma}^{(2),-}\right)^{(r, c)}
+ \frac{1}{\rho + 1} \mat{X}_{ij,\gamma}^{(r, c)}
\end{align}
if entry \((r, c)\) is observed and \({\mat{Z}_{ij,\gamma}^{(r, c)}}^+ = \left(\mat{V}_i^+ \mat{D}_{ij,\gamma}^+ {\mat{V}_j^+}^\top
- \mat{\Lambda}_{ij,\gamma}^{(2),-}\right)^{(r, c)}\) otherwise.


\section{Embedded manifold set constraints}

The framework in \citet{Gao2020} was developed for multi-affine constraints.
However, some parts of the theory can be applied to more general constraints as long as they fulfill certain conditions.
By modifying the theory presented in \citep{Gao2020} accordingly, it is shown below that the sequential ADMM updates can be augmented with embedded manifold set constraints as long as each subproblem can be solved exactly.
We first present results from optimization theory that are necessary for our arguments below.

Assume that \(Y \in \R^n\) is constrained by \(m\) equations (with \(n > m\)) given by a smooth function \(\R^n \rightarrow \R^m, X \mapsto \Phi(X)\) such that \(Y \in \mathcal{M} = \{X \in \R^n: \Phi(X) = 0\}\). Let \(\mathrm{D}\Phi(Y) \in \R^{m \times n}\) denote the Jacobian of \(\Phi\) at \(Y\). The assumption is made that \(\mathrm{D}\Phi(Y)\) is full-rank (i.e.\@ rank \(m\)) for \(Y \in \mathcal{M}\), which implies that the gradients \(\{\nabla_Y \Phi({Y})^{(i, :)}\}_{i=1}^m\) are linearly independent. The set \(\mathcal{M}\) is then a \emph{smooth embedded manifold} \citep[Chapter 8]{Lee2012}.
Adding the set constraint \(Y \in \mathcal{M}\) for variable \(Y\) in a non-linear programming problem is formally equivalent to including the non-linear equality constraints \(\Phi(Y) = 0\). 

Below we will only be dealing with at most one smooth embedded manifold constraint within each subproblem and we will continue to assume, as in \citep{Gao2020}, that subproblems can be solved exactly. Therefore, we are avoiding to use tools from the manifold optimization literature \citep{Absil2008,Yang2014} and instead apply standard results from non-linear optimization \citep{Rockafellar2009} directly.

A reminder of necessary stationarity conditions for the minimization of a proper lsc function \(f: \R^n \rightarrow \overline{\R}\) under manifold set constraints is given first. Let \(\mathcal{M}\) be a smooth embedded manifold as described above. By \citep[Example 6.8]{Rockafellar2009} it then follows that
\begin{equation*}
N_{\mathcal{M}}(\bar{X}) = \{V \in \R^n \,\vert\, V = \mathrm{D}\Phi(\bar{X})^\top Y \text{ for } Y \in \R^m\}.
\end{equation*}
In addition, note that \(\mathcal{M}\) is a closed set by continuity of \(\mathrm{D}\Phi(X)\). Consider the minimization problem
\begin{equation}\label{eq:lsc-argmin}
\argmin_{X \in \mathcal{M}} f(X).
\end{equation}
Stationarity conditions for Eq.~\eqref{eq:lsc-argmin} are then described by the following theorem.
\begin{theorem}[{\citep[Theorem 8.15]{Rockafellar2009}}]%
\label{thm:general-stationarity}
Let \(f\) be a proper, lsc function. Consider minimizing \(f\) over a closed set \(C \subset \R^n\). Let \(\bar{x} \in C\) be such that: the set \(\partial^\infty f(\bar{x})\) contains no vector \(v \neq 0\) such that \(-v \in N_C(\bar{x})\). Then for \(\bar{x}\) to be locally optimal it is necessary that \(0 \in \partial f(\bar{x}) + N_C(\bar{x})\).
\end{theorem}
As described above, \(\mathcal{M}\) is a closed set and the normal cone can be explicitly determined. Below, we will be interested in two cases: (1) \(f\) is smooth or (2) \(f\) is a proper, lsc, convex function.
For (1) it holds that if \(f\) is smooth on a neighborhood of \(\bar{x}\), then \(\partial f(\bar{x}) = \{\nabla f(\bar{x})\}\) and \(\partial^\infty f(\bar{x}) = \{0\}\) \citep[p. 304]{Rockafellar2009}. For (2) it holds that if \(f\) is a proper, lsc, convex function \(\R^n \rightarrow \bar{\R}\) and \(\bar{x} \in \Domain f\) then \(\partial^\infty f(\bar{x}) = N_{\Domain f}(\bar{x})\) \citep[Proposition 8.12]{Rockafellar2009}. Typically, we will be dealing with the case \(\Domain f = \R^n\), which implies that \(N_{\Domain f}(\bar{x}) = N_{\R^n}(\bar{x}) = \{0\}\).

In both cases, only \(0 \in \partial^\infty f(\bar{x})\) which by Theorem~\ref{thm:general-stationarity} implies that the stationarity conditions for \(X \in \mathcal{M}\) are \(0 \in \partial f(X) + \mathrm{D} \Phi(X)^\top Y\) for \(Y \in \R^m\).

In the following, we consider the general optimization problem
\begin{equation}\tag{\(P_g\)}\label{eq:opt-prob-general}
\begin{cases}
    \inf_{\mathcal{X}, \mathcal{Z}} &\phi(\mathcal{X}, \mathcal{Z}) \\
    &A(\mathcal{X}, Z_0) + Q(\mathcal{Z}_>) = 0
\end{cases}
\end{equation}
where \(\mathcal{X} = (X_0, \dots, X_n)\), \(X_i \in \mathcal{M}_i\), \(\mathcal{Z} = (Z_0, Z_>)\), \(\mathcal{Z}_> = (Z_1, Z_2)\),
\begin{equation*}
\begin{aligned}
\phi(\mathcal{X}, \mathcal{Z}) &= f(\mathcal{X}) + \psi(\mathcal{Z}) \\
\text{and}\quad A(\mathcal{X}, Z_0) + Q(\mathcal{Z}_>) &=
\begin{pmatrix}
A_1(\mathcal{X}, Z_0) + Q_1(Z_1) \\
A_2(\mathcal{X}) + Q_2(Z_2)
\end{pmatrix}
\end{aligned}
\end{equation*}
where \(A_1\) and \(A_2\) are multiaffine maps and \(Q_1\) and \(Q_2\) are linear maps. The sets \(\mathcal{M}_i\) are either equal to \(\R^n\) for some \(n\) (this includes unconstrained matrix spaces \(\R^{n \times m}\) for some \(n, m\) as they are trivially identifiable with \(\R^{nm}\)) or a smooth embedded manifold.
The augmented Lagrangian for \eqref{eq:opt-prob-general} with penalty parameter \(\rho > 0\) is given by
\begin{equation}\label{eq:theoretical-lagrangian}
\mathcal{L}(\mathcal{X}, \mathcal{Z}, \mathcal{W}) =
\phi(\mathcal{X}, \mathcal{Z}) + \langle \mathcal{W}, A(\mathcal{X}, Z_0) + Q(\mathcal{Z}_>)\rangle + \frac{\rho}{2} \norm{A(\mathcal{X}, Z_0) + Q(\mathcal{Z}_>)}^2
\end{equation}
with \(\mathcal{W} = (W_1, W_2)\).
A set of assumptions is required to proof global convergence of the ADMM algorithm to a local optimum of \eqref{eq:opt-prob-general}. It will be shown further below that the optimization problem in the main article fulfills the assumptions below.
\begin{assumption}[{\citep[Assumption 4.1 and 4.2]{Gao2020}}]%
\label{ass:ass-4-1-and-4-2}
The following assumptions hold.\\
\textbf{A1.1:} For sufficiently large \(\rho\), every ADMM subproblem can be solved to optimality.\\
\textbf{A1.2:} \(\Image Q \supseteq \Image A\)\\
\textbf{A1.3:} The following statements regarding the objective function \(\phi\) and \(Q_2\) hold:
\begin{enumerate}
\item \(\phi\) is coercive on the feasible region \(\{(\mathcal{X}, \mathcal{Z}) \,\vert\, A(\mathcal{X}, Z_0) + Q(\mathcal{Z}_>) = 0\}\).
\item \(\psi(\mathcal{Z}) = h(Z_0) + g_1(Z_S) + g_2(Z_2)\) where
\begin{enumerate}
\item \(h\) is proper, convex, and lsc.
\item \(Z_S\) represents either \(Z_1\) or \((Z_0, Z_1)\) and \(g_1\) is \((m_1, M_1)\)-strongly convex.
\item \(g_2\) is \(M_2\)-Lipschitz differentiable.
\end{enumerate}
\item \(Q_2\) is injective.
\end{enumerate}
\textbf{A1.4:} The function \(f(\mathcal{X})\) splits into
\begin{equation*}
f(\mathcal{X}) = F(X_0, \dots, X_n) + \sum_{i = 0}^n f_i(X_i)
\end{equation*}
where \(F\) is \(M_F\)-Lipschitz differentiable, the functions \(f_0, f_1, \dots, f_n\) are proper, lsc, and each \(f_i\) is continuous on \(\Domain f_i\).\\
\textbf{A1.5:} For each \(0 \leq l \leq n\) one of the following assumptions holds:
\begin{enumerate}
\item \emph{Either}, the subproblem involving \(X_l\) is solved with a proximal term,
\item \emph{or} the following hold:
\begin{enumerate}
\item Viewing \(A(\mathcal{X}, Z_0) + Q(\mathcal{Z}_>) = 0\) as a system of constraints, there exists an index \(r(l)\) such that in the \(r(l)\)-th constraint,
\begin{equation*}
A_{r(l)}(\mathcal{X}, Z_0) = R_l(X_l) + \tilde{A}_l(\mathcal{X}_{\neq l}, Z_0)
\end{equation*}
for an injective linear map \(R_l\) and a multiaffine map \(\tilde{A}_l\).
\item \(f_l\) is either convex or \(M_l\)-Lipschitz differentiable.
\end{enumerate}
\end{enumerate}
\textbf{A1.6:} \(Z_0 \in Z_S\), so \(g_1(Z_S)\) is a strongly convex function of \(Z_0\) and \(Z_1\).
\end{assumption}

Let \(Y\) be any variable in \(\mathcal{X}\) or \(\mathcal{Z}\). Theorem~\ref{thm:general-stationarity} shows that the solution of each subproblem without manifold constraints fulfills the first-order condition \(0 \in \partial_Y \mathcal{L}\).
As argued above, as long as the Lagrangian is smooth or convex (or a sum of both convex and smooth parts) then there exist Lagrange multipliers \(M \in \R^m\) such that the first-order conditions \(\Phi(Y) = 0\) and \(0 \in \partial_Y \mathcal{L} + \mathrm{D}\Phi(Y)^\top M\) are fulfilled.

Since \(\Phi(X)\) is a smooth function, the continuity of the gradients guarantees their linear independence even in a neighborhood of \(Y \in \mathcal{M}\).
Together with Assumption~A1.2 and Lemma 5.4 in \citep{Gao2020}, this ensures that the \emph{constant rank constraint qualification} \citep{Janin1984} for the optimization problem \eqref{eq:opt-prob-general} is satisfied even when manifold constraints are added. This ensures that minimizers of \eqref{eq:opt-prob-general} are constrained stationary points the Lagrangian in Eq.~\eqref{eq:theoretical-lagrangian}. This means that the minimizer fulfills all manifold constraints, the multi-affine constraints and \(0 \in \partial \mathcal{L} + \sum_i \mathrm{D} \Phi_i(X_i)^\top M_i\) where \(i\) ranges over all manifold constraints and \(M_i\) are Lagrangian multipliers.

In the following, the theory in Sections 6 and 7 in \citet{Gao2020} is discussed and results are augmented to allow manifold-constrained subproblems during sequential updates.

A more general optimization problem than \eqref{eq:opt-prob-general} is considered first. Let
\begin{equation*}
\begin{cases}
\inf_{U_0, \dots, U_n} &f(U_0, \dots, U_n)\\
&C(U_0, \dots, U_n) = 0
\end{cases}
\end{equation*}
where  \(U_i \in \mathcal{M}_i\) and \(\mathcal{M}_i\) is an embedded submanifold of \(\R^n\). \(\mathcal{M}_i = \R^n\) is allowed. The augmented Lagrangian is then given by
\begin{equation*}
\mathcal{L}(U_0, \dots, U_n, W) = f(U_0, \dots, U_n) + \langle W, C(U_0, \dots, U_n) \rangle + \frac{\rho}{2} \norm{C(U_0, \dots, U_n)}^2.
\end{equation*}
The ADMM algorithm in Algorithm~1 in the main manuscript is applied to this problem. Note that manifold constrained subproblems are solved with an additional proximal term.
A set of assumptions about this generalized optimization problem are made below.
\begin{assumption}[{\citep[Assumption 6.1]{Gao2020}}]%
\label{ass:ass-6-1}
The following assumptions hold.\\
\noindent \textbf{A2.1}: For sufficiently large \(\rho\), every (proximal\footnote{If variable is manifold-constrained}) ADMM subproblem attains its optimal value.\\
\noindent \textbf{A2.2}: \(C(U_0, \dots, U_n)\) is smooth.\\
\noindent \textbf{A2.3}: \(f(U_0, \dots, U_n)\) is proper and lower semicontinuous.\\
\end{assumption}
Throughout \citep[Section 6]{Gao2020} it is assumed that Assumption~\ref{ass:ass-6-1} holds and we will assume the same below.

\begin{assumption}[{\citep[Assumption 6.2]{Gao2020}}]%
\label{ass:ass-6-2}
The function \(f\) has the form \(f(U_0, \dots, U_n) = F(U_0, \dots, U_n) + \sum_{i = 0}^n g_i(U_i)\), where \(F\) is smooth and \(g_i\) is continuous on \(\Domain g_i\).
\end{assumption}

In the following, denote the variable currently in focus by \(Y\), the remaining variables by \(U\), and the Lagrange multipliers by \(W\). In addition, we annotate variables already updated in one round of ADMM with a \(+\) and those still having the value from the previous round with a \(-\). Denote with \(U_<\) the variables updated before variable \(Y\) and with \(U_>\) those that will be updated after \(Y\). Write \(f_U(Y) := f(U_<, Y, U_>)\) and \(C(U, Y) := C(U_<, Y, C_>)\).
A manifold-constrained subproblem for a variable \(Y \in \mathcal{M}\) solves the optimization problem
\begin{equation*}
Y^+ = \argmin_{Y \in \mathcal{M}} \mathcal{L}(U, Y, W) + \frac{\alpha}{2}\norm{Y - Y^-}^2.
\end{equation*}
Associated with this subproblem is the Lagrangian
\begin{equation*}
\mathcal{L}_Y^P(U, Y, W, Y^-) = \mathcal{L}(U, Y, W) + \frac{\alpha}{2}\norm{Y - Y^-}^2.
\end{equation*}
Lemma~6.1 in \citep{Gao2020} continues to be valid unchanged.
\begin{lemma}[Modified {\citep[Lemma 6.2]{Gao2020}}]%
\label{lmm:lmm-6-2-mod}
The general subgradient of \(\mathcal{L}(U, Y, W)\) with respect to \(Y\) is given by
\begin{equation*}
\partial_Y \mathcal{L}(U, Y, W) = \partial f_U(Y)+(\nabla_Y C(U, Y))^\top W +\rho (\nabla_Y C(U, Y))^\top C(U, Y)
\end{equation*}
where \(\nabla_Y C(U, Y)\) is the Jacobian of \(Y \mapsto C(U, Y)\) and \((\nabla_Y C(U, Y))^\top\) is its adjoint. Defining \(V(U, Y, W) = (\nabla_Y C(U, Y))^\top W + \rho(\nabla_Y C(U, Y))^\top C(U, Y)\), the function \(V(U, Y, W)\) is continuous, and \(\partial_Y \mathcal{L}(U, Y, W) = \partial f_U(Y) + V(U, Y, W)\). The first-order condition satisfied by \(Y^+\) is therefore \(\Phi_Y(Y^+) = 0\) and
\begin{equation*}
\begin{aligned}
 0 &\in \partial f_{U_<^+,U_>}(Y^+) + (\nabla_Y C(U_<^+, Y^+, U_>))^\top W \\
 &+ \rho (\nabla_Y C(U_<^+, Y^+, U_>))^\top C(U_<^+, Y^+, U_>) \\
 &+ \alpha (Y^+ - Y^-) + \mathrm{D} \Phi_Y(Y^+)^\top M^+\\
&= \partial f_{U_<^+,U_>}(Y^+) + V(U_<^+, Y^+, U_>, W) \\
&+ \alpha (Y^+ - Y^-) + \mathrm{D} \Phi_Y(Y^+)^\top M^+
\end{aligned}
\end{equation*}
for some \(M^+\).
\end{lemma}
\begin{proof}
This follows immediately from Lemma~6.2 in \citep{Gao2020} and the definition of a stationary point on embedded submanifolds applied to \(\mathcal{L}^P_Y(U, Y, W, Y^-)\).
\end{proof}

\begin{lemma}[Modified {\citep[Lemma 6.3]{Gao2020}}]%
\label{lmm:lmm-6-3-mod}
Suppose that Assumptions~\ref{ass:ass-6-1} and \ref{ass:ass-6-2} hold. It then follows that
\begin{equation*}
\begin{aligned}
&V(U_<^{k + 1}, Y^{k + 1}, U_>^{k + 1}, W^{k + 1})
- V(U_<^{k + 1}, Y^{k + 1}, U_>^k, W^k) \\
+\, &\nabla_Y F(U_<^{k + 1}, Y^{k + 1}, U_>^{k + 1})
- \nabla_Y F(U_<^{k + 1}, Y^{k + 1}, U_>^k) \\
-\, &\alpha (Y^{k + 1} - Y^k)
\end{aligned}
\end{equation*}
is contained in  \(\partial_Y \mathcal{L}_Y(U^{k + 1}, Y^{k + 1}, W^{k + 1}) + \mathrm{D}\Phi(Y^{k + 1})^\top M^{k + 1}\) for some \(M^{k + 1}\).

Consider any limit point \((U^*, Y^*, W^*)\) of the ADMM algorithm. If \(\norm{W^+ - W} \rightarrow 0\) and \(\norm{U_>^+ - U_>} \rightarrow 0\), then for any subsequence \(\{(U^{k(s)}, Y^{k(s)}, W^{k(s)})\}_{s = 0}^\infty\) converging to \((U^*, Y^*, W^*)\), there exists a sequence \(\nu^s \in \partial_Y \mathcal{L}(U^{k(s)}, Y^{k(s)}, W^{k(s)}) + \mathrm{D}\Phi(Y^{k(s)})^\top M^{k(s)}\) with \(\nu^s \rightarrow 0\).
\end{lemma}

\begin{proof}
By Lemma~\ref{lmm:lmm-6-2-mod} there exists \(M^{k + 1}\) such that
\begin{equation*}
\begin{aligned}
-\Big(
    &\nabla_Y F(U_<^{k + 1}, Y^{k + 1}, U_>^k)
    + V(U_<^{k + 1}, Y^{k + 1}, U_>^k, W^k) \\
    &+ \alpha (Y^{k + 1} - Y^k)
\Big) \in \partial g_Y(Y^{k + 1}) + \mathrm{D}\Phi(Y^{k + 1})^\top M^{k + 1},
\end{aligned}
\end{equation*}
and also
\begin{equation*}
\begin{aligned}
\partial_Y \mathcal{L}_Y(U_<^{k + 1}, Y^{k + 1}, U_>^{k + 1}, W^{k + 1}) + \mathrm{D}\Phi(Y^{k + 1})^\top M^{k + 1} &=
\partial g_Y(Y^{k + 1}) + \mathrm{D}\Phi(Y^{k + 1})^\top M^{k + 1} \\
&+ \nabla_Y F(U_<^{k + 1}, Y^{k + 1}, U_>^{k + 1}) \\
&+ V(U_<^{k + 1}, Y^{k + 1}, U_>^{k+1}, W^{k+1}).
\end{aligned}
\end{equation*}
Combining these two results shows the statement about the subgradient.

Note that since \(Y^{k(s)} \rightarrow Y^*\) the sequence \(\{Y^{k(s)}\}\) is a Cauchy sequence and therefore \(\alpha (Y^{k + 1} - Y^k) \rightarrow 0\). Applying this and the result about the subgradient above to \(\partial_Y \mathcal{L}(U^{k(s)}, Y^{k(s)}, W^{k(s)}) + \mathrm{D}\Phi(Y^{k(s)})^\top M^{k(s)}\) shows that the second result above follows in the same way as in the proof by \citet[Lemma 6.3]{Gao2020}.
\end{proof}



\begin{lemma}[Modified {\citep[Lemma A.12]{Gao2020}}]%
\label{lmm:lmm-A-12-mod}
Let \(f: \mathcal{M} \rightarrow \overline{\R}\) be a proper lsc function on a smooth embedded manifold \(\mathcal{M}\). Suppose that we have sequences \(x_k \rightarrow x\) in \(\Domain f\) and \(\nu_k \in \partial f(x_k) + \mathrm{D}g(x_k)^\top \lambda_k\) for some \(\lambda_k\) such that \(f(x_k) \rightarrow f(x)\) and \(\nu_k \rightarrow \nu\). Then \(\nu \in \partial f(x) + \mathrm{D}g(x)^\top \lambda\) for some \(\lambda\).
\end{lemma}

\begin{proof}
Note that \(\R^n = N_\mathcal{M}(x_k) \oplus T_\mathcal{M}(x_k)\) where \(T_\mathcal{M}(x_k) = \{y \in \R^n \,\vert\, \mathrm{D}\Phi(x_k) y = 0\}\) since \(T_\mathcal{M}(x_k) = (N_\mathcal{M}(x_k))^\perp\) \citep[Example 6.8]{Rockafellar2009}. Without loss of generality we can therefore assume that \(\nu_k = \mu_k + \mathrm{D}g(x_k)^\top \lambda_k\) for \(\mu_k \in \partial f(x_k) \cap T_\mathcal{M}(x_k)\). Otherwise we could write \(\mu_k = \tilde{\mu}_k + \mathrm{D}g(x_k)^\top \tilde{\lambda}_k\) with \(\tilde{\mu}_k \in \partial f(x_k) \cap T_\mathcal{M}(x_k)\) and therefore \(\nu_k = \tilde{\mu}_k + \mathrm{D}g(x_k)^\top (\lambda_k + \tilde{\lambda})k)\).

Since \(\mathrm{D}\Phi(x_k) \mu_k = 0\) and using that \(\mathrm{D}\Phi(x_k)\) has full row-rank for all \(k\), it then holds that \(\lambda_k = (\mathrm{D}\Phi(x) \mathrm{D}\Phi(x)^\top)^{-1} \mathrm{D}\Phi(x_k) \nu_k\) and \(\lambda_k \rightarrow \lambda = (\mathrm{D}\Phi(x) \mathrm{D}\Phi(x)^\top)^{-1} \mathrm{D}\Phi(x) \nu\) since \(\nu_k \rightarrow \nu\) and the continuity of \(\mathrm{D}\Phi\). This implies that \(\mu_k = \nu_k - \mathrm{D}g(x_k)^\top \lambda_k \rightarrow \mu = \nu - \mathrm{D}g(x)^\top \lambda\). Lemma A.12 in \citep{Gao2020} then implies that \(\mu \in \partial f(x)\) which is equivalent to \(\nu \in \partial f(x) + \mathrm{D}\Phi(x)^\top \lambda\) for some \(\lambda\).
\end{proof}

\begin{lemma}[Modified {\citep[Lemma 6.4]{Gao2020}}]%
\label{lmm:lmm-6-4-mod}
Suppose that Assumptions~\ref{ass:ass-6-1} and \ref{ass:ass-6-2} hold. Let \((U^*, Y^*,W^*)\) be a feasible limit point. By passing to a subsequence converging to the limit point, let \(\{(U^s, Y^s, W^s)\}\) be a subsequence of the ADMM iterates with \((U^s, Y^s, W^s) \rightarrow (U^*, Y^*, W^*)\). Suppose that there exists a sequence \(\{\nu^s\}\) such that \(\nu^s \in \partial_Y \mathcal{L}(U^s, Y^s, W^s) + \mathrm{D}\Phi(Y^s)^\top M^s\) for some \(M^s\) all \(s\) and \(\nu^s \rightarrow 0\). Then \(0 \in \partial g_Y(Y^*) + \nabla_Y F(U^*, Y^*) + (\nabla_Y C(U^*, Y^*))^\top W^* + \mathrm{D}\Phi(Y^*)^\top M^*\) for some \(M^*\), so \((U^*, Y^*, W^*)\) is a constrained stationary point.
\end{lemma}
\begin{proof}
The proof is analogous to the proof of Lemma~6.4 in \citep{Gao2020} with the modification of using Lemma~\ref{lmm:lmm-A-12-mod} instead of Lemma~A.12 in \citep{Gao2020}. As in the proof of Lemma~\ref{lmm:lmm-6-3-mod}, the term in the subgradient introduced by the proximal term \(\alpha (Y^{s + 1} - Y^s)\) vanishes due to \(\{Y^s\}\) being a Cauchy sequence.
\end{proof}
Corollary 6.5 in \citep{Gao2020} holds unchanged.
\begin{corollary}[Modified {\citep[Corollary 6.7]{Gao2020}}]%
\label{corr:corr-6-7-mod}
Assume now that \(C(U, Y)\) is multiaffine. Taking \(\nabla_Y C(U, Y) = C_U\) in Lemma~\ref{lmm:lmm-6-2-mod} and writing \(C(U, Y) = C_U(Y) - b_U\), the general subgradient of \(Y \mapsto \mathcal{L}(U, Y, W)\) is given by
\begin{equation*}
\partial_Y \mathcal{L}(U, Y, W) =
\partial f_U(Y) + C_U^\top W + \rho C_U^\top (C_U(Y) - b_U).
\end{equation*}
Thus the first-order condition for \(Y \mapsto \mathcal{L}(U, Y, W)\) at \(Y^+\) is given by
\begin{equation*}
0 \in
\partial f_U(Y^+) + C_U^\top W + \rho C_U^\top (C_U(Y^+) - b_U)
+ \alpha(Y^+ - Y^-) + \mathrm{D}\Phi(Y^+)^\top M^+
\end{equation*}
for some \(M^+\).
\end{corollary}
\begin{proof}
Follows directly from Lemma~\ref{lmm:lmm-6-2-mod}.
\end{proof}

\begin{lemma}[Modified {\citep[Lemma 6.8]{Gao2020}}]%
\label{lmm:lmm-6-8-mod}
The change in the augmented Lagrangian when the primal variable \(Y\) is updated to \(Y^+\) is given by
\begin{equation*}
\mathcal{L}(U, Y, W) - \mathcal{L}(U, Y^+, W) = f_U(Y) - f_U(Y^+) - \langle \nu, Y - Y^+ \rangle + \frac{\rho}{2} \norm{C_U(Y) - C_U(Y^+)}^2
\end{equation*}
for some \(\nu \in \partial f_U(Y^+) + \alpha(Y^+ - Y^-) + \mathrm{D} \Phi(Y^+)^\top M^+\) and \(M^+\).
\end{lemma}

\begin{proof}
The proof is analogous to the proof of Lemma~6.8 in \citep{Gao2020}. The only difference is that Corollary~\ref{corr:corr-6-7-mod} implies that the subgradient \(\nu \in \partial f_U(Y^+) + \alpha(Y^+ - Y^-) + \mathrm{D} \Phi(Y^+)^\top M^+\) for some \(M^+\).
\end{proof}

From Lemma~\ref{lmm:lmm-6-8-mod} it follows that if \(f_U(Y)\) is convex or \(L\)-Lipschitz differentiable, then
\begin{equation*}
\begin{aligned}
f_U(Y) - f_U(Y^+) - \langle \nu, Y - Y^+ \rangle &=
f_U(Y) - f_U(Y^+) \\
&- \langle \nu - \mathrm{D} \Phi(Y^+)^\top M^+ - \alpha (Y - Y^+), Y - Y^+ \rangle \\
&- \langle \alpha (Y - Y^+) + \mathrm{D} \Phi(Y^+)^\top M^+, Y - Y^+ \rangle \\
&\geq - \frac{\mu + \alpha}{2} \norm{Y - Y^+}^2 - \langle \mathrm{D} \Phi(Y^+)^\top M^+, Y - Y^+ \rangle \\
\end{aligned}
\end{equation*}
where \(\mu = L\) if \(f_U(Y)\) is L-Lipschitz differentiable or \(\mu = 0\) if \(f_U(Y)\) is convex.
Despite \(\mathrm{D}\Phi(Y^+)^\top M^+\) being in the normal cone to \(\mathcal{M}\) at \(Y^+\), unless \(\mathcal{M}\) is convex (which it typically is not if it is an embedded manifold), it will not generally hold that \(\langle \mathrm{D} \Phi(Y^+)^\top M^+, Y - Y^+ \rangle \leq 0\). Therefore, despite the similarity of results, Lemma~\ref{lmm:lmm-6-8-mod} cannot fulfill the same role as \citep[Lemma 6.8]{Gao2020} in ensuring sufficient descent of the ADMM algorithm when updating manifold constrained variables. However, this is where the importance of the proximal update comes in. Using the proximal update ensures that
\begin{equation*}
\mathcal{L}(U, Y^+, W) + \frac{\alpha}{2}\norm{Y^+ - Y^-}^2 = \mathcal{L}_Y^P(U, Y^+, W, Y^-) \leq \mathcal{L}_Y^P(U, Y^-, W, Y^-) = \mathcal{L}(U, Y^-, W),
\end{equation*}
or put differently \(\mathcal{L}(U, Y^-, W) - \mathcal{L}(U, Y^+, W) \geq \frac{\alpha}{2}\norm{Y^+ - Y^-}^2\).

To finish up the results in \citep[Section 6]{Gao2020}, Lemmas 6.11 and 6.12 as well as Corollary 6.14 in \citep{Gao2020} address only the variables in the simultaneous update and therefore continue to hold unchanged.
\begin{lemma}[Modified {\citep[Corollary 6.15 and Lemma 6.16]{Gao2020}}]%
\label{lmm:corr-6-15-lmm-6-16}
Let \(C_<(Y) = b_<\) denote the constraint \(C(U_<^+, Y, U_>) = b_<\) as a linear function of \(Y\), after updating the variables \(U_<\), and let \(C_>(Y) = b_>\) denote the constraint \(C(U_<^+, Y, U_>^+) = b_>\). Then the general subgradient \(\partial_Y \mathcal{L}(U_<^+, Y^+, U_>^+, W^+) + \mathrm{D}\Phi(Y^+)^\top M^+\) for some \(M^+\) contains
\begin{equation*}
\begin{aligned}
&(C_>^\top - C_<^\top)W^+ + C_<^\top(W^+ - W) \\
+\, &\rho (C_>^\top - C_<^\top)(C_>(Y^+) - b_>) \\
+\, &\rho C_<^\top (C_>(Y^+) - b_> - (C_<(Y^+) - b_<)) \\
+\, &\nabla_Y F(U_<^+, Y^+, U_>^+) - \nabla_Y F(U_<^+, Y^+, U_>) \\
-\, &\alpha (Y^+ - Y^-).
\end{aligned}
\end{equation*}
In particular, if \(Y\) is the final block, then \(C_<^\top (W^+ - W) \in \partial_Y \mathcal{L}(U_<^+, Y^+, W^+) + \mathrm{D}\Phi(Y^+)^\top M^+\). 
Assume that the following holds
\begin{enumerate}
\item \(\norm{W^{k - 1} - W^k} \rightarrow 0\),
\item \(\norm{C_>^k - C_<^k} \rightarrow 0\),
\item \(\norm{b_> - b_<} \rightarrow 0\),
\item \(\norm{Y^{k - 1} - Y^k} \rightarrow 0\),
\item \(\{W^k\}, \{Y^k\}, \{C_<^k\}, \{C_>^k(Y^k) - b_>\}\) are bounded,
\item and \(\norm{U_>^k - U_>^{k - 1}} \rightarrow 0\).
\end{enumerate}
Then there exists a sequence \(\nu_k \in \partial_Y \mathcal{L}^k\) with \(\nu^k \rightarrow 0\). In particular, if \(Y\) is the final block, then only Condition 1 and the boundedness of \(\{C_<^k\}\) are needed.
\end{lemma}
\begin{proof}
The proof is analogous to the proofs of Corollary 6.15 and Lemma 6.16 in \citep{Gao2020}.
\end{proof}

Note that the usage of the proximal update required the addition of Condition 4 in Lemma~\ref{lmm:corr-6-15-lmm-6-16} above. However, this is not an issue since the only two places where this result is used, the proofs of \citep[Theorem 4.1 and 4.3]{Gao2020}, are either concerned only with final blocks, meaning that no proximal updates are present, or ensure that Condition 4 holds.

Lemma 6.17 holds unchanged. This concludes changes to the theory in \citep[Section 6]{Gao2020}.

\citep[Section 7]{Gao2020} uses Assumption~\ref{ass:ass-4-1-and-4-2} and the theory developed in \citep[Section 6]{Gao2020} to proof convergence of the ADMM iterates and the augmented Lagrangian objective function. If manifold constraints are present on subproblems, the modified results above can be used as drop-in replacements all throughout \citep[Section 7]{Gao2020}. This is primarily necessary in \citep[Lemma 7.9]{Gao2020} where the convergence of the variables involved in sequential updates is shown. The addition of a manifold constraint adds an additional inner product as shown after Lemma~\ref{lmm:lmm-6-8-mod} which can be bounded from below as shown there. The second adjustment that needs to be performed is in \citep[Proof of Theorem 4.5]{Gao2020}. By \citep[p. 310]{Rockafellar2009} it holds for the indicator function \(\iota_C\) of a set \(C \subseteq \R^n\) that \(\partial \iota_C(\bar{\vect{x}}) = \partial^\infty \iota_C(\bar{\vect{x}}) = N_C(\bar{\vect{x}})\). Instead of adding the manifold constraints as set constraints for the variables, it is therefore possible to add them as indicator functions to the augmented Lagrangian in Eq.~\eqref{eq:theoretical-lagrangian}. Using the result about subgradients of indicator functions, the extended theory above holds unchanged.
The same arguments about subgradients and their bounds by differences of iterates as in \citep[Proof of Theorem 4.5]{Gao2020} can then be made for the augmented Lagrangian with added indicator functions.

\begin{theorem}[{\citep[Theorem~4.1, Theorem~4.3, Remark~4.2, Remark~4.4]{Gao2020}}]%
\label{thm:convergence:existence-limit-points}
Assume Assumption~\ref{ass:ass-4-1-and-4-2} holds. Write \(\sigma_1\) for
the smallest positive eigenvalue of \(Q_1^\top Q_1\), \(\sigma_2\) for
the smallest eigenvalue (positive or negative) of \(Q_2^\top Q_2\),
\(r_l\)  for the smallest eigenvalue of \(R_l^\top R_l\) and set
\(\kappa = M_1 / m_1\). Then suppose that \(\rho\) fulfills
\begin{align}
\frac{\sigma_2 \rho}{2} - \frac{M_2^2}{\sigma_2 \rho} &> \frac{M_2}{2}
\label{eq:convergence:rho-ineq-1}
\text{ and} \\
\rho &> \max\left(\frac{2 M_1 \kappa}{\sigma_1},\,
\frac{M_1 + M_2}{2}
\max\left(\sigma_2^{-1}, \frac{(1 + 2\kappa)^2}{\sigma_1}\right)\right).
\label{eq:convergence:rho-ineq-2}
\end{align}
Then the sequence of iterates \((\bm{\Theta}^t, \bm{\Delta}^t, \bm{\Lambda}^t)\)
produced by Algorithm~1 in the main manuscript is bounded, which implies the
existence of limit points. It holds that every limit point
\((\bm{\Theta}^*, \bm{\Delta}^*, \bm{\Lambda}^*)\) satisfies
\(A(\bm{\Theta}^*) + Q(\bm{\Delta}^*) = \bm{0}\)
and is a stationary point of \eqref{eq:m-admm:admm-opt-problem}.
\end{theorem}


If, in addition, it is assumed that the augmented Lagrangian is a K-\L\ function it is possible to prove the convergence
of the sequence of iterates produced by Algorithm~1 in the main manuscript.

\begin{theorem}[Modified {\citep[Theorem 4.5]{Gao2020}}]%
\label{thm:convergence:uniqueness-limit-point-v1}
Assume the augmented Lagrangian in
Eq.\@~\eqref{eq:theoretical-lagrangian} is a K-\L\ function and
suppose that all the assumptions and conditions on \(\rho\) are fulfilled as
in Theorem~\ref{thm:convergence:existence-limit-points}. The sequence of iterates 
\((\bm{\Theta}^t, \bm{\Delta}^t, \bm{\Lambda}^t)\) produced by 
Algorithm~1 in the main manuscript converges to a single limit point 
\((\bm{\Theta}^*, \bm{\Delta}^*, \bm{\Lambda}^*)\) which is a constrained stationary point for \eqref{eq:m-admm:admm-opt-problem}.
\end{theorem}


\section{Applying the convergence theory to (P)}%
\label{sec:check-assump}

The optimization problem in the main manuscript is
\begin{equation}\tag{P}\label{eq:m-admm:admm-opt-problem}
\left\{\begin{array}{rl}
\displaystyle\min_{\mat{\Theta}, \mat{\Delta}}
&\displaystyle\frac{1}{2} \sum_{(i, j, \gamma) \in \mathcal{I}}
\norm*{\mathcal{P}_{\mathrm{\Omega}_{ij, \gamma}}\left(\mat{X}_{ij,\gamma} - \mat{Z}_{ij,\gamma}\right)}_2^2 \\
&\displaystyle+ \lambda_1 \sum_{(i, j, \gamma) \in \mathcal{I}} \sum_{l = 1}^k
\vect{w}_{ij,\gamma}^{(l)} \abs{\mat{D}_{ij,\gamma}^{(l, l)}}
+ \lambda_2 \sum_{l = 1}^m  \vect{w}_1^{(l)} \norm{\mat{U}_l}_1
+ \frac{\mu}{2} \sum_{l = 1}^m \vect{w}_2^{(l)} \norm{\mat{V}'_l}_F^2 \\
\text{subject to }& \mat{V}_l \in \Stiefel_{k, p_l}, \mat{U}_l \in \Oblique_{k, p_l}, \mat{D}_{ij,\gamma} \in \mathcal{D}_k, \text{ and} \\
A(\mat{\Theta}) + Q(\mat{\Delta}) &=
\begin{pmatrix}
A_1(\mat{\Theta}) & + & Q_1(\mat{\Delta}_1) \\
A_2(\mat{\Theta}) & + & Q_2(\mat{\Delta}_2) \\
\end{pmatrix}
=
\begin{pmatrix}
\mat{U}_l - \mat{V}_l & + & (-\mat{V}'_l) \\
-\mat{V}_i \mat{D}_{ij,\gamma} \mat{V}_j^\top & + & \mat{Z}_{ij,\gamma} \\
\end{pmatrix}
= \mat{0}
\end{array}
\right.
\end{equation}
where it is understood that each constraint is included for all \(l \in [m]\) and \((i, j, \gamma) \in \mathcal{I}\). The following augmented Lagrangian is associated with \eqref{eq:m-admm:admm-opt-problem}
\begin{equation}\label{eq:m-admm:admm-augmented-lagrangian}
\begin{aligned}
\mathcal{L}_{\rho}(\mat{\Theta}, \mat{\Delta}, \mat{\Lambda}) &=
\frac{1}{2} \sum_{(i, j, \gamma) \in \mathcal{I}}
\norm*{\mathcal{P}_{\mathrm{\Omega}_{ij, \gamma}}\left(\mat{X}_{ij,\gamma} - \mat{Z}_{ij,\gamma}\right)}_2^2 \\
&+ \lambda_1 \sum_{(i, j, \gamma) \in \mathcal{I}} \left[\iota_{\mathcal{D}_k}(\mat{D}_{ij,\gamma}) +
\sum_{l = 1}^k \vect{w}_{ij,\gamma}^{(l)} \abs{\mat{D}_{ij,\gamma}^{(l, l)}}\right]
+ \lambda_2 \sum_{l = 1}^m \vect{w}_1^{(l)} \norm{\mat{U}_l}_1
+ \frac{\mu}{2} \sum_{l = 1}^m 
\vect{w}_2^{(l)} \norm{\mat{V}'_l}_F^2 \\
&+ \frac{\rho}{2} \sum_{l = 1}^m \left[
\norm{\mat{U}_l - \mat{V}_l - \mat{V}'_l + \mat{\Lambda}^{(1)}_{l}}_F^2
- \norm{\mat{\Lambda}^{(1)}_{l}}_F^2\right] \\
&+ \frac{\rho}{2} \sum_{(i, j, \gamma) \in \mathcal{I}} \left[
\norm{\mat{Z}_{ij,\gamma} - \mat{V}_i \mat{D}_{ij,\gamma} \mat{V}_j^\top
+ \mat{\Lambda}^{(2)}_{ij,\gamma}}_F^2
- \norm{\mat{\Lambda}^{(2)}_{ij,\gamma}}_F^2\right].
\end{aligned}
\end{equation}
In the following, all assumptions in Assumption~\ref{ass:ass-4-1-and-4-2} will be checked to ensure that that the extended convergence theory presented above holds.

It is easy to see that the functions \(A_1\) and \(A_2\) are multiaffine and \(Q_1\) and \(Q_2\) are linear.

The sets \(\Stiefel_{k, p}\), \(\Oblique_{k, p}\), and \(\mathcal{D}_k\) are smooth embedded manifolds. However, \(\mathcal{D}_k\) is included in the augmented Lagrangian as an indicator function. This is possible since \(\mathcal{D}_k\) is a convex set and therefore \(\iota_{\mathcal{D}_k}\) is a convex function.

The Stiefel manifold is the set of all \(p \times k\) matrices with orthogonal columns, formally defined as \(\Stiefel_{k, p} = \{\mat{V} \in \R^{p \times k} \,\vert\, \mat{V}^\top \mat{V} = \mat{I}\}\). The set is described by the zeros of the function \(\R^{p \times k} \rightarrow \R^{k \times k}, \mat{V} \mapsto \Phi(\mat{V}) = \mat{V}^\top \mat{V} - \mat{I}\). \(\Phi(\mat{V})\) is smooth with \(k^2 \times pk\) Jacobian \(\mathrm{D}\Phi(\mat{V}) = (\mat{K}_k + \mat{I}_{k^2}) (\mat{I}_k \otimes \mat{V}^\top)\) where \(\mat{K}_k\) is the unique orthogonal permutation matrix such that \(\mat{K}_k \Vectorize(\mat{\xi}) = \Vectorize(\mat{\xi}^\top)\). The matrix \(\mat{K}_k + \mat{I}_{k^2}\) is invertible and if \(\mat{V} \in \Stiefel_{k, p}\) then the rows of \(\mat{V}^\top\) are orthogonal, hence linearly independent. By definition of the Kronecker product, it follows that \(\mat{I}_k \otimes \mat{V}^\top\) has \(k^2\) linearly independent rows. This shows that \(\mathrm{D}\Phi(\mat{V})\) has full row-rank \(k^2\) on \(\Stiefel_{k, p}\).

The oblique manifold is the set of all \(p \times k\) matrices with column norm equal to 1, formally defined as \(\Oblique_{k, p} = \{\mat{V} \in \R^{p \times k} \,\vert\, (\mat{V}^\top\mat{V})^{(i, i)} - 1 = 0 \text{ for } i = 1, \dots, k\}\). The set is described by the zeros of the function \(\R^{p \times k} \rightarrow \R^k, \mat{V} \mapsto \sum_{i = 1}^k (\norm{\mat{V} \vect{e}_i}_2^2 - 1) \vect{e}_i\), where \(\vect{e}_i\) is the \(i\)-th standard unit vector. \(\Phi(\mat{V})\) has the \(k \times pk\) Jacobian \(\mathrm{D}\Phi(\mat{V}) = 2\sum_{i = 1}^k \left[(\vect{e}_i^\top \otimes \vect{e}_i) \otimes \vect{e}_i^\top \mat{V}^\top\right]\). If \(\mat{V} \in \Oblique_{k, p}\) then \(\norm{\mat{V}^{(:, i)}}_2 = 1\) which in particular implies that each column is different from the zero vector. This ensures that \(\mathrm{D}\Phi(\mat{V})\) has full row-rank \(k\) on \(\Oblique_{k, p}\).\\


\emph{\textbf{A1.1:} For sufficiently large \(\rho\), every ADMM subproblem can be solved to optimality.}

As detailed in
Sections~\ref{ssec:m-admm:subproblems-theta}~and~%
\ref{ssec:m-admm:subproblems-delta} every subproblem is solvable analytically
for any \(\rho > 0\) and therefore each attains its optimal value.\\


\emph{\textbf{A1.2:} \(\Image Q \supseteq \Image A\)}

The linear mapping \(Q\) in \eqref{eq:m-admm:admm-opt-problem} is a concatenation of identity matrices or negative identity matrices. Therefore, \(\Image(\Vectorize(Q)) = \R^N\) where \(N = k \sum_l p_l + \sum_{(i, j, \gamma) \in \mathcal{I}} p_i p_j\). This always contains \(\Image(\Vectorize(A)) \subset \R^N\).\\


\emph{\textbf{A1.3.1:} \(\phi\) is coercive on the feasible region \(\{(\mathcal{X}, \mathcal{Z}) \,\vert\, A(\mathcal{X}, Z_0) + Q(\mathcal{Z}_>) = 0\}\).}

Note that the Stiefel and Oblique manifolds are both compact sets and therefore in particular bounded. Coercivity in these variables is therefore not required.

The functions \(\mat{D}_{ij,\gamma} \mapsto \iota_{\mathrm{D}_k}(\mat{D}_{ij,\gamma}) + \sum_l \vect{w}_{ij,\gamma}^{(l)} \abs{\mat{D}_{ij,\gamma}^{(l, l)}}\) are coercive and since \(\mat{Z}_{ij,\gamma} = \mat{V}_i \mat{D}_{ij,\gamma} \mat{V}_j^\top\) on the feasible region it follows that the objective is coercive in \(\mat{Z}_{ij,\gamma}\) as well.

On the feasible region it holds that \(\mat{V}'_l = \mat{U}_l - \mat{V}_l\) and therefore \(\mat{V}'_l\) is bounded in that region as well. In addition, the functions \(\mat{V}'_l \mapsto \frac{\mu}{2} \vect{w}_2^{(l)} \norm{\mat{V}'_l}_F^2\) are coercive.

The objective function \(\phi\) is therefore coercive on the feasible region.\\


\emph{\textbf{A1.3.2:} \(\psi(\mathcal{Z}) = h(Z_0) + g_1(Z_S) + g_2(Z_2)\) where (i) \(h\) is proper, convex, and lsc., (ii) \(Z_S\) represents either \(Z_1\) or \((Z_0, Z_1)\) and \(g_1\) is \((m_1, M_1)\)-strongly convex, and (iii) \(g_2\) is \(M_2\)-Lipschitz differentiable.}

Note that careful analysis of the proofs in \citep{Gao2020}, in particular \citep[Corollary 7.3]{Gao2020}, reveals that \(h\) is not necessary for establishing convergence and that only one of \(g_1\) or \(g_2\) needs to be present. However, all blocks in the linear parts of each constraint, i.e.\@ those in \(Q\) in \eqref{eq:m-admm:admm-opt-problem}, need to be part of either \(g_1\) or \(g_2\).

For \eqref{eq:m-admm:admm-opt-problem} it holds that \(Z_0 = \varnothing\) and therefore \(h\) does not exist.
\(Z_1\) is equal to \(\mat{V}' = (\mat{V}'_1, \dots, \mat{V}'_m)\) and \(g_1(Z_1) = \frac{\mu}{2} \sum_{l = 1}^m \vect{w}_2^{(l)} \norm{\mat{V}'_l}_F^2\).
To show that \(g_1\) is \((m_1, M_1)\)-strongly convex, it has to be shown that \(g_1\) is convex, \(M_1\)-Lipschitz differentiable, and fulfills Eq.\@~\eqref{eq:defs-res:strongly-convex} with constant \(m_1\).
\(g_1\) is smooth, its gradient is 
\begin{equation}
\nabla_{\Vectorize(\mat{V}')} g_1(\mat{V}') = \mu
\left[
\vect{w}_2^{(1)} \Vectorize(\mat{V}'_1)^\top,
\dots,
\vect{w}_2^{(m)} \Vectorize(\mat{V}'_m)^\top
\right]^\top,
\end{equation}
and the Hessian is
\begin{equation}
\nabla^2_{\Vectorize(\mat{V}')\Vectorize(\mat{V}')} g_1(\mat{V}') = \mu
\begin{pmatrix}
\vect{w}_2^{(1)} \mat{I}_{p_1k} & & \\
& \ddots & \\
& & \vect{w}_2^{(m)} \mat{I}_{p_mk} \\
\end{pmatrix}
\end{equation}
for all \(\mu > 0\) and \(\vect{w}_2^{(i)} > 0\). The Hessian is positive definite with maximal singular value \(M_1 := \mu \max_l \vect{w}_2^{(l)}\), which shows that \(g_1\) is convex and \(M_1\)-Lipschitz differentiable.
In addition, it holds for all \(\mat{V}'\) and \(\tilde{\mat{V}}'\) that
\begin{equation}
g_1(\tilde{\mat{V}}') - g_1(\mat{V}') -
\left\langle
\nabla_{\Vectorize(\mat{V}')} g_1(\mat{V}'),
\tilde{\mat{V}}' - \mat{V}
\right\rangle_F
\geq \frac{\mu \min_l \vect{w}_2^{(l)}}{2}
\norm{\tilde{\mat{V}}' - \mat{V}}_F^2.
\end{equation}
Setting \(m_1 := \mu \min_l \vect{w}_2^{(l)}\) shows that \(g_1\) is 
\((m_1, M_1)\)-strongly convex.

\(Z_2\) is equal to \(\mat{Z} = (\mat{Z}_{ij,\gamma})_{(i,j,\gamma) \in \mathcal{I}}\) and
\begin{equation*}
g_2(Z_2) = \frac{1}{2} \sum_{(i, j, \gamma) \in \mathcal{I}}
\norm*{\mathcal{P}_{\mathrm{\Omega}_{ij, \gamma}}\left(\mat{X}_{ij,\gamma} - \mat{Z}_{ij,\gamma}\right)}_2^2.
\end{equation*}
Its partial derivatives are
\begin{equation*}
\frac{\partial}{\partial \mat{Z}_{ij,\gamma}^{(r, c)}} g_2(\mat{Z}) =
\begin{cases}
\mat{Z}_{ij,\gamma}^{(r, c)} - \mat{X}_{ij,\gamma}^{(r, c)} & \text{if } \mat{X}_{ij,\gamma}^{(r, c)} \text{ is observed} \\
0 & \text{otherwise}
\end{cases}
\end{equation*}
which clearly defines a gradient which is 1-Lipschitz continuous. The function \(g_2\) is therefore 1-Lipschitz differentiable.\\


\emph{\textbf{A1.3.3:} \(Q_2\) is injective.}

Since \(Q_2(\mat{\Delta}_2) = \mat{\Delta}_2\) the function is clearly injective.\\


\emph{\textbf{A1.4:} The function \(f(\mathcal{X})\) splits into \(f(\mathcal{X}) = F(X_0, \dots, X_n) + \sum_{i = 0}^n f_i(X_i)\) where \(F\) is \(M_F\)-Lipschitz differentiable, the functions \(f_0, f_1, \dots, f_n\) are proper, lsc, and each \(f_i\) is continuous on \(\Domain f_i\).}

The function \(F\) is not used in our formulation of \eqref{eq:m-admm:admm-opt-problem} and the theory in \citep{Gao2020} allows for omission of this term.
Each variable block which is not part of \(Z_1\) or \(Z_2\) is associated with a function \(f_l\) and they are described below.
\begin{itemize}
\item Stiefel manifold constrained variables \(\mat{V}_l\) do not appear in the objective function and we can therefore conceptually add \(\mat{V}_l \mapsto 0\) which trivially fulfills the assumptions.
\item The variables \(\mat{D}_{ij,\gamma}\) appear in with \(\mat{D}_{ij,\gamma} \mapsto \iota_{\mathrm{D}_k}(\mat{D}_{ij,\gamma}) + \sum_{l = 1}^k \vect{w}_{ij,\gamma}^{(l)} \abs{\mat{D}_{ij,\gamma}^{(l, l)}}\). Since the set of diagonal matrices \(\mathrm{D}_k\) is a convex set it holds that this function is a continuous convex function and therefore fulfills the assumptions.
\item Oblique-manifold constrained variables \(\mat{U}_l\) appear in \(\mat{U}_l \mapsto \lambda_2 \vect{w}_1^{(l)} \norm{\mat{U}_l}_1\) which is a continuous convex function and therefore fulfills the assumptions.
\end{itemize}


\emph{\textbf{A1.5:} \emph{Either} the subproblem involving \(X_l\) is solved with a proximal term \emph{or} the following hold: For each \(0 \leq l \leq n\) it holds that: (i) Viewing \(A(\mathcal{X}, Z_0) + Q(\mathcal{Z}_>) = 0\) as a system of constraints, there exists an index \(r(l)\) such that in the \(r(l)\)-th constraint, \(A_{r(l)}(\mathcal{X}, Z_0) = R_l(X_l) + \tilde{A}_l(\mathcal{X}_{\neq l}, Z_0)\) for an injective linear map \(R_l\) and a multiaffine map \(\tilde{A}_l\).
(ii) \(f_l\) is either convex or \(M_l\)-Lipschitz differentiable.}

It is only necessary to show this property for \(\mat{D}_{ij,\gamma}\) since the subproblems for \(\mat{V}_l\) and \(\mat{U}_l\) are solved with a proximal term.

Technically, this assumption is not fulfilled for \(\mat{D}_{ij,\gamma}\) for \eqref{eq:m-admm:admm-opt-problem}. However, due to how the assumption is used in \citep[Lemma~7.9]{Gao2020} (and only there) and the fact that \(\mat{V}_i, \mat{V}_j\) have orthogonal columns, it can be shown that it is not necessary in case of \eqref{eq:m-admm:admm-opt-problem}. When \citep[Lemma~7.9]{Gao2020} is applied to \eqref{eq:m-admm:admm-opt-problem} the term \(\norm{\mat{V}_i \mat{D}_{ij,\gamma} \mat{V}_j^\top - \mat{V}_i \mat{D}_{ij,\gamma}^+ \mat{V}_j^\top}_F^2\) appears. The purpose of the assumption above is to be able to bound this norm by \(\norm{R (\mat{D}_{ij,\gamma} - \mat{D}_{ij,\gamma}^+)}_F^2\), where \(R\) is a linear map. However, here this can be shown directly making the assumption unnecessary. It holds
\begin{equation*}
\begin{aligned}
\norm{\mat{V}_i \mat{D}_{ij,\gamma} \mat{V}_j^\top - \mat{V}_i \mat{D}_{ij,\gamma}^+ \mat{V}_j^\top}_F^2 &= \norm{\mat{V}_i (\mat{D}_{ij,\gamma} - \mat{D}_{ij,\gamma}^+) \mat{V}_j^\top}_F^2 \\
&= \Trace(\mat{V}_j (\mat{D}_{ij,\gamma} - \mat{D}_{ij,\gamma}^+) \mat{V}_i^\top \mat{V}_i (\mat{D}_{ij,\gamma} - \mat{D}_{ij,\gamma}^+) \mat{V}_j^\top) \\
&= \Trace((\mat{D}_{ij,\gamma} - \mat{D}_{ij,\gamma}^+) \mat{V}_i^\top \mat{V}_i (\mat{D}_{ij,\gamma} - \mat{D}_{ij,\gamma}^+) \mat{V}_j^\top \mat{V}_j) \\
&= \Trace((\mat{D}_{ij,\gamma} - \mat{D}_{ij,\gamma}^+)^2) = \norm{\mat{D}_{ij,\gamma} - \mat{D}_{ij,\gamma}^+}_F^2.
\end{aligned}
\end{equation*}


\emph{\textbf{A1.6:} \(Z_0 \in Z_S\), so \(g_1(Z_S)\) is a strongly convex function of \(Z_0\) and \(Z_1\).}

For \eqref{eq:m-admm:admm-opt-problem} it holds that \(Z_0 = \varnothing\), as noted in the justification of \textbf{A1.3.2} above, and therefore this assumption is fulfilled trivially.


\begin{remark}
For the optimization problem \eqref{eq:m-admm:admm-opt-problem} the inequality in
Eq.\@~\eqref{eq:convergence:rho-ineq-1} for \(\rho\) turns out to be
fulfilled if
%
\begin{equation}
\frac{\rho}{2} - \frac{1}{\rho} > \frac{1}{2}
\Leftrightarrow \rho > 2.
\end{equation}
In addition, Eq.\@~\eqref{eq:convergence:rho-ineq-2} requires that
\begin{equation}
\rho > \max\left(
2\frac{\mu \left(\max_l \bm{w}_2^{(l)}\right)^2}{\min_l \bm{w}_2^{(l)}},
\frac{1 + \mu \max_l \bm{w}_2^{(l)}}{2}
\left(1 + 2 \frac{\max_l \bm{w}_2^{(l)}}{\min_l \bm{w}_2^{(l)}}\right)^2
\right).
\end{equation}
\end{remark}


Lastly, it needs to be shown that the augmented Lagrangian in Eq.\@~\eqref{eq:m-admm:admm-augmented-lagrangian} is a Kurdyka--{\L}ojasiewicz function. We will do so by showing that the augmented Lagrangian with manifold set constraints added as indicator functions is a semi-algebraic function (cfr.\@ Definition~\ref{def:semi-algebraic-function}).

Clearly, indicator functions of subsets that can be described by polynomials are semi-algebraic functions. This includes the Stiefel and Oblique manifolds as well as the set of \(k \times k\) diagonal matrices. Elementary functions such as absolute values, squares, multiplication by fixed scalars, as well as products of variables can easily seen to be semi-algebraic by Definition~\ref{def:semi-algebraic-set}.

As an example, it holds for the graph of the absolute value that
\begin{equation}
\{(x, |x|) : x \in \R\} = 
\{(t, t) : t > 0\} \cup \{(t, -t) : t < 0\} \cup \{(0, 0)\}.
\end{equation}
These sets can be described as follows
\begin{align}
\{(t, t) : t > 0\} &= 
\{\vect{x} \in \R^2 :
\vect{x}^{(1)} - \vect{x}^{(2)} = 0, -\vect{x}^{(1)} < 0\} \\
\{(t, -t) : t < 0\} &= 
\{\vect{x} \in \R^2 :
\vect{x}^{(1)} + \vect{x}^{(2)} = 0, \vect{x}^{(1)} < 0\} \\
\{(0, 0)\} &= 
\{\vect{x} \in \R^2 : \vect{x}^{(1)} = 0\} \cap 
\{\vect{x} \in \R^2 : \vect{x}^{(2)} = 0\}
\end{align}
which shows that the absolute value is a semi-algebraic function. 

To show that sums and compositions of semi-algebraic functions are semi-algebraic, the projection stability of semi-algebraic sets has to be used. For compositions, let \(f: \R^l \rightarrow \R^m\) and \(g: \R^n \rightarrow \R^l\) be two semi-algebraic functions, then
\begin{equation}
G_{f \circ g} =
\{(\vect{x}, \vect{y}) \in \R^{n + m}: \vect{y} = f(g(\vect{x}))\}
\end{equation}
can be written as the projection on the first \(n + m\) components of
\begin{equation}
\begin{aligned}
&\{(\vect{x}, \vect{y}, \vect{z}) \in \R^{n + m + l}: 
\vect{z} = g(\vect{x}), \vect{y} = f(\vect{z})\} \\
=\ &\{(\vect{x}, \vect{y}, \vect{z}) \in \R^{n + m + l}: 
\vect{z} = g(\vect{x})\} \cap
\{(\vect{x}, \vect{y}, \vect{z}) \in \R^{n + m + l}: 
\vect{y} = f(\vect{z})\}
\end{aligned}
\end{equation}
where the latter two sets are trivial embeddings of the graphs of \(f\) and \(g\) into higher-dimensional spaces. Compositions of semi-algebraic functions are therefore semi-algebraic.

Similarly, for sums of semi-algebraic functions, let \(f: \R^n \rightarrow \R^m\) and \(g: \R^n \rightarrow \R^m\) be two semi-algebraic functions, then
\begin{equation}
G_{f + g} =
\{(\vect{x}, \vect{y}) \in \R^{n + m}: \vect{y} = f(\vect{x}) + g(\vect{x})\}
\end{equation}
can be written as the projection on the first \(n + m\) components of
\begin{equation}
\begin{aligned}
&\{(\vect{x}, \vect{y}, \vect{w}, \vect{z}) \in \R^{n + 3m}: 
\vect{y} = \vect{w} + \vect{z},
\vect{w} = g(\vect{x}),
\vect{z} = f(\vect{x})\} \\
=\ &\{(\vect{x}, \vect{y}, \vect{w}, \vect{z}) \in \R^{n + 3m}:
\vect{y} - \vect{w} - \vect{z} = 0\} \\
\cap\ &\{(\vect{x}, \vect{y}, \vect{w}, \vect{z}) \in \R^{n + 3m}:
\vect{w} = g(\vect{x})\} \\
\cap\ &\{(\vect{x}, \vect{y}, \vect{w}, \vect{z}) \in \R^{n + 3m}:
\vect{z} = f(\vect{x})\}
\end{aligned}
\end{equation}
where the last two sets are trivial embeddings of the graphs of \(f\) and \(g\) into higher-dimensional spaces, and the first set is a polynomially constrained subset of \(\R^{n + 3m}\). Sums of semi-algebraic functions are therefore semi-algebraic. Subtractions can now be seen to be semi-algebraic as a composition of multiplication by -1, a fixed scalar, of one function and addition.

In summary, all elementary functions involved in the augmented Lagrangian in Eq.\@~\eqref{eq:m-admm:admm-augmented-lagrangian} are semi-algebraic and so is their composition and sums of terms. Therefore, the augmented Lagrangian is a semi-algebraic function and a K-\L\ function.

\newpage
\section{Additional results}

\begin{figure}[ht]
\begin{subfigure}[t]{1.0\textwidth}
\centering
\includegraphics{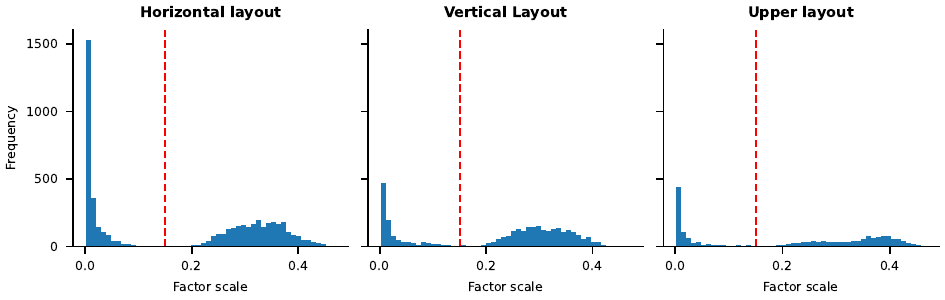}
\caption{MOFA}
\end{subfigure}
\begin{subfigure}[t]{1.0\textwidth}
\centering
\includegraphics{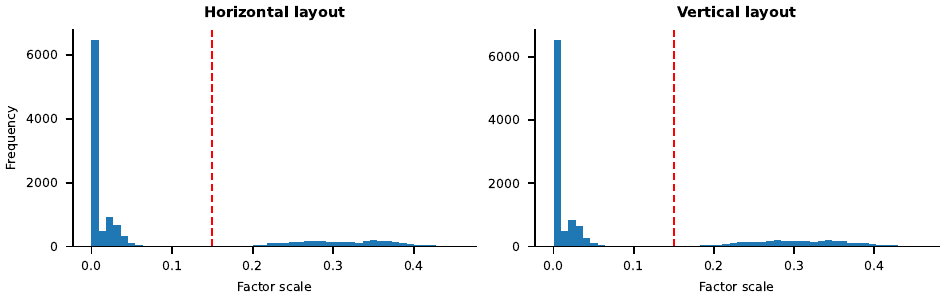}
\caption{CMF}
\end{subfigure}
\caption{Evaluation of thresholds for MOFA and CMF to determine which factors are active within a signal matrix. Factor scales across all factors, matrices, and simulation runs were plotted as a histogram. Red dashed lines show the location of the chosen threshold. For both methods and in all layout scenarios it was chosen to be equal to 0.15.}
\label{fig:mofa-cmf-thresholds}
\end{figure}

\begin{table}[h]
\centering
\begin{tabular}{clcccccc}
\toprule
Matrix & Method & \multicolumn{6}{c}{Estimated Rank} \\
\cmidrule{3-8}
& & Mean (Std. Dev.) & \multicolumn{5}{c}{Quantiles} \\
\cmidrule{4-8}
& & & Min & 25\% & 50\% & 75\% & Max \\
\midrule
A & Groundtruth & 3 & \multicolumn{5}{c}{---} \\
 & solrCMF & 2.84 (0.38) & 2 & 3 & 3 & 3 & 4 \\
 & CMF & 3.01 (0.41) & 2 & 3 & 3 & 3 & 4 \\
 & MMPCA & 2.94 (0.98) & 0 & 2 & 3 & 4 & 6 \\
 & SLIDE & 2.91 (0.52) & 2 & 3 & 3 & 3 & 4 \\
 & MOFA & 4.48 (0.79) & 3 & 4 & 5 & 5 & 6 \\
\\
B & Groundtruth & 3 & \multicolumn{5}{c}{---} \\
 & solrCMF & 2.99 (0.09) & 2 & 3 & 3 & 3 & 3 \\
 & CMF & 3.00 (0.06) & 2 & 3 & 3 & 3 & 3 \\
 & MMPCA & 3.57 (1.23) & 1 & 3 & 3 & 4 & 7 \\
 & SLIDE & 2.88 (0.34) & 1 & 3 & 3 & 3 & 3 \\
 & MOFA & 4.68 (0.69) & 3 & 4 & 5 & 5 & 6 \\
\\
A and B & Groundtruth & 2 & \multicolumn{5}{c}{---} \\
 & solrCMF & 1.83 (0.38) & 1 & 2 & 2 & 2 & 3 \\
 & CMF & 1.98 (0.39) & 1 & 2 & 2 & 2 & 3 \\
 & MMPCA & 2.37 (1.09) & 0 & 2 & 2 & 3 & 5 \\
 & SLIDE & 1.68 (0.50) & 0 & 1 & 2 & 2 & 2 \\
 & MOFA & 4.09 (0.90) & 2 & 4 & 4 & 5 & 6 \\
\bottomrule
\end{tabular}
\caption{Estimated rank in Simulation 1 restricted to matrices A, B, and factors found to be shared between A and B. Results are aggregated over all 250 simulation runs. For all methods but solrCMF the horizontal layout was considered.}
\label{tab:estimated-rank-A-B}
\end{table}

\begin{table}[h]
\centering
\begin{tabular}{clccc}
\toprule
Matrix & Method & \multicolumn{3}{c}{Factor scale \emph{Mean (Std. Dev., Occurrences)}} \\
\cmidrule{3-5}
& & Factor 1 & Factor 2 & Factor 3 \\
\midrule
A & Groundtruth & 0.39 (0.01, 250) & 0.33 (0.01, 250) & 0.28 (0.01, 250) \\
 & solrCMF & 0.39 (0.03, 250) & 0.33 (0.04, 250) & 0.24 (0.06, 208) \\
 & CMF & 0.34 (0.02, 250) & 0.29 (0.02, 250) & 0.25 (0.03, 249) \\
 & MMPCA & 0.35 (0.03, 238) & 0.29 (0.04, 226) & 0.24 (0.05, 171) \\
 & SLIDE & 0.41 (0.02, 250) & 0.37 (0.02, 250) & 0.34 (0.02, 203) \\
 & MOFA & 0.36 (0.02, 250) & 0.31 (0.02, 250) & 0.26 (0.03, 250) \\
\\
B & Groundtruth & 0.37 (0.01, 250) & 0.33 (0.01, 250) & 0.30 (0.01, 250) \\
 & solrCMF & 0.37 (0.02, 250) & 0.34 (0.02, 250) & 0.29 (0.03, 248) \\
 & CMF & 0.34 (0.02, 250) & 0.30 (0.02, 250) & 0.27 (0.02, 250) \\
 & MMPCA & 0.37 (0.02, 241) & 0.30 (0.05, 232) & 0.24 (0.08, 200) \\
 & SLIDE & 0.39 (0.01, 250) & 0.36 (0.01, 249) & 0.33 (0.02, 221) \\
 & MOFA & 0.35 (0.02, 250) & 0.32 (0.02, 250) & 0.28 (0.02, 250) \\
\bottomrule
\end{tabular}
\caption{Factor scale of the top 3 factors estimated in Simulation 1. For all methods but solrCMF the horizontal layout was considered. Factor 1 to 3 are with regard to factor scale (largest to smallest) and do not correspond to the order given in the simulation description. Mean values are taken over runs where the factor was present, therefore the number of occurrences is specified.}
\label{tab:top3-factor-scale-A-B}
\end{table}


\begin{figure}
\centering
\includegraphics{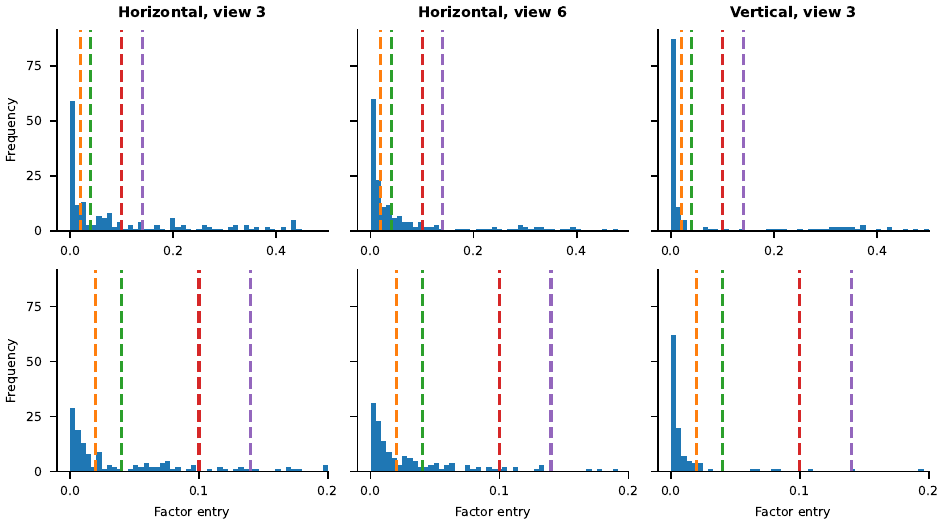}
\caption{Examples of factor entries estimated by MOFA from one representative estimation run during Simulation 1. The top and bottom row show the same information, however, the bottom row is zoomed in on the first half of the horizontal axis to exemplify small thresholds more clearly. Possible thresholds that are considered in the main manuscript are shown as dashed lines. Colors are added to clarify which thresholds in the upper and lower row are the same.}
\label{fig:factor-threshold-mofa}
\end{figure}

\clearpage
\phantomsection
\addcontentsline{toc}{section}{References}
\bibliographystyle{jasa3}
\bibliography{refs}